\newcommand{\bitset}{\{0,1\}}
\newcommand{\elll}{\ell}
\newcommand{\GapSVP}{\mbox{\rm GapSVP}}
\newcommand{\GapCVP}{\mbox{\rm GapCVP}}
\DeclareMathOperator{\polylog}{polylog}
\DeclareMathOperator{\poly}{poly}
\newcommand{\eps}{\varepsilon}
\newcommand{\bC}{\mathbb{C}}
\def\01{\{0,1\}}
\newcommand{\ceil}[1]{\lceil{#1}\rceil}
\newcommand{\R}{\mathbb{R}}
\newcommand{\Exp}{\mathbb{E}} %expectation
\newcommand{\ket}[1]{|#1\rangle}
\newcommand{\bra}[1]{\langle#1|}
\newcommand{\ketbra}[2]{|#1\rangle\langle#2|}
\newcommand{\braket}[2]{\langle#1|#2\rangle}
\newcommand{\norm}[1]{{\| #1 \|}}
\newcommand{\inp}[2]{\braket{#1}{#2}} % inproduct, < | >
\newcommand{\Tr}{\mbox{\rm Tr}}
\newcommand{\rank}{\mbox{\rm rank}}
\newcommand{\IP}{\mbox{\rm IP}}
\newcommand{\NP}{\mbox{\rm NP}}
\newcommand{\coNP}{\mbox{\rm coNP}}
\newcommand{\PostBQP}{\mbox{\rm PostBQP}}
\newcommand{\Polytime}{\mbox{\rm P}}
\newcommand{\BQP}{\mbox{\rm BQP}}
\newcommand{\QMA}{\mbox{\rm QMA}}
\newcommand{\PP}{\mbox{\rm PP}}
\newcommand{\UPP}{\mbox{\rm UPP}}
\newtheorem{theorem}{Theorem}[section]
\newtheorem{corollary}[theorem]{Corollary}
\newtheorem{lemma}[theorem]{Lemma}
\newtheorem{fact}[theorem]{Fact}
\newtheorem{claim}[theorem]{Claim}
\newtheorem{definition}[theorem]{Definition}
\newenvironment{proof}
{\noindent {\bf Proof. }}
{{\hfill $\Box$}\\
 \smallskip}
\begin{document}

\title{Quantum Proofs for Classical Theorems}
\author{Andrew Drucker%
\thanks{MIT, adrucker@mit.edu.  Supported by a DARPA YFA grant.} 
\and 
Ronald de Wolf%
\thanks{CWI Amsterdam, rdewolf@cwi.nl. Partially supported by a Vidi grant from the Netherlands Organization for Scientific Research (NWO), and by the European Commission under the Integrated Project Qubit Applications (QAP) funded by the IST directorate as Contract Number 015848.
}}

\date{}
\maketitle

\begin{abstract}
Alongside the development of quantum algorithms and quantum complexity theory in recent years, quantum techniques have also proved instrumental in obtaining results in diverse classical (non-quantum) areas, such as coding theory, 
communication complexity, and polynomial approximations.  
In this paper we survey these results and the quantum toolbox they use.
\end{abstract}

\tableofcontents

\section{Introduction}

\subsection{Surprising proof methods}
Mathematics is full of surprising proofs, and these form a large part of the beauty and fascination of the subject to its practitioners.  A feature of many such proofs is that they introduce objects or concepts from beyond the ``milieu'' in which the problem was originally posed.

As an example from high-school math, the easiest way to prove real-valued identities like 
$$
\cos(x+y)=\cos x\cos y-\sin x\sin y
$$ 
is to go to \emph{complex} numbers: using the identity $e^{ix}=\cos x+i\sin x$ we have
$$
e^{i(x+y)}=e^{ix}e^{iy}=(\cos x+i\sin x)(\cos y+i\sin y)=\cos x\cos y-\sin x\sin y + i(\cos x\sin y+\sin x\cos y)\;.
$$
Taking the real parts of the two sides gives our identity.

Another example is the \emph{probabilistic method}, associated with Paul Erd\H{o}s and excellently covered in the book of Alon and Spencer~\cite{alon&spencer:probmethod}.
The idea here is to prove the existence of an object with a specific desirable property $P$ by choosing such an object at random, 
and showing that it satisfies $P$ with positive probability.
Here is a simple example: suppose we want to prove that every undirected graph $G=(V,E)$ with $|E|=m$ edges has a cut
(a partition $V=V_1\cup V_2$ of its vertex set) with at least $m/2$ edges crossing the cut. 
\begin{quote}
\begin{proof}
Choose the cut at random, by including each vertex $i$ in $V_1$ with probability 1/2 (independently of the other vertices).
For each fixed edge $(i,j)$, the probability that it crosses is the probability that $i$ and $j$ end up in different sets,
which is exactly 1/2. Hence by linearity of expectation, the \emph{expected} number of crossing edges for our cut is exactly $m/2$.  
But then there must exist a specific cut with at least $m/2$ crossing
edges.
\end{proof}
\end{quote}
The statement of the theorem has nothing to do with probability, yet probabilistic methods allow us to give a very simple proof.
Alon and Spencer~\cite{alon&spencer:probmethod} give many other examples of this phenomenon, 
in areas ranging from graph theory and analysis to combinatorics and computer science.

Two special cases of the probabilistic method deserve mention here.
First, one can combine the language of probability with that of information theory~\cite{cover&thomas:infoth}.
For instance, if a random variable $X$ is uniformly distributed over some finite set $S$ then its Shannon entropy 
$H(X)=-\sum_x\Pr[X=x]\log\Pr[X=x]$ is exactly $\log|S|$.
Hence upper (resp.\ lower) bounds on this entropy give upper (resp.\ lower) bounds on the size of~$S$.  
Information theory offers many tools that allow us to manipulate and bound entropies in sophisticated yet intuitive ways.
The following example is due to Peter Frankl. %
In theoretical computer science one often has to bound sums of binomials coefficients like
$$
s=\sum_{i=0}^{\alpha n}\binom{n}{i}\,,
$$
say for some $\alpha\leq 1/2$.
This $s$ is exactly the size of the set $S\subseteq{\bitset}^n$ of strings of Hamming weight at most $\alpha n$.
Choose $X=(X_1,\ldots,X_n)$ uniformly at random from~$S$.  
Then, individually, each $X_i$ is a bit whose probability of being~1 is at most $\alpha$,
and hence $H(X_i)\leq H(\alpha)=-\alpha\log\alpha-(1-\alpha)\log(1-\alpha)$. 
Using the sub-additivity of entropy we obtain an essentially tight upper bound on the size of~$S$:
$$
\log s=\log|S|=H(X)\leq\sum_{i=1}^n H(X_i)\leq nH(\alpha)\,.
$$
A second, related but more algorithmic approach is the so-called ``incompressibility method,'' which reasons about the properties of randomly chosen objects and is based on the theory of Kolmogorov complexity~\cite[Chapter~6]{li&vitanyi:kolm3}.  In this method we consider ``compression schemes,'' that is, injective mappings $C$ from binary strings to other binary strings.  The basic observation is that for any $C$ and $n$, most strings of length $n$ map to strings of length nearly $n$ or more, simply because there aren't enough short descriptions to go round.  Thus, if we can design some compression scheme that represents $n$-bit objects that do \emph{not} have some desirable property $P$ with much fewer than $n$ bits, it follows that most $n$-bit strings have property $P$.

Of course one can argue that applications of the probabilistic method are all just counting arguments disguised in the language of
probability, and hence probabilistic arguments are not essential to the proof. In a narrow sense this is indeed correct.
However, viewing things probabilistically gives a rather different perspective and allows us to use sophisticated
tools to bound probabilities, such as large deviation inequalities and the Lov\'{a}sz Local Lemma, as exemplified in~\cite{alon&spencer:probmethod}.  
While such tools may be viewed as elaborate ways of doing a counting argument, the point is that we might never think 
of using them if the argument were phrased in terms of counting instead of probability.
Similarly, arguments based on information theory or incompressibility are essentially ``just'' counting arguments, 
but the information-theoretic and algorithmic perspective leads to proofs we would not easily discover otherwise.

\subsection{A quantum method?}
The purpose of this paper is to survey another family of surprising proofs
that use the language and techniques of \emph{quantum computing} to prove theorems whose statement has nothing to do with quantum computing.

Since the mid-1990s, especially since Peter Shor's 1994 quantum algorithm for factoring large integers~\cite{shor:factoring},
quantum computing has grown to become a prominent and promising area at the intersection of computer science and physics. 
Quantum computers could yield fundamental improvements in algorithms, communication protocols, and cryptography.
This promise, however, depends on physical realization, and despite the best efforts of experimental physicists we are still very far from building large-scale quantum computers.

In contrast, using the language and tools of quantum computing as a proof tool is something we can do today.
Here, quantum mechanics is purely a mathematical framework, and our proofs remain valid even if large-scale quantum computers are never built (or worse, if quantum mechanics turns out to be wrong as a description of reality).  This paper describes a number of recent results of this type.  As with the probabilistic method, these applications range over many areas, from error-correcting
codes and complexity theory to purely mathematical questions about polynomial
approximations and matrix theory.  We hesitate to say that they represent a ``quantum method,'' since the set of tools is far less developed than the probabilistic method.
However, we feel that these quantum tools will yield more surprises in the future, and have the potential to grow into a full-fledged proof method.

As we will see below, the language of quantum computing is really just a shorthand for linear algebra: 
states are vectors and operations are matrices. Accordingly, one could argue that we don't need the quantum language at all.
Indeed, one can always translate the proofs given below back to the language of linear algebra.  
What's more, there is already an extensive tradition of elegant proofs in combinatorics, geometry, and other areas, 
which employ linear algebra (often over finite fields) in surprising ways.  
For two surveys of this \emph{linear algebra method}, see the books
by Babai and Frank~\cite{babai&frankl:linalg} and Jukna~\cite[Part~III]{jukna:excom}.
However, we feel the proofs we survey here are of a different nature than those produced by the classical linear algebra method.
Just as thinking probabilistically suggests strategies that might not occur when taking the counting perspective, couching a problem in the language of quantum algorithms and quantum information gives us access to intuitions and tools that we would otherwise likely overlook or consider unnatural.
While certainly not a cure-all, for some types of problems the quantum perspective is a very useful one
and there is no reason to restrict oneself to the language of linear algebra.
\subsection{Outline}
The survey is organized as follows.
We begin in Section~\ref{sectoolbox} with a succinct introduction to the quantum model and the properties used 
in our applications.  Most of those applications can be conveniently classified in two broad categories.
First, there are applications that are close in spirit to the classical information-theory method.  They use quantum information
theory to bound the dimension of a quantum system, analogously to how classical information theory
can be used to bound the size of a set. In Section~\ref{secappqit} we give three results of this type.
Other applications use quantum algorithms as a tool to define \emph{polynomials} with desired properties.
In Section~\ref{secapppoly} we give a number of applications of this type.
Finally, there are a number of applications of quantum tools that do not fit well in the previous two categories;
some of these are classical results more indirectly ``inspired'' by earlier quantum results.
These are described in Section~\ref{secotherapps}.

\section{The quantum toolbox}\label{sectoolbox}

The goal of this survey is to show how quantum techniques can be used to analyze non-quantum questions.
Of course, this requires at least \emph{some} knowledge of quantum mechanics, which might appear discouraging to those without a physics background.
However, the amount of quantum mechanics one needs is surprisingly small and easily explained in terms of basic linear algebra.  The first thing we would like to convey is that at the basic level, quantum mechanics is not a full-fledged theory of the universe (containing claims about which objects and forces ``really exist''), but rather a \emph{framework} in which to describe physical systems and processes they undergo.  Within this framework we can posit the existence of basic units of quantum information (``qubits'') and ways of transforming them, just as classical theoretical computer science begins by positing the existence of bits and the ability to perform basic logical operations on them.  While we hope this is reassuring, it is nevertheless true that the quantum-mechanical framework has strange and novel aspects---which, of course, is what makes it worth studying in the first place.

In this section we give a bare-bones introduction to the essentials of quantum mechanics and quantum computing.
(A more general framework for quantum mechanics is given in Appendix~\ref{appgeneral}, but we will not need it for the results we describe.)  We then give some specific useful results from quantum information theory and quantum algorithms.

\subsection{The quantum model}\label{secquantummodel}

At a very general level, any physical system is associated with a \emph{Hilbert space},
and a state of that system is described by an element of that Hilbert space.
The Hilbert space corresponding to the combination of two physical systems is the \emph{tensor} product of their respective Hilbert spaces.

\paragraph{Pure states} 
For our purposes, a \emph{pure} quantum state (often just called a \emph{state}) is a unit column vector in 
a $d$-dimensional complex vector space $\bC^d$. Quantum physics typically used the \emph{Dirac notation}, 
writing a column vector $v$ as $\ket{v}$, while $\bra{v}$ denotes the row vector that is the conjugate transpose of~$v$.

The simplest nontrivial example is the case of a 2-dimensional system, called a \emph{qubit}.
We identify the two possible values of a classical bit with the two vectors in the standard orthonormal basis for this space:
$$
\ket{0}=\left(\begin{array}{r}1\\ 0\end{array}\right), \quad
\ket{1}=\left(\begin{array}{r}0\\ 1\end{array}\right).
$$
In general, the state of a qubit can be a \emph{superposition} (i.e., linear combination) of these two values:
$$
\ket{\phi} = \alpha_0\ket{0}+\alpha_1\ket{1}=\left(\begin{array}{r}\alpha_0 \\ \alpha_1\end{array}\right)\,,
$$
where the complex numbers are called \emph{amplitudes}; $\alpha_0$ is the amplitude of basis state $\ket{0}$, and $\alpha_1$ is the amplitude of $\ket{1}$.
Since a state is a unit vector, we have $|\alpha_0|^2+|\alpha_1|^2=1$.

A 2-qubit space is obtained by taking the \emph{tensor product} of two 1-qubit spaces.
This is most easily explained by giving the four basis vectors of the tensor space:
\begin{align*}
\ket{00} &=\ket{0}\otimes\ket{0}=\left(\begin{array}{r}1\\ 0\\ 0\\ 0\end{array}\right),
&\ket{01}&=\ket{0}\otimes\ket{1}=\left(\begin{array}{r}0\\ 1\\ 0\\ 0\end{array}\right),\\
\ket{10}&=\ket{1}\otimes\ket{0}=\left(\begin{array}{r}0\\ 0\\ 1\\ 0\end{array}\right),
&\ket{01}&=\ket{1}\otimes\ket{1}=\left(\begin{array}{r}0\\ 0\\ 0\\ 1\end{array}\right).
\end{align*}
These correspond to the four possible 2-bit strings.  
More generally, we can form $2^n$-dimensional spaces this way whose basis states correspond to the $2^n$ different $n$-bit strings.

We also sometimes use $d$-dimensional spaces without such a qubit-structure.
Here we usually denote the $d$ standard orthonormal basis vectors with $\ket{1},\ldots,\ket{d}$, where $\ket{i}_i=1$ and $\ket{i}_j=0$ for all $j\neq i$.
For a vector $\ket{\phi}=\sum_{i=1}^d\alpha_i\ket{i}$ in this space, $\bra{\phi}=\sum_{i=1}^d\alpha_i^*\bra{i}$ 
is the row vector that is the conjugate transpose of $\ket{\phi}$.  The Dirac notation allows us for instance 
to conveniently write the standard inner product between states $\ket{\phi}$ and $\ket{\psi}$ as $\bra{\phi}\cdot\ket{\psi}=\inp{\phi}{\psi}$.
This inner product induces the Euclidean norm (or ``length'') of vectors: $\norm{v}=\sqrt{\inp{v}{v}}$.

One can also take tensor products in this space:  
if $\ket{\phi} = \sum_{i \in [m]} \alpha_i \ket{i}$ and $\ket{\psi} = \sum_{j \in [n]} \beta_j \ket{j}$, 
then their tensor product $\ket{\phi} \otimes \ket{\psi} \in \mathbb{C}^{mn}$ is 
$$
\ket{\phi} \otimes \ket{\psi} = \sum_{i\in [m], j\in [n]} \alpha_i \beta_j \ket{i, j}\,,
$$
where $[n]$ denotes the set $\{1,\ldots,n\}$ and 
the vectors $\ket{i, j}=\ket{i}\otimes\ket{j}$ form an orthonormal basis for  $\mathbb{C}^{mn}$.  
This tensor product of states $\ket{\phi}$ and $\ket{\psi}$ is also often denoted simply as $\ket{\phi}\ket{\psi}$.  
Note that this new state is a unit vector, as it should be.  

Not every pure state in $\mathbb{C}^{mn}$ can 
be expressed as a tensor product in this way; those that cannot are called \emph{entangled}.  
The best-known entangled state is the 2-qubit \emph{EPR-pair}
$({1}/{\sqrt{2}}) \left(\ket{00}+\ket{11}\right)$, named after the authors of the paper~\cite{epr}.
When two separated parties each hold part of such an entangled state, we talk about \emph{shared} entanglement between them.

\paragraph{Transformations} 
There are two things one can do with a quantum state: transform it or measure it.  Actually, as we will see, measurements can transform the measured states as well; 
however, we reserve the word ``transformation'' to describe non-measurement change processes, which we describe next.  
Quantum mechanics allows only \emph{linear} transformations on states.
Since these linear transformations must map unit vectors to unit vectors, 
we require them to be \emph{norm-preserving} (equivalently, \emph{inner-product-preserving}).
Norm-preserving linear maps are called \emph{unitary}.
Equivalently, these are the $d\times d$ matrices $U$ whose conjugate transpose $U^*$ equals the inverse $U^{-1}$
(physicists typically write $U^\dagger$ instead of $U^*$).
For our purposes, unitary transformations are exactly the transformations that quantum mechanics allows us to apply to states.  We will frequently define transformations by giving their action on the standard basis, with the understanding that such a definition extends (uniquely) to a linear map on the entire space.

Possibly the most important 1-qubit unitary is the \emph{Hadamard transform}:
\begin{equation}\label{eqhadamard}
\frac{1}{\sqrt{2}}\left(\begin{array}{rr}1 & 1\\ 1 & -1\end{array} \right)\,.
\end{equation}
This maps basis state $\ket{0}$ to $\tfrac{1}{\sqrt{2}}(\ket{0}+\ket{1})$ and $\ket{1}$ to $\tfrac{1}{\sqrt{2}}(\ket{0}-\ket{1})$.

Two other types of unitaries deserve special mention.  First, for any
function $f: {\bitset}^n \rightarrow {\bitset}^n$, define a
transformation $U_f$ mapping the joint computational basis state
$\ket{x}\ket{y}$ (where $x, y \in {\bitset}^n$) to
$\ket{x}\ket{y\oplus f(x)}$, where ``$\oplus$'' denotes bitwise addition ($\bmod 2$) of $n$-bit vectors.  Note that $U_f$ is a permutation on the orthonormal basis states, and therefore unitary.  With such transformations we can simulate classical computations.
Next, fix a unitary transformation $U$ on a $k$-qubit system, and consider the $(k+1)$-qubit unitary transformation $V$ defined by
\begin{equation}
V (\ket{0}\ket{\psi}) = \ket{0}\ket{\psi},\quad  V (\ket{1}\ket{\psi}) = \ket{1}U\ket{\psi}\,.
\end{equation}
This $V$ is called a \emph{controlled-$U$} operation, and the first qubit is called the \emph{control qubit}.  
Intuitively, our quantum computer uses the first qubit to ``decide'' whether or not to apply $U$ to the last $k$ qubits.  

Finally, just as one can take the tensor product of quantum states in different registers, one can take the tensor product of quantum operations (more generally, of matrices) acting on two registers. If $A=(a_{ij})$ is an $m\times m'$ matrix and $B$ is an $n\times n'$ matrix, then their tensor product is the $mn\times m'n'$ matrix
$$
A\otimes B=\left(\begin{array}{rrr}a_{11}B & \cdots & a_{1m'}B\\
                            a_{21}B & \cdots & a_{2m'}B\\
                                    & \ddots & \\
                            a_{m1}B & \cdots & a_{mm'}B
\end{array}
\right).
$$
Note that the tensor product of two vectors is the special case where $m'=n'=1$, and that $A\otimes B$ is unitary if $A$ and $B$ are.  We may regard $A\otimes B$ as the simultaneous application of $A$ to the first register and $B$ to the second register.
For example, the $n$-fold tensor product $H^{\otimes n}$ denotes the unitary that applies the one-qubit Hadamard gate to each qubit of an $n$-qubit register.
This maps any basis state $\ket{x}$ to 
$$
H^{\otimes
  n}\ket{x}=\frac{1}{\sqrt{2^n}}\sum_{y\in{\bitset}^n}(-1)^{x\cdot
  y}\ket{y}
$$
(and vice versa, since $H$ happens to be its own inverse). 
Here $x\cdot y=\sum_{i=1}^n x_iy_i$ denotes the inner product of bit strings. 

\paragraph{Measurement}
Quantum mechanics is distinctive for having \emph{measurement} built-in as a fundamental notion, at least in most formulations.  
A measurement is a way to obtain information about the measured quantum system.  It takes as input a quantum state
and outputs classical data (the ``measurement outcome''), along with a new quantum state. 
It is an inherently probabilistic process that affects the state being measured.
Various types of measurements on systems are possible. 
In the simplest kind, known as \emph{measurement in the computational basis}, we measure a pure state 
$$
\ket{\phi}=\sum_{i=1}^d\alpha_i\ket{i}
$$
and see the basis state $\ket{i}$ with probability $p_i$ equal to the squared amplitude $|\alpha_i|^2$ (or more accurately, the squared \emph{modulus} of the amplitude---it is often convenient to just call this the squared amplitude).  Since the state is a unit vector these outcome probabilities sum to~1, as they should.  After the measurement, the state has changed to the observed basis state $\ket{i}$.
Note that if we apply the measurement now a second time, we will observe the same $\ket{i}$ with certainty---as if the first measurement 
forced the quantum state to ``make up its mind.''

A more general type of measurement is the \emph{projective}
measurement, also known as \emph{Von Neumann measurement}.
A projective measurement with $k$ outcomes is specified by $d\times d$ \emph{projector matrices} $P_1,\ldots,P_k$ that form an orthogonal
decomposition of the $d$-dimensional space. That is, $P_iP_j=\delta_{i,j}P_i$, and $\sum_{i=1}^k P_i = I$ is the identity operator on the whole space.
Equivalently, there exist orthonormal vectors $v_1,\ldots,v_d$ and a partition $S_1\cup\cdots\cup S_k$ of $\{1,\ldots,d\}$
such that $P_i=\sum_{j\in S_i}\ketbra{v_j}{v_j}$ for all $i\in[k]$.
With some abuse of notation we can identity $P_i$ with the subspace onto which it projects, 
and write the orthogonal decomposition of the complete space as
$$
\mathbb{C}^d=P_1\oplus P_2\oplus\cdots\oplus P_k\,.
$$
Correspondingly, we can write $\ket{\phi}$ as the sum of its components in the $k$ subspaces:
$$
\ket{\phi}=P_1\ket{\phi}+P_2\ket{\phi}+\cdots+P_k\ket{\phi}\,.
$$ 
A measurement probabilistically picks out one of these components:
the probability of outcome $i$ is $\norm{P_i\ket{\phi}}^2$, and if we got outcome $i$ then
the state changes to the new unit vector $P_i\ket{\phi}/\norm{P_i\ket{\phi}}$
(which is the component of $\ket{\phi}$ in the $i$-th subspace, renormalized). 

An important special case of projective measurements is \emph{measurement relative to the orthonormal basis $\{\ket{v_i}\}$}, where each projector $P_i$ projects onto a 1-dimensional subspace spanned by the unit vector $\ket{v_i}$.  In this case we have $k=d$ and $P_i = \ketbra{v_i}{v_i}$.  A measurement in the computational basis corresponds to the case where $P_i=\ketbra{i}{i}$.  If $\ket{\phi}=\sum_i\alpha_i\ket{i}$ then we indeed recover the squared amplitude:
$p_i=\norm{P_i\ket{\phi}}^2=|\alpha_i|^2$.

One can also apply a measurement to part of a state, for instance to the first register of a 2-register quantum system.
Formally, we just specify a $k$-outcome projective measurement for the first register, 
and then tensor each of the $k$ projectors with the identity operator on the second register 
to obtain a $k$-outcome measurement on the joint space.

Looking back at our definitions, we observe that if two quantum states $\ket{\phi}, \ket{\psi}$ satisfy $\alpha \ket{\phi}= \ket{\psi}$ for some scalar $\alpha$ (necessarily of unit norm), then for any system of projectors $\{P_i\}$, $\norm{P_i\ket{\phi}}^2 = \norm{P_i\ket{\psi}}^2$ and so measuring $\ket{\phi}$ with $\{P_i\}$ yields the same distribution as measuring $\ket{\psi}$.  More is true: if we make any sequence of transformations and measurements to the two states, the sequence of measurement outcomes we see are identically distributed.  Thus the two states are indistinguishable, and we generally regard them as \emph{the same state}.

\paragraph{Quantum-classical analogy}
For the uninitiated, these high-dimensional complex vectors and unitary transformations may seem baffling.
One helpful point of view is the analogy with classical random processes.
In the classical world, the evolution of a probabilistic automaton whose state consists of $n$ bits can be modeled as a sequence of $2^n$-dimensional vectors $\pi^1, \pi^2, \ldots$.
Each $\pi^i$ is a probability distribution on ${\bitset}^n$, where $\pi^t_x$ gives the probability that the automaton is in state $x$ if measured at time $t$ ($\pi^1$ is the starting state).  
The evolution from time $t$ to $t+1$ is describable by a matrix equation $\pi^{t+1} = M_t\pi^t$, where $M_t$ is a $2^n\times 2^n$ stochastic matrix, that is, a matrix that always maps probability vectors to probability vectors.  
The final outcome of the computation is obtained by sampling from the last probability distribution.
The quantum case is similar: an $n$-qubit state is a $2^n$-dimensional vector, but now it is a vector of complex numbers
whose \emph{squares} sum to~1. A transformation corresponds to a $2^n\times 2^n$ matrix, 
but now it is a matrix that preserves the sum of squares of the entries.
Finally, a measurement in the computational basis obtains the final outcome by sampling from the distribution given by the squares of the entries of the vector.

\subsection{Quantum information and its limitations}\label{ssecqinfo}

Quantum information theory studies the quantum generalizations of familiar notions from classical information theory
such as Shannon entropy, mutual information, channel capacities, etc.
In Section~\ref{secappqit} we give several examples where quantum information theory is used to say something
about various non-quantum systems. The quantum information-theoretic results we need all have the same flavor:
they say that a low-dimensional quantum state (i.e., a small number of qubits) cannot contain too much \emph{accessible} information.  

\paragraph{Holevo's Theorem}
The mother of all such results is Holevo's theorem from 1973~\cite{holevo}, which predates the area of quantum computation by many years.
Its proper technical statement is in terms of a quantum generalization of mutual information, 
but the following consequence of it (derived by Cleve et al.~\cite{cdnt:ip}) about two communicating parties, 
suffices for our purposes.

\begin{theorem}[Holevo, CDNT]\label{thholevo}
If Alice wants to send $n$ bits of information to Bob via a quantum channel (i.e., by exchanging quantum systems),
and they do not share an entangled state, then they have to exchange at least $n$ qubits.
If they are allowed to share unlimited prior entanglement, then Alice has to send
at least $n/2$ qubits to Bob, no matter how many qubits Bob sends to Alice.
\end{theorem}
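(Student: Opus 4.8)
The plan is to reduce the statement to two short ``potential function'' arguments over the messages of the protocol, taking as given the entropy inequalities that are the content of Holevo's theorem: subadditivity and the Araki--Lieb inequality for von~Neumann entropy, and the resulting monotonicity of Holevo information under quantum operations. Throughout, let Alice's $n$-bit message be a uniformly random string $X\in\bitset^n$, let $B$ denote the register held by Bob at a given moment, and let $\rho_B$ (resp.\ $\rho_B^x$) be Bob's reduced state (resp.\ his state conditioned on $X=x$), so that the Holevo information is $I(X:B)=S(\rho_B)-\Exp_x S(\rho_B^x)$. The only external input I need is that at the end of a successful protocol Bob measures $B$ and outputs $X$; by Fano's inequality and the data-processing/monotonicity property this forces $I(X:B)\ge n-o(n)$ at the end (if one insists on exact zero-error recovery, exactly $n$).

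For the first statement (no shared entanglement) I would track the potential $\Phi:=S(\rho_B)$, the von~Neumann entropy of Bob's whole register. Without loss of generality Bob's initial ancillas are pure, so $\Phi=0$ initially. Operations local to Alice do not touch $\rho_B$, and a local unitary by Bob leaves $S(\rho_B)$ unchanged. When a message of $m$ qubits is sent in \emph{either} direction, Bob's register either gains a register $M$, so $\Delta\Phi=S(\rho_{BM})-S(\rho_B)\le S(\rho_M)\le m$ by subadditivity, or loses a register $M$, so $\Delta\Phi=S(\rho_B)-S(\rho_{BM})\le S(\rho_M)\le m$ by Araki--Lieb; in both cases $\Delta\Phi\le m$. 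Summing over all messages and using $S(\rho_B)\ge I(X:B)$ at the end, the total number of exchanged qubits is at least $I(X:B)\ge n-o(n)$.

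For the second statement (unlimited shared entanglement, counting only the qubits Alice sends to Bob) I would instead track $\chi:=I(X:B)$. Before the protocol Bob's register is his half of the pre-shared state together with ancillas, which is independent of $X$, so $\chi=0$ initially. A message Bob sends to Alice --- as well as any local operation on either side --- acts on Bob's register as a quantum channel (a partial trace, in the case of a message), so by monotonicity of Holevo information $\chi$ cannot increase. A message of $m$ qubits that Alice sends to Bob changes $\chi$ by $\Delta\chi=I(X:BM)-I(X:B)=I(X:M\mid B)=S(M\mid B)-S(M\mid XB)\le S(M)+S(M)\le 2m$, again by subadditivity and Araki--Lieb. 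Hence, writing $Q_{A\to B}$ for the total number of qubits Alice sends, $n-o(n)\le \chi_{\mathrm{final}}-\chi_{\mathrm{init}}=\sum_{\mathrm{messages}}\Delta\chi\le 2Q_{A\to B}$, so $Q_{A\to B}\ge n/2$ no matter how much Bob sends back.

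Essentially all the work sits in the three entropy inequalities just invoked, which are exactly what one proves in proving Holevo's theorem, so in a survey one simply cites them; the only genuinely delicate step is the passage from ``Bob outputs $X$ with high probability'' to ``$I(X:B)\ge n-o(n)$'' via Fano, which is where the $o(n)$ is incurred. The one conceptual point I would flag is that the two parts really do require different potentials: tracking $S(\rho_B)$ under shared entanglement gives only a \emph{total} bound of $n/2$ (which is tight, by superdense coding), so to obtain the \emph{directional} bound $Q_{A\to B}\ge n/2$ one must use $\chi=I(X:B)$ together with the fact that messages from Bob to Alice cannot increase it.
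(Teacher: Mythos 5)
The paper does not actually prove Theorem~\ref{thholevo}: it is a survey and merely cites Holevo~\cite{holevo} and Cleve et al.~\cite{cdnt:ip}, offering only informal intuition (and explicitly calling the statement ``slightly imprecise''). So there is no internal proof to compare against; what you have written is, in essence, the standard CDNT-style argument, and it is correct. Tracking $S(\rho_B)$ in the unentangled case works: $\rho_B$ is pure at the start, local unitaries on either side leave $S(\rho_B)$ fixed, and each $m$-qubit message (in either direction) changes $S(\rho_B)$ by at most $m$ via subadditivity or Araki--Lieb, so the total qubits exchanged is at least $S(\rho_B)|_{\mathrm{end}} \geq I(X:B)$. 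In the entangled case, switching to $\chi=I(X:B)$ is the right move: $\chi$ starts at $0$, it is non-increasing under Bob's local channels and under messages Bob sends away (partial trace), it is unchanged by Alice's local operations, and an $m$-qubit Alice-to-Bob message raises it by $I(X:M\mid B)=S(M\mid B)-S(M\mid XB)\leq 2S(M)\leq 2m$. Your remark that the two halves genuinely require different potentials --- because $S(\rho_B)$ is direction-blind and, with unlimited entanglement, cannot yield the one-way $n/2$ bound --- is exactly the right conceptual point, and superdense coding shows it is not a deficiency of the method but a real phenomenon. One small imprecision worth flagging: Fano gives $I(X:B)\geq(1-P_e)n-H(P_e)$, which is $n-o(n)$ only if the error probability $P_e$ vanishes with $n$; for constant error you get $(1-\eps)n-O(1)$. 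Since the theorem is avowedly stated loosely, this matches the intended level of rigor, but the $o(n)$ should not be read as automatic for bounded-error protocols.
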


This theorem is slightly imprecisely stated here, but the intuition is very clear: the first part of the theorem says that if we encode some classical
random variable $X$ in an $m$-qubit state,\footnote{Via an encoding map $x\mapsto\ket{\phi_x}$; we generally use capital letters 
like $X$ to denote random variables, lower case like $x$ to denote specific values.} then no measurement on the quantum
state can give more than $m$ bits of information about $X$. More precisely: the classical mutual information between $X$ 
and the classical measurement outcome $M$ on the $m$-qubit system, is at most $m$.
If we encoded the classical information in a $m$-\emph{bit} system instead of a $m$-qubit system
this would be a trivial statement, but the proof of Holevo's theorem is quite non-trivial.
Thus we see that a $m$-qubit state, despite somehow ``containing'' $2^m$ complex amplitudes,
is no better than $m$ classical bits for the purpose of storing information (this is in the absence of prior entanglement;
if Alice and Bob do share entanglement, then $m$ qubits are no better than $2m$ classical bits).

\paragraph{Low-dimensional encodings}
Here we provide a ``poor man's version'' of Holevo's theorem due to Nayak~\cite[Theorem~2.4.2]{nayak:qfa}, 
which has a simple proof and often suffices for applications.
Suppose we have a classical random variable $X$, uniformly distributed over $[N]=\{1,\ldots,N\}$.
Let $x\mapsto\ket{\phi_x}$ be some encoding of $[N]$, where $\ket{\phi_x}$ is a pure state in a $d$-dimensional space.
Let $P_1,\ldots,P_N$ be the measurement operators applied for decoding; these sum to the $d$-dimensional identity operator.
Then the probability of correct decoding in case $X=x$, is 
$$
p_x=\norm{P_x\ket{\phi_x}}^2\leq \Tr(P_x)\,.
$$
The sum of these success probabilities is at most
\begin{equation}\label{nayakpibound}
\sum_{x=1}^N p_x\leq \sum_{x=1}^N \Tr(P_x)=\Tr\left(\sum_{x=1}^N P_x\right)=\Tr(I)=d\,.
\end{equation}
In other words, if we are encoding one of $N$ classical values in a $d$-dimensional quantum state, 
then any measurement to decode the encoded classical value has average success probability at most $d/N$ 
(uniformly averaged over all $N$ values that we can encode).%
\footnote{For projective measurements the statement is somewhat trivial, since in a $d$-dimensional space 
one can have at most $d$ non-zero orthogonal projectors. However, the same proof works for 
the most general states and measurements that quantum mechanics allows: so-called mixed states 
(probability distributions over pure states) and POVMs 
(which are measurements where the operators $P_1,\ldots,P_k$ need not be projectors, but can be general positive semidefinite matrices summing to $I$);
see Appendix~\ref{appgeneral} for these notions.}
This is optimal.
For example, if we encode $n$ bits into $m$ qubits, we will have $N=2^n$, $d=2^m$, 
and the average success probability of decoding is at most $2^m/2^n$.

\paragraph{Random access codes}
The previous two results dealt with the situation where we encoded a classical random variable $X$ in
some quantum system, and would like to recover the original value $X$ by an appropriate
measurement on that quantum system.  However, suppose $X=X_1\ldots X_n$ is a string of $n$ bits,
uniformly distributed and encoded by a map $x\mapsto\ket{\phi_x}$, and it suffices for us if we are able to decode individual bits $X_i$ from this with some probability $p>1/2$.  More precisely, for each $i\in[n]$ there should exist 
a measurement $\{M_i,I-M_i\}$ allowing us to recover $x_i$: for each $x\in{\bitset}^n$ 
we should have $\norm{M_i\ket{\phi_x}}^2\geq p$ if $x_i=1$ and $\norm{M_i\ket{\phi_x}}^2\leq 1-p$ if $x_i=0$.
An encoding satisfying this is called a \emph{quantum random access code}, since it allows
us to choose which bit of $X$ we would like to access. Note that the measurement to recover $x_i$ 
can change the state $\ket{\phi_x}$, so generally we may not be able to decode more than one bit of $x$.

An encoding that allows us to recover an $n$-bit string requires about $n$ qubits by Holevo.
Random access codes only allow us to recover \emph{each} of the $n$ bits.
Can they be much shorter? In small cases they can be: for instance, one can encode two classical bits
into one qubit, in such a way that each of the two bits can be recovered with 
success probability $85\%$ from that qubit~\cite{ambainis:racj}.
However, Nayak~\cite{nayak:qfa}
proved that asymptotically quantum random access codes cannot be much shorter than classical
(improving upon an $m=\Omega(n/\log n)$ lower bound from~\cite{ambainis:racj}).

\begin{theorem}[Nayak]\label{thrac}
Let $x\mapsto\ket{\phi_x}$ be a quantum random access encoding of $n$-bit strings into $m$-qubit states such that, for each $i \in [n]$, we can decode $X_i$ from $\ket{\phi_X}$ with success probability $p$ (over a uniform choice of $X$ and the measurement randomness).
Then $m\geq (1-H(p))n$, where $H(p)=-p\log p - (1-p)\log(1-p)$ is the binary entropy function. 
\end{theorem}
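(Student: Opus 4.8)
The natural route is through quantum information theory, using the slogan that an $m$-qubit state carries at most $m$ bits of Holevo information, while the random-access property forces it to carry at least $1-H(p)$ bits about each of the $n$ coordinates of $X$. First I would set up the classical--quantum state
$$\rho_{XB}=\frac{1}{2^n}\sum_{x\in\bitset^n}\ketbra{x}{x}\otimes\ketbra{\phi_x}{\phi_x},$$
whose $B$-marginal $\rho_B=\frac{1}{2^n}\sum_x\ketbra{\phi_x}{\phi_x}$ is a state on $m$ qubits and hence has von Neumann entropy $S(\rho_B)\le m$. Because each $\ket{\phi_x}$ is pure, the Holevo quantity of this ensemble --- equivalently the quantum mutual information of the cq-state --- is $I(X:B)=S(\rho_B)-\frac{1}{2^n}\sum_x S(\ketbra{\phi_x}{\phi_x})=S(\rho_B)\le m$.

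The crux is to split this single budget of $m$ bits evenly among the coordinates via the chain rule for quantum mutual information,
$$I(X_1\cdots X_n:B)=\sum_{i=1}^n I(X_i:B\mid X_1\cdots X_{i-1}),$$
which is an exact identity (expand each term using $I(A:C)=S(A)+S(C)-S(AC)$ and telescope). So it suffices to prove $I(X_i:B\mid X_{<i})\ge 1-H(p)$ for every $i$. Fix $i$. Since $X$ is uniform, $S(X_i\mid X_{<i})=1$. For the other term I would invoke the hypothesis: there is a two-outcome measurement $\{M_i,I-M_i\}$ whose outcome $\hat X_i$, obtained from $\ket{\phi_X}$, equals $X_i$ with probability $p$ averaged over the uniform $X$ and the measurement randomness. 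By the data-processing inequality (the information-theoretic content of Holevo's theorem, Theorem~\ref{thholevo}), coarsening $B$ into the classical bit $\hat X_i$ cannot decrease the conditional entropy of $X_i$, so $S(X_i\mid BX_{<i})\le S(X_i\mid \hat X_i X_{<i})\le H(X_i\mid\hat X_i)$, and Fano's inequality for a uniform bit predicted with error probability $1-p$ gives $H(X_i\mid\hat X_i)\le H(1-p)=H(p)$. Hence $I(X_i:B\mid X_{<i})=S(X_i\mid X_{<i})-S(X_i\mid BX_{<i})\ge 1-H(p)$, and summing over $i$ yields $m\ge I(X:B)\ge(1-H(p))n$.

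I expect the main obstacle to be bookkeeping rather than a deep new idea: one must handle conditional von Neumann entropies of cq-states carefully (that $S(X_i\mid BX_{<i})\in[0,1]$, that adding the classical register $X_{<i}$ to the conditioning only lowers entropy, and that the data-processing step is applied for each fixed value of $X_{<i}$ and then averaged), and dispose of the degenerate case $p<1/2$ by flipping the guess and using $H(p)=H(1-p)$. The one genuinely non-obvious point is the appeal to the chain rule: although the decoding measurements for different coordinates disturb the state and so cannot all be run on the same copy, the chain rule reasons only about the fixed undisturbed state $\rho_B$, so we are free to ``spend'' it afresh on each coordinate. (The argument goes through unchanged when the $\ket{\phi_x}$ are allowed to be mixed states, via $I(X:B)\le S(\rho_B)\le m$, matching the generality noted in the footnote to Nayak's dimension bound.)
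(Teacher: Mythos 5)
Your argument is correct and is essentially the standard formalization of the intuition the paper sketches (and defers to references for): bound $I(X:B)\le S(\rho_B)\le m$, split it over coordinates via the chain rule, and use data processing plus Fano's inequality to show each conditional term $I(X_i:B\mid X_{<i})$ is at least $1-H(p)$. The bookkeeping you flag — that the $X_i$ are independent so $S(X_i\mid X_{<i})=1$, that data processing is applied per value of $X_{<i}$ and averaged, and that $H(p)=H(1-p)$ handles $p<1/2$ — is exactly what the paper's ``technical details'' pointer covers.
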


In fact the success probabilities need not be the same for all $X_i$; if we can decode each $X_i$ with success probability $p_i\geq 1/2$, 
then the lower bound on the number of qubits is $m\geq \sum_{i=1}^n(1-H(p_i))$.
The intuition of the proof is quite simple: since the quantum state allows us to predict the bit $X_i$ with
probability $p_i$, it reduces the ``uncertainty'' about $X_i$ from 1~bit to $H(p_i)$~bits.
Hence it contains at least $1-H(p_i)$ bits of information about $X_i$. Since all $n$ $X_i$'s
are independent, the state has to contain at least $\sum_{i=1}^n(1-H(p_i))$ bits about $X$ in total.
For more technical details see~\cite{nayak:qfa} or Appendix~B of~\cite{kerenidis&wolf:qldcj}.
The lower bound on $m$ can be achieved up to an additive $O(\log n)$ term, even by classical probabilistic encodings.

\subsection{Quantum query algorithms}\label{ssecqueryalgos}

Different models for quantum algorithms exist.
Most relevant for our purposes are the \emph{quantum query algorithms},
which may be viewed as the quantum version of classical decision trees.
We will give a basic introduction here, referring to~\cite{buhrman&wolf:dectreesurvey} for more details.
The model and results of this section will not be needed until Section~\ref{secapppoly}, 
and the reader might want to defer reading this until they get there.

\paragraph{The query model}
In this model, the goal is to compute some function $f:A^n\rightarrow B$ on a given input $x\in A^n$.
The simplest and most common case is $A=B={\bitset}$.
The distinguishing feature of the query model is the way $x$ is accessed:
$x$ is not given explicitly, but is stored in a random access memory, and we are being charged unit
cost for each \emph{query} that we make to this memory.  
Informally, a query asks for and receives the $i$-th element $x_i$ of the input.
Formally, we model a query unitarily as the following 2-register quantum operation $O_x$, 
where the first register is $n$-dimensional and the second is $|A|$-dimensional:
$$
O_x:\ket{i,b}\mapsto\ket{i,b+x_i}\,,
$$
where for simplicity we identify $A$ with the additive group $\mathbb{Z}_{|A|}$, i.e., addition is modulo $|A|$.
In particular, $\ket{i,0}\mapsto\ket{i,x_i}$.
This only states what $O_x$ does on basis states, but by linearity determines the full unitary.
Note that a quantum algorithm can apply $O_x$ to a superposition of basis states; this gives us a kind of simultaneous access to multiple input variables $x_i$.

A $T$-query quantum algorithm starts in a fixed state, say the all-0 state $\ket{0\ldots 0}$, and then interleaves
fixed unitary transformations $U_0,U_1,\ldots,U_T$ with queries.
It is possible that the algorithm's fixed unitaries act on a workspace-register, in addition to the two registers on which $O_x$ acts.
In this case we implicitly extend $O_x$ by tensoring it with the identity operation on this extra register.
Hence the final state of the algorithm can be written as the following matrix-vector product:
$$
U_TO_xU_{T-1}O_x\cdots O_xU_1O_xU_0\ket{0\ldots 0}\,.
$$
This state depends on the input $x$ only via the $T$ queries.
The output of the algorithm is obtained by a measurement of the final state.
For instance, if the output is Boolean, the algorithm could just measure the final state 
in the computational basis and output the first bit of the result.

The \emph{query complexity} of some function $f$ is now defined to be the minimal number of queries needed 
for an algorithm that outputs the correct value $f(x)$ for every $x$ in the domain of $f$
(with error probability at most some fixed value $\eps$).
We just count queries to measure the complexity of the algorithm, while the intermediate fixed unitaries are treated as costless.
In many cases, including all the ones in this paper, the overall computation time of quantum query algorithms 
(as measured by the total number of elementary gates, say) is not much bigger than the query complexity.  
This justifies analyzing the latter as a proxy for the former.

\paragraph{Examples of quantum query algorithms}
Here we list a number of quantum query algorithms that we will need in later sections.  All of these algorithms outperform the best classical algorithms for the given task.
\begin{itemize}
\item {\bf Grover's algorithm}~\cite{grover:search} searches for a ``solution'' in a given $n$-bit input $x$, i.e., an index $i$ such that $x_i=1$.
The algorithm uses $O(\sqrt{n})$ queries, and if there is at least one solution in $x$ then it finds one with probability at least 1/2.  Classical algorithms for this task, including randomized ones, require $\Omega(n)$ queries.
\item {\bf $\eps$-error search:} 
If we want to reduce the error probability in Grover's search algorithm to some small $\eps$,
then $\Theta(\sqrt{n\log(1/\eps)})$ queries are necessary and 
sufficient~\cite{bcwz:qerror}.
Note that this is more efficient than the standard amplification that repeats Grover's algorithm $O(\log(1/\eps))$ times.
\item {\bf Exact search:} 
If we know there are exactly $t$ solutions in our space (i.e., $|x|=t$), then a variant of Grover's algorithm
finds a solution with probability~1 using $O(\sqrt{n/t})$ queries~\cite{bhmt:countingj}.
\item {\bf Finding all solutions:}
If we know an upper bound $t$ on the number of solutions (i.e., $|x|\leq t$),
then we can find all of them with probability~1 using $\sum_{i=1}^t O(\sqrt{n/i})=O(\sqrt{tn})$ queries~\cite{graaf&wolf:qyao}.
\item {\bf Quantum counting:}
The algorithm $Count(x, T)$ of~\cite{bhmt:countingj} approximates the total number of solutions.
It takes as input an $x \in {\bitset}^n$, makes $T$ quantum queries to $x$, and outputs an estimate $\tilde{t} \in [0, n]$ to $t=|x|$, the Hamming weight of $x$.  For $j \geq 1$ we have the following concentration bound, implicit in~\cite{bhmt:countingj}: 
$\Pr[|\tilde{t} - t| \geq jn/T] = O(1/j)$.  For example, using $T=O(\sqrt{n})$ quantum queries we can, with high probability, 
approximate $t$ up to additive error of $O(\sqrt{n})$.
\item {\bf Search on bounded-error inputs:} 
Suppose the bits $x_1,\ldots,x_n$ are not given by a perfect oracle $O_x$, but by an imperfect one:
$$
O_x:\ket{i,b,0}\mapsto\sqrt{1-\eps_i}\,\ket{i,b\oplus
  x_i,w_i}+\sqrt{\eps_i}\,\ket{i,\overline{b\oplus x_i},w'_i}\,,
$$ 
where we know $\eps$, we know that $\eps_i\leq\eps$ for each $x$ and $i$, but we do not know the actual values of the $\eps_i$
(which may depend on $x$), or of the ``workspace'' states $\ket{w_i}$ and $\ket{w'_i}$.
We call this an \emph{$\eps$-bounded-error quantum oracle}. 
This situation arises, for instance, when each bit $x_i$ is itself computed by some bounded-error quantum algorithm.
Given the ability to apply $O_x$ as well as its inverse $O_x^{-1}$,
we can still find a solution with high probability using $O(\sqrt{n})$ queries~\cite{hmw:berrorsearch}.
If the unknown number of solutions is $t$, then we can still find one with high probability using $O(\sqrt{n/t})$ queries.
\end{itemize}

\paragraph{From quantum query algorithms to polynomials}
An \emph{$n$-variate multilinear polynomial} $p$ is a function $p:\mathbb{C}^n\rightarrow\mathbb{C}$
that can be written as
$$
p(x_1,\ldots,x_n)=\sum_{S\subseteq[n]} a_S\prod_{i\in S}x_i\,,
$$
for some complex numbers $a_S$.
The \emph{degree} of $p$ is $\deg(p)=\max\{|S| : a_S\neq 0\}$.
It is well known (and easy to show) that every function $f:{\bitset}^n\rightarrow\mathbb{C}$
has a unique representation as such a polynomial; $\deg(f)$ is defined as the degree of that polynomial.
For example, the 2-bit AND function is $p(x_1,x_2)=x_1x_2$, and the 2-bit Parity function is $p(x_1,x_2)=x_1+x_2-2x_1x_2$.
Both polynomials have degree~2.

For the purposes of this survey, the crucial property of efficient quantum query algorithms is that the
amplitudes of their final state are low-degree polynomials of $x$~\cite{fortnow&rogers:limitations,bbcmw:polynomialsj}.
More precisely:

\begin{lemma}
Consider a $T$-query algorithm with input $x\in{\bitset}^n$ acting on an $m$-qubit space.
Then its final state can be written as
$$
\sum_{z\in{\bitset}^m}\alpha_z(x)\ket{z}\,,
$$
where each $\alpha_z$ is a multilinear polynomial in $x$ of degree at most $T$.
\end{lemma}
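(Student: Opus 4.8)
The plan is to prove the statement by induction on the number of queries $T$, tracking the amplitudes of the intermediate states as polynomials in $x$. The base case $T=0$ is immediate: the algorithm's state is the fixed vector $U_0\ket{0\ldots 0}$, whose amplitudes are constants, i.e. polynomials of degree $0$. For the inductive step, suppose that after $t$ queries the state is $\sum_{z}\alpha_z(x)\ket{z}$ with each $\alpha_z$ a multilinear polynomial of degree at most $t$. Applying the fixed unitary $U_t$ (which does not depend on $x$) replaces the amplitude vector by its image under a fixed matrix; since each new amplitude is a fixed linear combination of the $\alpha_z$, the degree bound of $t$ is preserved and multilinearity is preserved. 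The crux is therefore to understand what one application of the query operator $O_x$ does to the degrees.

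The key step is the following observation about $O_x$. Write the current state (after $U_t$) as $\sum_{i,b,w}\beta_{i,b,w}(x)\ket{i,b,w}$, where the first register indexes $i\in[n]$, the second holds the ``answer bit'' $b\in\{0,1\}$, and $w$ ranges over the workspace. Applying $O_x$ sends $\ket{i,b,w}$ to $\ket{i,b\oplus x_i,w}$, so the new amplitude of $\ket{i,c,w}$ is $\beta_{i,c\oplus x_i,w}(x)$, which unwinds to $(1-x_i)\beta_{i,c,w}(x) + x_i\,\beta_{i,\overline{c},w}(x)$. Each summand is $x_i$ (or $1-x_i$) times a polynomial of degree at most $t$; after reducing $x_i^2$ to $x_i$ (legitimate since $x_i\in\{0,1\}$, which is exactly where we use the Boolean input), this is a multilinear polynomial of degree at most $t+1$. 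Hence after $t+1$ queries every amplitude is multilinear of degree at most $t+1$, completing the induction. The final unitary $U_T$ is again $x$-independent and preserves the degree bound of $T$, so the final state $\sum_{z\in\{0,1\}^m}\alpha_z(x)\ket{z}$ has each $\alpha_z$ multilinear of degree at most $T$, as claimed.

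The main thing to be careful about is the bookkeeping: one must argue that the degree increases by at most $1$ per query (not, say, doubling), and this hinges on the fact that $O_x$ is linear in the oracle bits, touching only a single variable $x_i$ within each basis-state branch $\ket{i,\cdot,\cdot}$. The multilinearity claim, as opposed to just a degree bound, relies on restricting attention to Boolean inputs so that $x_i^2=x_i$; without that one would only get that $\alpha_z$ agrees on $\{0,1\}^n$ with a multilinear polynomial of degree $\le T$, which is all that is needed anyway. I would also note explicitly that tensoring $O_x$ with the identity on the workspace register does not affect this analysis, since the identity acts trivially on the workspace index $w$. No step here is genuinely hard; the only ``obstacle'' is stating the induction hypothesis in a form general enough (amplitudes of the full $m$-qubit state, including workspace) that the inductive step goes through cleanly.
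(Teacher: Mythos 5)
Your proof is correct and takes essentially the same route as the paper's: induction on the number of queries, with the observations that interleaved fixed unitaries are $x$-independent linear maps (so preserve degree), that a single query multiplies each amplitude in the $\ket{i,\cdot,\cdot}$-branch by $x_i$ or $1-x_i$ (so raises degree by at most one), and that multilinearity can be enforced because $x_i^2=x_i$ on $\{0,1\}^n$. Your version is in fact slightly cleaner in that it writes the new amplitude of $\ket{i,c,w}$ directly as $(1-x_i)\beta_{i,c,w}+x_i\beta_{i,\overline{c},w}$ with a single workspace label $w$, avoiding the paper's $w$ vs.\ $w'$ bookkeeping.
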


\begin{proof}
The proof is by induction on $T$. The base case ($T=0$) trivially holds: the algorithm's starting state is independent of $x$,
so its amplitudes are polynomials of degree~0.

For the induction step, note that a fixed linear transformation does not increase the degree of the amplitudes 
(the new amplitudes are linear combinations of the old amplitudes), while a query to $x$ corresponds to the following map:
$$
\alpha_{i,0,w}\ket{i,0,w}+\alpha_{i,1,w'}\ket{i,1,w'}\mapsto ((1-x_i)\alpha_{i,0,w}+x_i\alpha_{i,1,w'})\ket{i,0,w}+(x_i\alpha_{i,0,w}+(1-x_i)\alpha_{i,1,w'})\ket{i,1,w'}\,,
$$
which increases the degree of the amplitudes by at most~1:
if $\alpha_{i,0,w}$ and $\alpha_{i,1,w'}$ are polynomials in $x$ of degree at most $d$, 
then the new amplitudes are polynomials of degree at most $d+1$.
Since our inputs are 0/1-valued, we can drop higher degrees and 
assume without loss of generality that the resulting polynomials are multilinear.
\end{proof}

If we measure the first qubit of the final state and output the resulting bit, then the probability
of output~1 is given by 
$$
\sum_{\substack{z\in {\bitset}^m,\\z_1=1}}|\alpha_z|^2\,,
$$
which is a real-valued polynomial of $x$ of degree at most $2T$.  
This is true more generally:

\begin{corollary}\label{cor2Tpoly}
Consider a $T$-query algorithm with input $x\in{\bitset}^n$.
Then the probability of a specific output is a multilinear polynomial in $x$ of degree at most $2T$.
\end{corollary}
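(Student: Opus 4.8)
The plan is to reduce the statement to the special case already treated just above Corollary~\ref{cor2Tpoly} — measuring a register in the computational basis — and then to observe that squaring and adding low-degree polynomials keeps the degree under control. First I would argue that, without loss of generality, ``a specific output'' is produced by a computational-basis measurement of some designated set of qubits. Any measurement allowed in the model (a projective measurement, and in fact any POVM) can be realized by appending a fixed unitary to the algorithm — acting on the work space, possibly enlarged with fresh ancilla qubits — followed by measuring some subset of the qubits in the computational basis. The appended unitary can be absorbed into the last fixed transformation $U_T$, so the resulting object is still a $T$-query algorithm in the sense of the Lemma. Hence it suffices to show: for every subset $S\subseteq{\bitset}^m$ of computational basis states, the quantity $\sum_{z\in S}|\alpha_z(x)|^2$ is a multilinear polynomial in $x$ of degree at most $2T$, where the $\alpha_z$ are the amplitudes of the final state as given by the Lemma.

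Next I would invoke the Lemma: each $\alpha_z$ is a multilinear polynomial in $x$ of degree at most $T$, say $\alpha_z(x)=\sum_{U\subseteq[n],\,|U|\le T}a_{z,U}\prod_{i\in U}x_i$. Since the input entries $x_i$ are real (they lie in $\{0,1\}$), complex conjugation only touches the coefficients, so $\overline{\alpha_z(x)}=\sum_U \overline{a_{z,U}}\prod_{i\in U}x_i$ is again a polynomial of degree at most $T$ in the real variables $x_1,\dots,x_n$. Therefore $|\alpha_z(x)|^2=\alpha_z(x)\,\overline{\alpha_z(x)}$ is a product of two polynomials of degree at most $T$, hence a real-valued polynomial of degree at most $2T$; summing over $z\in S$ preserves this bound, so the probability of the given output, which equals $\sum_{z\in S}|\alpha_z(x)|^2$, is a polynomial in $x$ of degree at most $2T$.

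Finally I would restore multilinearity exactly as in the proof of the Lemma: because the $x_i$ take values in $\{0,1\}$, we may replace every power $x_i^k$ with $k\ge 1$ by $x_i$ without changing the function on $\{0,1\}^n$, turning the degree-$\le 2T$ polynomial into a multilinear one of the same degree bound. This yields the corollary. I do not expect a genuine obstacle here; the only point that warrants a careful sentence rather than a computation is the first one — justifying that ``the probability of a specific output'' may be taken to be a computational-basis measurement of an output register, so that it is literally a sum of squared moduli of the amplitudes furnished by the Lemma.
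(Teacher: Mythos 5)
Your proof is correct and fills in, with more care than the paper bothers to, exactly the argument the authors intend: they establish the special case of measuring the first qubit in the computational basis (observing that $\sum_{z:z_1=1}|\alpha_z(x)|^2$ has degree $\leq 2T$) and then simply assert that ``this is true more generally.'' Your reduction of an arbitrary output measurement to a computational-basis measurement via a fixed unitary absorbed into $U_T$, together with the conjugate-and-multiply degree count and the multilinearization step, is precisely the content that makes that assertion rigorous; the route is the same, just spelled out.
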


This connection between quantum query algorithms and polynomials has mostly been used as 
a tool for \emph{lower bounds}~\cite{bbcmw:polynomialsj,aaronson&shi:collision,aaronson:advicecommj,ksw:dpt-siam}:
if one can show that every polynomial that approximates a function $f:{\bitset}^n\rightarrow{\bitset}$ has degree at least $d$,
then every quantum algorithm computing $f$ with small error must use at least $d/2$ queries.  
We give one example in this spirit in Section~\ref{PPvsUPP}, in which a version of the polynomial method 
yielded a breakthrough in \emph{classical} lower bounds.
However, most of the applications in this survey (in Section~\ref{secapppoly}) work in the other direction: 
they view quantum algorithms as a means for constructing polynomials with certain desirable properties.

\section{Using quantum information theory}\label{secappqit}

The results in this section all use quantum information-theoretic bounds to say something about non-quantum objects.

\subsection{Communication lower bound for inner product}\label{ssecip}

The first surprising application of quantum information theory to another area was
in \emph{communication complexity}. 
The basic scenario in this area models 2-party distributed computation:
Alice receives some $n$-bit input $x$, Bob receives some $n$-bit input $y$,
and together they want to compute some Boolean function $f(x, y)$, the value of which Bob is required to output (with high probability, in the case of bounded-error protocols).
The resource to be minimized is the amount of communication between Alice
and Bob, whence the name communication complexity.  This model was introduced
by Yao~\cite{yao:distributive}, and a good overview of (non-quantum) results and applications 
may be found in the book of Kushilevitz and Nisan~\cite{kushilevitz&nisan:cc}.
The area is interesting in its own right as a basic complexity measure for distributed computing, 
but has also found many applications as a tool for lower bounds in areas like
data structures, Turing machine complexity, etc.
The quantum generalization is quite straightforward: now Alice and Bob can communicate qubits, and possibly start
with an entangled state. See~\cite{wolf:qccsurvey} for more details and a survey of results.

One of the most studied communication complexity problems is the \emph{inner product} problem,
where the function to be computed is the inner product of $x$ and $y$ modulo $2$, 
i.e., $\IP(x,y)=\sum_{i=1}^n x_iy_i \bmod 2$. Clearly, $n$ bits
of communication suffice for any function---Alice can just send $x$. However, $\IP$ is a good example where 
one can prove that nearly $n$ bits of communication is also \emph{necessary}.
The usual proof for this result is based on the combinatorial notion of ``discrepancy,''
but below we give an alternative quantum-based proof due to Cleve et al.~\cite{cdnt:ip}.

Intuitively, it seems that unless Alice gives Bob a lot of information about $x$,
he will not be able to guess the value of $\IP(x,y)$.
However, in general it is hard to directly lower bound communication complexity by information,
since we really require Bob to produce only one bit of output.%
\footnote{Still, there are also classical techniques to turn this information-theoretic
intuition into communication complexity  lower bounds~\cite{cswy:directsum,jks:2apps,bjks:itcc,bjks:datastream,jkr:readonce,leonardos&saks:readoncej}.}
The very elegant proof of~\cite{cdnt:ip} uses quantum effects to get around this problem:
it converts a protocol (quantum or classical) that computes $\IP$ into a quantum protocol
that communicates $x$ from Alice to Bob. Holevo's theorem then easily lower bounds 
the amount of communication of the latter protocol by the length of $x$.
This goes as follows.
Suppose Alice and Bob have some protocol for $\IP$, say it uses $c$ bits of communication.
Suppose for simplicity it has no error probability.
By running the protocol, putting the answer bit $x\cdot y$ into a phase, and then reversing the protocol 
to set its workspace back to its initial value, we can implement the following unitary mapping
$$
\ket{x}\ket{y}\mapsto\ket{x}(-1)^{x\cdot y}\ket{y}\,.
$$
Note that this protocol now uses $2c$ bits of communication: $c$ going from Alice to Bob and $c$ going from Bob to Alice. 
The trick is that we can run this unitary on a \emph{superposition} of inputs, at a cost of $2c$ \emph{qubits} of communication.
Suppose Alice starts with an arbitrary $n$-bit state $\ket{x}$
and Bob starts with the uniform superposition
$\frac{1}{\sqrt{2^n}}\sum_{y\in{\bitset}^n}\ket{y}$.
If they apply the above unitary, the final state becomes
$$
\ket{x}\frac{1}{\sqrt{2^n}}\sum_{y\in{\bitset}^n}(-1)^{x\cdot y}\ket{y}\,.
$$
If Bob now applies a Hadamard transform to each of his $n$ qubits,
then he obtains the basis state $\ket{x}$, so Alice's $n$ classical
bits have been communicated to Bob.
Theorem~\ref{thholevo} now implies that Alice must have sent at least $n/2$ qubits to Bob
(even if Alice and Bob started with unlimited shared entanglement). Hence $c\geq n/2$.

With some more technical complication, the same idea gives a linear lower bound
on the communication of bounded-error protocols for $\IP$.
Nayak and Salzman~\cite{nayak&salzman:entanglementj} later obtained 
optimal bounds for quantum protocols computing $\IP$.

\subsection{Lower bounds on locally decodable codes}

The development of error-correcting codes is one of the success stories of science in the second half of the 20th century.
Such codes are eminently practical, and are widely used to protect information stored on discs, communication over channels, etc.
From a theoretical perspective, there exist codes that are nearly optimal in a number of different respects simultaneously: 
they have constant rate, can protect against a constant noise-rate, and have linear-time encoding and decoding procedures.
We refer to Trevisan's survey~\cite{trevisan:eccsurvey} for a complexity-oriented discussion of codes and their applications.

One drawback of ordinary error-correcting codes is that we cannot efficiently decode small parts of the encoded
information. If we want to learn, say, the first bit of the encoded message then we usually still need to decode the whole encoded string.
This is relevant in situations where we have encoded a very large string (say, a library of books, or a large database),
but are only interested in recovering small pieces of it at any given time.
Dividing the data into small blocks and encoding each block separately will not work: 
small chunks will be efficiently decodable but not error-correcting, since a tiny fraction of 
well-placed noise could wipe out the encoding of one chunk completely. 
There exist, however, error-correcting codes that are \emph{locally decodable}, in the sense that we can efficiently recover
individual bits of the encoded string. These are defined as follows~\cite{katz&trevisan:ldc}:

\begin{definition}
$C:{\bitset}^n\rightarrow{\bitset}^m$ is a \emph{$(q,\delta,\eps)$-locally decodable code}
(LDC) if there is a classical randomized decoding algorithm $A$ such that
\begin{enumerate}
\item $A$ makes at most $q$ queries to an $m$-bit string $y$ (non-adaptively).
\item For all $x$ and $i$, and all $y\in{\bitset}^m$ with Hamming distance
$d(C(x),y)\leq\delta m$ we have $$
\Pr[A^y(i)=x_i]\geq 1/2+\eps\,.
$$
\end{enumerate}
\end{definition}

The notation $A^y(i)$ reflects that the decoder $A$ has two different types of input.
On the one hand there is the (possibly corrupted) codeword $y$, to which the decoder has oracle access and from which it can read at most $q$
bits of its choice.  On the other hand there is the index $i$ of the bit that needs to be recovered, which is known fully to the decoder.

The main question about LDCs is the tradeoff between the codelength $m$ and the number of queries $q$ (which is a proxy for the decoding-time).
This tradeoff is still not very well understood.  
We list the best known constructions here.
On one extreme, regular error-correcting codes are $(m,\delta,1/2)$-LDCs, so one can have LDCs of linear length if one allows a linear number of queries.
Reed-Muller codes allow one to construct LDCs with $m=\poly(n)$ and $q=\polylog(n)$~\cite{cgks:pir}.
For constant $q$, the best constructions are due to Efremenko~\cite{efremenko:ldc}, improving upon Yekhanin~\cite{yekhanin:3ldcj}:
for $q=2^r$ one can get codelength roughly $2^{2^{(\log n)^{1/r}}}$, and for $q=3$ one gets roughly $2^{2^{\sqrt{\log n}}}$.
For $q=2$ there is the Hadamard code: given $x\in{\bitset}^n$, define a codeword of length $m=2^n$ by writing down
the bits $x\cdot z \bmod 2$, for all $z\in{\bitset}^n$. One can decode $x_i$ with 2 queries as follows: choose $z\in{\bitset}^n$ uniformly at random
and query the (possibly corrupted) codeword at indices $z$ and $z\oplus e_i$, where the latter denotes the string obtained from $z$ by flipping its $i$-th bit. 
Individually, each of these two indices is uniformly
distributed.  Hence for each of them, the probability that the returned bit of is corrupted is at most $\delta$. By the union bound, with probability
at least $1-2\delta$, both queries return the uncorrupted values. Adding these two bits modulo $2$ gives the correct answer:
$$
C(x)_z\oplus C(x)_{z\oplus e_i}=(x\cdot z)\oplus (x\cdot (z\oplus e_i))=x\cdot e_i=x_i\,.
$$
Thus the Hadamard code is a $(2,\delta,1/2-2\delta)$-LDC of exponential length.
Can we still do something if we can make only one query instead of two?
It turns out that 1-query LDCs do not exist once $n$ is sufficiently large~\cite{katz&trevisan:ldc}.

The only superpolynomial \emph{lower bound} known on the length of LDCs is for the case of 2 queries: 
there one needs an exponential codelength and hence the Hadamard code is essentially optimal. This was first shown
for \emph{linear} 2-query LDCs by Goldreich et al.~\cite{gkst:lowerpir} via a combinatorial argument, and
then for general LDCs by Kerenidis and de Wolf~\cite{kerenidis&wolf:qldcj} via a \emph{quantum} argument.%
\footnote{The best known lower bounds for general LDCs with $q>2$ queries are only slightly superlinear.  
Those bounds, and also the best known lower bounds for 2-server \emph{Private Information Retrieval} schemes, 
are based on similar quantum ideas~\cite{kerenidis&wolf:qldcj,wehner&wolf:improvedldc,woodruff:ldclower}.
The best known lower bound for 3-query LDCs is $m=\Omega(n^2/\log n)$~\cite{woodruff:ldclower};
for \emph{linear} 3-query LDCs, a slightly better lower bound of $m=\Omega(n^2)$ 
is known~\cite{woodruff:3querylower}.}
The easiest way to present this argument is to assume the following fact, which states a kind of
``normal form'' for the decoder.

\begin{fact}[Katz and Trevisan~\cite{katz&trevisan:ldc} + folklore]\label{ldcfact}
For every $(q,\delta,\eps)$-LDC $C:{\bitset}^n\rightarrow{\bitset}^m$, and for each $i\in[n]$, there exists a set $M_i$ of $\Omega(\delta\eps m/q^2)$ disjoint tuples,
each of at most $q$ indices from $[m]$, and a bit $a_{i,t}$ for each tuple $t\in M_i$, such that the following holds:
\begin{equation} \label{eqldcfact}
\Pr_{x\in{\bitset}^n}\left[x_i = a_{i,t}\oplus\sum_{j\in t}C(x)_j\right]\geq 1/2+\Omega(\eps/2^q)\,,
\end{equation}
where the probability is taken uniformly over $x$.
Hence to decode $x_i$ from $C(x)$, the decoder can just query the indices in a randomly chosen tuple $t$ from $M_i$,
outputting the sum of those $q$ bits and $a_{i,t}$.
\end{fact}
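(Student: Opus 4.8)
The plan is to prove Fact~\ref{ldcfact} in two stages: first reduce an arbitrary $q$-query decoder to one that outputs an XOR of its queried bits (a ``linear'' decoder), and then extract a large collection of disjoint such tuples.

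For the first stage, fix $i$ and consider the decoder $A^y(i)$. Since it is non-adaptive, its behavior on index $i$ is determined by a distribution over $q$-subsets $t \subseteq [m]$ together with, for each $t$, a Boolean predicate $g_{i,t}$ applied to the $q$ bits read. The key observation is that any Boolean function on $q$ bits, when we only care about the correlation with the target bit $x_i$, can be replaced by one of its Fourier characters without losing too much: writing $g_{i,t}$ in its Fourier expansion over $\{0,1\}^q$ and using that $x_i$ has correlation $\ge 2\eps$ (aggregated over the randomness) with the decoder's output, an averaging argument picks out a single subset $t' \subseteq t$ and a sign $a_{i,t}$ so that $a_{i,t} \oplus \bigoplus_{j \in t'} C(x)_j$ predicts $x_i$ with advantage $\Omega(\eps / 2^q)$, in expectation over the choice of $t$. (This is the source of the $2^{-q}$ loss: there are $2^q$ characters to spread the advantage over.) One then has a distribution over tuples of size $\le q$, each equipped with a correction bit, such that a random tuple predicts $x_i$ with advantage $\Omega(\eps/2^q)$.

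For the second stage, I would pass from this distribution over tuples to a large \emph{disjoint} family. Restrict attention to tuples whose individual predictive advantage is at least $\Omega(\eps/2^q)$ (a constant fraction of the probability mass lies on such tuples, by Markov). Now the decoder's correctness condition, item~2 of the LDC definition, must hold even against $\delta m$ adversarial erasures/corruptions: if fewer than, say, $\Omega(\delta\eps m / q)$ of the surviving good tuples were pairwise disjoint, then a maximal disjoint sub-family of them would touch only $O(\delta m \cdot \text{something})$ coordinates; corrupting exactly those coordinates kills every good tuple and destroys the advantage, contradicting decodability under $\delta m$ corruptions. Choosing constants correctly, a greedy/maximal-matching argument yields $M_i$ of size $\Omega(\delta\eps m/q^2)$ (one factor of $q$ from the tuple sizes, one from the advantage-thresholding), with $x_i = a_{i,t} \oplus \sum_{j \in t} C(x)_j$ holding with probability $\ge 1/2 + \Omega(\eps/2^q)$ for each individual $t \in M_i$.

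\textbf{The main obstacle} is getting the disjointness extraction and the parameter $\Omega(\delta\eps m/q^2)$ exactly right: one must carefully interleave the ``corrupt the coordinates of a small hitting set'' argument with the LDC's robustness guarantee, making sure that the corruption budget $\delta m$ is large enough to kill all non-disjoint structure but the decoder's advantage $\eps$ is still available to be contradicted. The Fourier-reduction step is conceptually routine but must be done with care so that the correction bits $a_{i,t}$ are well-defined and the surviving subset $t' \subseteq t$ is nonempty (a constant, i.e., the empty character, carries no information about $x_i$ and is discarded). I would assume throughout that $\delta, \eps$ are thought of as constants, so all the $\Omega(\cdot)$'s absorb only $q$-dependence and absolute constants.
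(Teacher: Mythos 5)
The paper explicitly omits the proof of Fact~\ref{ldcfact}, so there is no reference argument to compare against; I am judging your sketch on its own merits. Your Stage~1 (the Fourier reduction, the ``folklore'' half) is essentially fine, modulo a bookkeeping point: you should threshold on the \emph{original} predicate's advantage $\alpha_t$ rather than on the post-reduction XOR advantage $\alpha_t/2^q$, so that an $\Omega(\eps)$ fraction (not an $\Omega(\eps/2^q)$ fraction) of the decoder's query distribution is retained, with each retained tuple carrying XOR advantage $\Omega(\eps/2^q)$; thresholding on the reduced advantage costs you an extra $2^q$ in the size of $M_i$.

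Stage~2, however, has a genuine gap: the sentence ``corrupting exactly those coordinates kills every good tuple and destroys the advantage, contradicting decodability'' does not follow. The LDC's $\delta m$-robustness constrains only the \emph{original} decoder with its arbitrary $q$-bit predicates $g_{i,t}$; for such a predicate, corrupting one coordinate of its query set need not change its output at all, so a ``good'' tuple is not automatically ``killed.'' Only the reduced XOR decoders flip under a single corruption, and those are not what the LDC definition constrains. Moreover, even if every tuple that was good on $C(x)$ were neutralized, the $\geq 1/2+\eps$ success probability on the corrupted word $y$ could be carried by tuples that were \emph{bad} on $C(x)$ but happen to correlate with $x_i$ on $y$; your argument does not exclude this, so no contradiction is reached. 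The ingredient you are missing is the Katz--Trevisan \emph{smoothness} step, which is also the source of the $q^2$ and the only place the corruption budget is spent. First show that one may assume each coordinate $j\in[m]$ is queried with probability at most $q/(\delta m)$: the set $H$ of coordinates queried more often has $|H|<\delta m$ (since the query probabilities sum to at most $q$), so modify the decoder to pretend it reads a fixed bit whenever it would read a coordinate in $H$; the modified decoder's behavior on $C(x)$ equals the original's on a string within Hamming distance $\delta m$ of $C(x)$, hence its advantage is still $\geq\eps$. Now let $\mu$ denote the decoder's distribution over query sets and let $G$ be the tuples with individual advantage $\Omega(\eps)$, so $\mu(G)=\Omega(\eps)$ by averaging. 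Take a maximal disjoint subfamily $M\subseteq G$: every $Q\in G$ meets one of the $\leq q|M|$ coordinates covered by $M$, so by smoothness $\mu(G)\leq q|M|\cdot q/(\delta m)$, giving $|M|=\Omega(\eps\delta m/q^2)$. Fourier-reduce each of these disjoint tuples' predicates \emph{last}; that ordering keeps the LDC robustness pointed only at the original decoder, which is the only decoder it ever applies to.
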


\noindent
Note that the above decoder for the Hadamard code is already of this form, with $M_i=\{(z,z\oplus e_i)\}$.  
We omit the proof of Fact~\ref{ldcfact}.  It uses purely classical ideas and is not hard. 

Now suppose $C:{\bitset}^n\rightarrow{\bitset}^m$ is a $(2,\delta,\eps)$-LDC.  We want to show that the codelength $m$ must be exponentially large in $n$.
Our strategy is to show that the following $m$-dimensional quantum encoding is in fact a quantum random access code for $x$,
with some success probability $p>1/2$:
$$
x\mapsto \ket{\phi_x}=\frac{1}{\sqrt{m}} \sum_{j=1}^m (-1)^{C(x)_j}\ket{j}\,.
$$
Theorem~\ref{thrac} then implies that the number of qubits of this state (which is $\ceil{\log m}$) is at least $(1-H(p))n=\Omega(n)$, and we are done.

Suppose we want to recover $x_i$ from $\ket{\phi_x}$.
We turn each $M_i$ from Fact~\ref{ldcfact} into a measurement:
for each pair $(j,k)\in M_i$ form the projector $P_{jk}=\ketbra{j}{j}+\ketbra{k}{k}$,
and let $P_{rest}=\sum_{j\not\in\cup_{t\in M_i}t}\ketbra{j}{j}$ be the projector on the remaining indices.
These $|M_i|+1$ projectors sum to the $m$-dimensional identity matrix, so they form a valid projective measurement.
Applying this to $\ket{\phi_x}$ gives outcome $(j,k)$ with probability $2/m$ for each $(j,k)\in M_i$, and outcome ``rest'' with probability $\rho=1-\Omega(\delta\eps)$.
In the latter case we just output a fair coin flip as our guess for $x_i$.  In the former case the state has collapsed to the
following useful superposition:
$$
\frac{1}{\sqrt{2}}\left((-1)^{C(x)_j}\ket{j}+(-1)^{C(x)_k}\ket{k}\right)=\frac{(-1)^{C(x)_j}}{\sqrt{2}}\left(\ket{j}+(-1)^{C(x)_j\oplus C(x)_k}\ket{k}\right)
$$
Doing a 2-outcome measurement in the basis $({1}/{\sqrt{2}})\,(\ket{j}\pm\ket{k})$ now gives us the value $C(x)_j\oplus C(x)_k$ with probability~1.
By~\eqref{eqldcfact}, if we add the bit $a_{i,(j,k)}$ to this, we get $x_i$ with probability at least $1/2+\Omega(\eps)$.
The success probability of recovering $x_i$, averaged over all $x$, is 
$$
p\geq \frac{1}{2}\rho + \left(\frac{1}{2}+\Omega(\eps)\right)(1-\rho)=\frac{1}{2}+\Omega(\delta\eps^2)\,.
$$
Now $1-H(1/2+\eta)=\Theta(\eta^2)$ for $\eta\in[0,1/2]$, so after applying Theorem~\ref{thrac} we obtain the following:

\begin{theorem}[Kerenidis and de Wolf]
If $C:{\bitset}^n\rightarrow{\bitset}^m$ is a $(2,\delta,\eps)$-locally decodable code, then $m=2^{\Omega(\delta^2\eps^4 n)}$.
\end{theorem}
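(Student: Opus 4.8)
The plan is to build a quantum random access code for the string $x \in \bitset^n$ directly out of its codeword $C(x)$, and then let Nayak's lower bound (Theorem~\ref{thrac}) do the work of forcing $\log m$ to be linear in $n$. Concretely, I would encode $x$ by the phase state
$$
\ket{\phi_x} = \frac{1}{\sqrt{m}}\sum_{j=1}^m (-1)^{C(x)_j}\ket{j}\,,
$$
a unit vector in $\bC^m$, which uses only $\ceil{\log m}$ qubits. The whole task then reduces to exhibiting, for each coordinate $i \in [n]$, a measurement that recovers $x_i$ from $\ket{\phi_x}$ with success probability at least $\frac12 + \Omega(\delta\eps^2)$, where the probability is averaged over a uniform choice of $x$ and the measurement randomness.

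First I would invoke Fact~\ref{ldcfact} with $q=2$: for each $i$ it provides a collection $M_i$ of $\Omega(\delta\eps m)$ disjoint pairs $(j,k)$ of indices from $[m]$, together with correction bits $a_{i,(j,k)}$, such that for a uniformly random pair drawn from $M_i$ the identity $x_i = a_{i,(j,k)} \oplus C(x)_j \oplus C(x)_k$ holds with probability at least $\frac12 + \Omega(\eps)$ over uniform $x$. The point of the phase encoding is that $C(x)_j \oplus C(x)_k$ is precisely the quantity that $\ket{\phi_x}$ makes accessible: the two-dimensional component $\frac{1}{\sqrt{2}}\big((-1)^{C(x)_j}\ket{j} + (-1)^{C(x)_k}\ket{k}\big)$ of $\ket{\phi_x}$, suitably renormalized, equals up to a global phase $\frac{1}{\sqrt{2}}\big(\ket{j} + (-1)^{C(x)_j \oplus C(x)_k}\ket{k}\big)$, which is one of the two orthogonal states $\frac{1}{\sqrt{2}}(\ket{j} \pm \ket{k})$ and hence perfectly distinguishable.

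So the decoding procedure for $x_i$ runs two measurements in sequence. The first is the projective measurement with one projector $P_{(j,k)} = \ketbra{j}{j} + \ketbra{k}{k}$ for each pair in $M_i$, plus $P_{rest}$ projecting onto all indices lying in no pair of $M_i$; disjointness makes these sum to the identity. On $\ket{\phi_x}$ this returns each pair with probability $2/m$, so a pair is returned at all with probability $\Omega(\delta\eps)$ and ``rest'' with probability $\rho = 1 - \Omega(\delta\eps)$, and the returned pair is uniform over $M_i$. On ``rest'' I output a fair coin. On outcome $(j,k)$ the state has collapsed to the two-dimensional vector above, so a second measurement in the $\frac{1}{\sqrt{2}}(\ket{j} \pm \ket{k})$ basis yields $C(x)_j \oplus C(x)_k$ with certainty, and adding $a_{i,(j,k)}$ gives $x_i$ with probability $\geq \frac12 + \Omega(\eps)$ by Fact~\ref{ldcfact}. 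Averaging the two cases, $p \geq \frac12\rho + (\frac12 + \Omega(\eps))(1-\rho) = \frac12 + \Omega(\delta\eps^2)$. Finally Theorem~\ref{thrac} gives $\ceil{\log m} \geq (1 - H(p))n$, and since $1 - H(\frac12 + \eta) = \Theta(\eta^2)$ this is $\Omega(\delta^2\eps^4 n)$, i.e.\ $m = 2^{\Omega(\delta^2\eps^4 n)}$.

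I expect the one genuinely delicate point to be the bookkeeping folded into Fact~\ref{ldcfact} and the averaging: one must check that the probability of landing in a pair (rather than in ``rest'') is $\Omega(\delta\eps)$, which needs the lower bound $|M_i| = \Omega(\delta\eps m)$ on the number of disjoint tuples, and that the conditional advantage $\Omega(\eps)$ really survives the average over all $x$ — this is what multiplies the two small parameters together and ultimately yields the $\delta^2\eps^4$ in the exponent. The quantum side is essentially effortless once the phase encoding is chosen: it only uses the trivial fact that a single relative phase $\pm 1$ between two amplitudes of equal modulus can be read out exactly by a two-outcome measurement.
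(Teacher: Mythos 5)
Your proposal is correct and follows essentially the same route as the paper: the same phase-state encoding, the same invocation of Fact~\ref{ldcfact} to get $\Omega(\delta\eps m)$ disjoint pairs with advantage $\Omega(\eps)$, the same two-stage measurement (project onto pairs, then read the relative phase), the same averaging to get $p = \frac12 + \Omega(\delta\eps^2)$, and the same application of Theorem~\ref{thrac} with $1 - H(\frac12+\eta) = \Theta(\eta^2)$. Nothing to add.
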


The dependence on $\delta$ and $\eps$ in the exponent can be improved to $\delta\eps^2$~\cite{kerenidis&wolf:qldcj}.
This is still the only superpolynomial lower bound known for LDCs.
An alternative proof was found later~\cite{brw:hypercontractive}, using an extension of the Bonami-Beckner hypercontractive inequality.
However, even that proof still follows the outline of the above quantum-inspired proof, albeit in linear-algebraic language.

\subsection{Rigidity of Hadamard matrices}

In this section we describe an application of quantum information theory to matrix theory from~\cite{wolf:rigidity}.
Suppose we have some $n\times n$ matrix $M$, whose rank we want to reduce by changing a few entries.  The \emph{rigidity} of $M$ measures the minimal number of entries we need to change in order to reduce its rank to a given value $r$. This notion can be studied over any field, but we will focus 
here on $\mathbb{R}$ and $\mathbb{C}$.  Formally:
\begin{definition}
The \emph{rigidity} of a matrix $M$ is the following function:
$$
R_M(r)=\min\{d(M,\widetilde{M}) : \rank(\widetilde{M})\leq r\}\,,
$$
where $d(M,\widetilde{M})$ counts the Hamming distance, i.e., the number of coordinates where $M$ and $\widetilde{M}$ differ.
The \emph{bounded rigidity} of $M$ is defined as
$$
R_M(r,\theta)=\min\{d(M,\widetilde{M}) : \rank(\widetilde{M})\leq r, \max_{x,y}|M_{x,y}-\widetilde{M}_{x,y}|\leq\theta\}\,.
$$
\end{definition}

Roughly speaking, high rigidity means that $M$'s rank is robust:
changes in few entries will not change the rank much.
Rigidity was defined by Valiant~\cite[Section~6]{valiant:rigidity} in the 1970s
with a view to proving circuit lower bounds.  In particular,
he showed that an explicit $n\times n$ matrix $M$ with $R_M(\eps n)\geq n^{1+\delta}$
for $\eps,\delta>0$ would imply that log-depth arithmetic circuits (with linear gates) 
that compute the linear map $M:\mathbb{R}^n\rightarrow\mathbb{R}^n$ need
superlinear circuit size. This motivates trying to prove lower bounds on rigidity for specific matrices.
Clearly, $R_M(r)\geq n-r$ for every full-rank
matrix $M$, since reducing the rank by~1 requires changing at least one entry.
This bound is optimal for the identity matrix, but usually far from tight.
Valiant showed that most matrices have rigidity $(n-r)^2$, but
finding an \emph{explicit} matrix with high rigidity has been open for decades.%
\footnote{Lokam~\cite{lokam:quadratic} recently found an explicit $n\times n$ matrix with near-maximal rigidity; 
unfortunately his matrix has fairly large, irrational entries, and is not sufficiently explicit for Valiant's purposes.}
Similarly, finding explicit matrices with strong lower bounds on \emph{bounded} rigidity would 
have applications to areas like communication complexity and learning theory~\cite{lokam:rigidity,lokam:linalgsurvey}. 

A very natural and widely studied class of candidates for such a high-rigidity matrix
are the \emph{Hadamard matrices}. A Hadamard matrix is an $n\times n$ matrix $M$ with entries $+1$ and $-1$
that is orthogonal (so $M^T M=nI$).
Ignoring normalization, the $k$-fold tensor product of the matrix from~\eqref{eqhadamard} is a Hadamard matrix with $n=2^k$.
(It is a longstanding conjecture that Hadamard matrices exist if, and only if, $n$ equals~2 or a multiple of~4.)

Suppose we have a matrix $\widetilde{M}$ differing from the Hadamard matrix $M$ in
$R$ positions such that $\rank(\widetilde{M})\leq r$. The goal in proving
high rigidity is to lower-bound $R$ in terms of $n$ and $r$.
Alon~\cite{alon:rigidity} proved $R=\Omega(n^2/r^2)$.  This was later reproved
by Lokam~\cite{lokam:rigidity} using spectral methods.
Kashin and Razborov~\cite{kashin&razborov:rigidity} improved this to
$R\geq n^2/256r$.  De Wolf~\cite{wolf:rigidity} later rederived this bound using a quantum argument, with a better constant.  We present this argument next.

\paragraph{The quantum idea}
The idea is to view the rows of an $n\times n$ matrix as a quantum encoding of $[n]$.
The rows of a Hadamard matrix $M$, after normalization by a factor $1/\sqrt{n}$, 
form an orthonormal set of $n$-dimensional quantum states $\ket{M_i}$.
If Alice sends Bob $\ket{M_i}$ and Bob measures the received state with 
the projectors $P_j=\ketbra{M_j}{M_j}$, then he learns $i$ with probability~1,
since $|\braket{M_i}{M_j}|^2=\delta_{i,j}$.
Of course, nothing spectacular has been achieved by this---we just transmitted $\log n$ bits of information by sending $\log n$ qubits.

Now suppose that instead of $M$ we have some rank-$r$ $n\times n$ matrix
$\widetilde{M}$ that is ``close'' to $M$ (we are deliberately being vague about ``close'' here, 
since two different instantiations of the same idea apply to the two versions of rigidity).
Then we can still use the quantum states $\ket{\widetilde{M}_i}$
that correspond to its normalized rows.
Alice now sends the normalized $i$-th row of $\widetilde{M}$ to Bob.  Crucially,
she can do this by means of an $r$-dimensional quantum state, as follows.
Let $\ket{v_1},\ldots,\ket{v_r}$ be an orthonormal basis
for the row space of $\widetilde{M}$. In order to transmit the normalized $i$-th row
$\ket{\widetilde{M}_i}=\sum_{j=1}^r \alpha_j\ket{v_j}$, Alice sends
$\sum_{j=1}^r \alpha_j\ket{j}$ and Bob applies a unitary that maps
$\ket{j}\mapsto\ket{v_j}$ to obtain $\ket{\widetilde{M}_i}$.
He measures this with the projectors $\{P_j\}$.  Then his probability
of getting the correct outcome $i$ is
$$
p_i=|\inp{M_i}{\widetilde{M}_i}|^2\,.
$$
The ``closer'' $\widetilde{M}$ is to $M$, the higher these $p_i$'s are.
But~\eqref{nayakpibound} in Section~\ref{ssecqinfo} 
tells us that the sum of the $p_i$'s lower-bounds the dimension $r$ of the quantum system.
Accordingly, the ``closer'' $\widetilde{M}$ is to $M$, the higher its rank has to be.
This is exactly the tradeoff that rigidity tries to measure.

This quantum approach allows us to quite easily derive Kashin and Razborov's~\cite{kashin&razborov:rigidity} bound on rigidity, with a better constant.

\begin{theorem}[de Wolf, improving Kashin and Razborov]\label{thrigid}
Let $M$ be an $n\times n$ Hadamard matrix.
If $r\leq n/2$, then $R_M(r)\geq n^2/4r$.
\end{theorem}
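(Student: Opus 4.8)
The plan is to follow the quantum-encoding scheme sketched just above the statement: turn a low-rank near-approximation $\widetilde M$ of $M$ into an $r$-dimensional quantum encoding of $[n]$, and play the resulting decoding success probabilities against the dimension bound~\eqref{nayakpibound}. First I would fix a matrix $\widetilde M$ with $\rank(\widetilde M)\le r$ achieving $d(M,\widetilde M)=R:=R_M(r)$, and let $d_i$ be the number of coordinates in which the $i$-th rows of $M$ and $\widetilde M$ differ, so $\sum_{i=1}^n d_i=R$. Because $M$ is Hadamard, $MM^T=nI$, so the normalized rows $\ket{M_i}:=M_i/\sqrt n$ form an orthonormal basis of $\mathbb{C}^n$. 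The rows of $\widetilde M$ span a subspace $W$ of dimension $\le r$; Alice encodes $i$ by the unit vector $\ket{\widetilde M_i}:=\widetilde M_i/\norm{\widetilde M_i}\in W$ (any fixed state if $\widetilde M_i=0$, which forces $d_i=n$), costing only $\lceil\log r\rceil$ qubits, and Bob decodes by measuring in $\{\ket{M_j}\}_j$, recovering $i$ with probability $p_i=|\inp{M_i}{\widetilde M_i}|^2$. Since this encoding lives in the $\le r$-dimensional space $W$, \eqref{nayakpibound} gives $\sum_{i=1}^n p_i\le r$.

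The remaining work is to prove a lower bound on $\sum_i p_i$ which, combined with $\sum_i p_i\le r$, forces $R\ge n^2/(4r)$ (and $r\le n/2$ is exactly what makes this the stated inequality). To estimate a single $p_i$, write $\widetilde M_i=M_i-\Delta_i$ with $\Delta_i$ supported on the $d_i$ changed coordinates; then $\langle M_i,\widetilde M_i\rangle=n-\langle M_i,\Delta_i\rangle$ with $|\langle M_i,\Delta_i\rangle|\le\sqrt{d_i}\,\norm{\Delta_i}$ by Cauchy--Schwarz restricted to those $d_i$ coordinates, while $\norm{\widetilde M_i}\le\sqrt n+\norm{\Delta_i}$; hence $p_i$ is close to $1$ whenever both $d_i$ and the magnitude $\norm{\Delta_i}$ of the changes in row $i$ are small. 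I would then use that at most $n/2$ rows can have $d_i>2R/n$ (Markov), so that the ``bulk'' of the rows have few changes, and sum the per-row estimates (via Cauchy--Schwarz or convexity) to obtain the lower bound on $\sum_i p_i$, with the constant $1/4$ dropping out of the optimization.

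The hard part will be handling rows in which $\widetilde M$ has large entries: already a single changed entry can make $p_i\approx 0$ if that entry is taken huge, so the Cauchy--Schwarz estimate on $p_i$ is not by itself enough, and one must show this cannot happen for too many rows. This is precisely where the rank hypothesis has to be used beyond the dimension count --- for instance through its consequence $\norm{M-\widetilde M}_F^2\ge n(n-r)$ (all singular values of a Hadamard matrix equal $\sqrt n$, so Eckart--Young forces the total ``change energy'' to be large), which couples the number of changed entries in a row to their magnitudes; in the rank-$1$ case this coupling is just the combinatorial fact that the agreement sets of two different rows can overlap in at most $n/2$ columns (two rows of $M$ are orthogonal), forcing each agreement set to have size $\le 3n/4$, and the general rank-$r$ case interpolates this. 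Getting this trade-off tight enough to recover the clean constant $1/4$ is the delicate step; the quantum set-up itself is essentially immediate.
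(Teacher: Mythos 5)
You have correctly set up the quantum encoding and correctly identified the central obstruction: since $R_M(r)$ places no bound on the magnitudes of the altered entries, a single changed entry of $\widetilde M$ in row $i$ can be made enormous, forcing $\ket{\widetilde M_i}$ (after normalization) nearly onto a single coordinate vector and driving $p_i = |\inp{M_i}{\widetilde M_i}|^2$ down to about $1/n$. Your Cauchy--Schwarz estimate on a per-row basis therefore gives no useful lower bound on $\sum_i p_i$, and the Markov bound on the $d_i$'s cannot rescue it. This is a genuine gap, not a routine technicality, and your proposed fix via Eckart--Young ($\norm{M-\widetilde M}_F^2 \geq n(n-r)$) points in the wrong direction: that inequality says the total change energy is \emph{large}, which does not prevent a few rows from having wild entries, nor does it obviously control the per-row normalization factors you need. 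You flag that recovering the constant $1/4$ is ``the delicate step,'' but the issue is more basic than a constant: as stated, the estimate can fail to give any nontrivial lower bound on $\sum_i p_i$ at all.

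The paper's proof avoids the magnitude problem entirely by a preliminary classical pruning step: first it selects $a = 2r$ rows of $\widetilde M$ carrying at most $aR/n$ errors in total (by averaging), and then restricts to the $b \geq n - aR/n$ columns that are error-free across those rows. The resulting $a\times b$ block is then an honest $\pm 1$ submatrix of the Hadamard matrix $M$ itself, still of rank $\leq r$, and \emph{only then} does the quantum argument enter --- to prove Lokam's Claim that any $a\times b$ submatrix of $M$ has rank $\geq ab/n$. On that clean submatrix the overlap computation is exact ($p_i = b/n$ for every row), so there is no normalization to fight; \eqref{nayakpibound} gives $ab/n \leq r$ immediately, and the theorem follows by plugging in $a = 2r$, $b \geq n - aR/n$ and rearranging. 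In short, the paper does not quantum-encode the rows of $\widetilde M$; it quantum-encodes the rows of a zeroed-out copy of $M$ supported on an error-free block. That is the idea your proposal is missing, and it is exactly what removes the need for the magnitude/energy trade-off you are trying to engineer.
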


Note that if $r\geq n/2$ then $R_M(r)\leq n$, at least for symmetric Hadamard matrices such as $H^{\otimes k}$:
then $M$'s eigenvalues are all $\pm\sqrt{n}$, so we can reduce its rank to $n/2$ or less by
adding or subtracting the diagonal matrix $\sqrt{n}I$. Hence a superlinear lower bound on $R_M(r)$ cannot be proved for $r\geq n/2$.

\medskip

\begin{proof}
Consider a rank-$r$ matrix $\widetilde{M}$ differing from $M$ in $R=R_M(r)$ entries.
By averaging, there exists a set of $a=2r$ rows of $\widetilde{M}$ with a total number of at most
$aR/n$ errors (i.e., changes compared to $M$). Now consider the submatrix $A$ of $\widetilde{M}$ consisting
of those $a$ rows and the $b\geq n-aR/n$ columns that have no errors in those $a$ rows.
If $b=0$ then $R\geq n^2/2r$ and we are done, so we can assume $A$ is nonempty.
This $A$ is error-free, hence a submatrix of $M$ itself.
We now use the quantum idea to prove the following claim 
(originally proved by Lokam using linear algebra, see the end of this section):

\begin{claim}[Lokam]\label{claimlokam}
Every $a\times b$ submatrix $A$ of $n\times n$ Hadamard matrix $M$ has rank $r\geq ab/n$.
\end{claim}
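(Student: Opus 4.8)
The plan is to recycle the quantum encoding scheme from the paragraph ``The quantum idea,'' this time applied directly to the error-free submatrix $A$. Write the rows of $M$ in normalized form as the orthonormal states $\ket{M_i}=\tfrac{1}{\sqrt n}(M_{i,1},\dots,M_{i,n})\in\mathbb{C}^n$. Let $S\subseteq[n]$ be the set of the $a$ row-indices that occur in $A$, and let $\Pi$ be the diagonal projector onto the $b$ column-indices occurring in $A$. For $i\in S$, the normalized $i$-th row of $A$ is $\ket{A_i}=\Pi\ket{M_i}/\norm{\Pi\ket{M_i}}$, and every such $\ket{A_i}$ lies in the row space $W$ of $A$, a subspace of dimension exactly $\rank(A)$.

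First I would set up the communication game exactly as in the rigidity argument: Alice wants to send Bob an index $i\in S$, and she does so by transmitting $\ket{A_i}$. Since all the $\ket{A_i}$ live in the $\rank(A)$-dimensional space $W$, she can send this state using a $\rank(A)$-dimensional quantum message: fix an orthonormal basis $\ket{v_1},\dots,\ket{v_{\rank(A)}}$ of $W$, write $\ket{A_i}=\sum_k\alpha_k\ket{v_k}$, have Alice send $\sum_k\alpha_k\ket{k}$, and have Bob apply the unitary $\ket{k}\mapsto\ket{v_k}$ to recover $\ket{A_i}$. Bob then measures with the projectors $\{P_j=\ketbra{M_j}{M_j}\}_{j\in S}$ together with $P_{\mathrm{rest}}=I-\sum_{j\in S}P_j$; composed with the basis-change unitary, this is a valid measurement on the $\rank(A)$-dimensional message space. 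The probability that Bob outputs the correct $i$ is $p_i=\norm{P_i\ket{A_i}}^2=|\inp{M_i}{A_i}|^2$.

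Next I would do the (one-line) computation. Each of the $b$ surviving coordinates of $\ket{M_i}$ has squared modulus $1/n$, so $\norm{\Pi\ket{M_i}}^2=\bra{M_i}\Pi\ket{M_i}=b/n$; hence $\inp{M_i}{A_i}=\bra{M_i}\Pi\ket{M_i}/\norm{\Pi\ket{M_i}}=\sqrt{b/n}$ and $p_i=b/n$ for every $i\in S$. Applying the dimension bound~\eqref{nayakpibound} to this encoding of $|S|=a$ values into a $\rank(A)$-dimensional state yields $\sum_{i\in S}p_i\le\rank(A)$, i.e.\ $a\cdot(b/n)\le\rank(A)$, which is exactly the claimed bound $\rank(A)\ge ab/n$.

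The only point that needs more than a line is the justification that transmitting $\ket{A_i}$ really costs only a $\rank(A)$-dimensional state and that the induced decoding still meets the hypotheses of~\eqref{nayakpibound}: concretely, if $U$ denotes the isometry $\ket{k}\mapsto\ket{v_k}$ from $\mathbb{C}^{\rank(A)}$ onto $W$, the effective decoding operators are $\{U^\dagger P_j U\}_{j\in S}\cup\{U^\dagger P_{\mathrm{rest}}U\}$, which are positive semidefinite and sum to $U^\dagger I U=I_{\rank(A)}$, so $\sum_{i\in S}p_i\le\sum_{i\in S}\Tr(U^\dagger P_i U)\le\Tr(I_{\rank(A)})=\rank(A)$. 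This is precisely the maneuver already used in the rigidity argument, so I expect no genuinely new difficulty there; everything else is immediate.
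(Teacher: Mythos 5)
Your proposal is correct and is essentially the paper's own proof: the paper also normalizes the rows of the (zero-padded) submatrix to unit states living in the $\rank(A)$-dimensional row space, decodes with the projectors $\ketbra{M_j}{M_j}$, computes $p_i=|\inp{M_i}{\widetilde{M}_i}|^2=b/n$, and invokes~\eqref{nayakpibound} to get $a\cdot b/n\le\rank(A)$. The isometry step you spell out at the end is the same maneuver the paper establishes in the preceding ``quantum idea'' paragraph and then uses implicitly inside the claim's proof, so you have simply made that detail explicit.
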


\begin{proof}
Obtain the rank-$r$ matrix $\widetilde{M}$ from $M$ by setting all entries outside of $A$ to 0.
Consider the $a$ quantum states $\ket{\widetilde{M}_i}$ corresponding to the nonempty rows;
they have normalization factor $1/\sqrt{b}$.  Alice tries to communicate a value $i \in [a]$ to Bob by sending $\ket{\widetilde{M}_i}$.  For each such $i$,
Bob's probability of successfully decoding $i$ is $p_i=|\inp{M_i}{\widetilde{M}_i}|^2=|b/\sqrt{bn}|^2=b/n.$
The states $\ket{\widetilde{M}_i}$ are all contained in an $r$-dimensional space, so \eqref{nayakpibound} implies $\sum_{i=1}^a p_i\leq r$.  Combining both bounds concludes the proof.
\end{proof}

Hence we get
$$
r=\rank(\widetilde{M})\geq \rank(A)\geq \frac{ab}{n}\geq
\frac{a(n-aR/n)}{n}\,.
$$
Rearranging gives the theorem.
\end{proof}

Applying the quantum idea in a different way allows us to also analyze \emph{bounded} rigidity:

\begin{theorem}[Lokam, Kashin and Razborov, de Wolf]\label{thboundedrigid}
Let $M$ be an $n\times n$ Hadamard matrix and $\theta>0$.
Then 
$$
R_M(r,\theta)\geq\frac{n^2(n-r)}{2\theta n+r(\theta^2+2\theta)}\,.
$$
\end{theorem}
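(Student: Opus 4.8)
The plan is to reuse the quantum encoding behind Theorem~\ref{thrigid} and Claim~\ref{claimlokam}, but to extract information from \emph{all} rows of the perturbed matrix at once, rather than first passing to an error-free submatrix. First, fix a matrix $\widetilde{M}$ with $\rank(\widetilde{M})\le r$ and $\max_{x,y}|M_{x,y}-\widetilde{M}_{x,y}|\le\theta$ differing from $M$ in exactly $R:=R_M(r,\theta)$ entries, and write $\widetilde{M}=M+E$, so the $R$ nonzero entries of $E$ all have modulus at most $\theta$. For $i\in[n]$ let $R_i$ be the number of nonzero entries in the $i$-th row of $E$, so $\sum_{i=1}^n R_i=R$. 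Encode $i\in[n]$ by the normalized $i$-th row $\ket{\widetilde{M}_i}$ of $\widetilde{M}$; these states all lie in the row space of $\widetilde{M}$, which has dimension at most $r$, so exactly as in Claim~\ref{claimlokam} Alice can transmit $\ket{\widetilde{M}_i}$ to Bob using an $r$-dimensional quantum state, Bob measures with the projectors $P_j=\ketbra{M_j}{M_j}$ onto the normalized rows of $M$, and he recovers $i$ with probability $p_i=|\inp{M_i}{\widetilde{M}_i}|^2$. Then \eqref{nayakpibound} gives $\sum_{i=1}^n p_i\le r$.

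Next I would bound each $p_i$ from below in terms of $R_i$, using only $|M_{x,y}|=1$ and $|E_{x,y}|\le\theta$. Viewing the $i$-th rows $M_i,\widetilde{M}_i,E_i$ as vectors in $\bC^n$ and writing $\langle\cdot,\cdot\rangle$ for the standard inner product, we have $\langle M_i,\widetilde{M}_i\rangle=n+\langle M_i,E_i\rangle$ with $|\langle M_i,E_i\rangle|\le\theta R_i$, hence $|\langle M_i,\widetilde{M}_i\rangle|\ge n-\theta R_i$ and so $|\langle M_i,\widetilde{M}_i\rangle|^2\ge n^2-2\theta nR_i$ (which holds trivially when $n-\theta R_i<0$). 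Similarly $\norm{\widetilde{M}_i}^2=n+2\,\mathrm{Re}\langle M_i,E_i\rangle+\norm{E_i}^2\le n+(2\theta+\theta^2)R_i$. Since $p_i=|\langle M_i,\widetilde{M}_i\rangle|^2/(n\norm{\widetilde{M}_i}^2)$, this gives
$$
p_i\ \ge\ \frac{n-2\theta R_i}{\,n+(2\theta+\theta^2)R_i\,}\ =:\ g(R_i)\,,
$$
and the estimate is vacuous precisely when $n-2\theta R_i\le 0$, which is also the only regime in which $\widetilde{M}_i$ could fail to be normalizable, so nothing is lost.

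Finally I would appeal to convexity. The map $g(t)=\dfrac{n-2\theta t}{n+(2\theta+\theta^2)t}$ is convex on $[0,\infty)$ (for instance, it is a positive constant times $(n+(2\theta+\theta^2)t)^{-1}$ plus a constant), so Jensen's inequality yields $\sum_{i=1}^n g(R_i)\ge n\,g(R/n)$. Combining this with $\sum_i p_i\le r$ gives $n\,g(R/n)\le r$, that is,
$$
\frac{n^2-2\theta R}{\,n^2+(2\theta+\theta^2)R\,}\ \le\ \frac{r}{n}\,.
$$
Clearing the positive denominator and rearranging gives $n^2(n-r)\le R\bigl(2\theta n+r(\theta^2+2\theta)\bigr)$, which is exactly the claimed bound on $R_M(r,\theta)$.

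I do not expect any step to be a real obstacle, since the argument is a close variant of the proof of Theorem~\ref{thrigid}. The two points that need a little care are: (i) making the per-row estimate $p_i\ge g(R_i)$ hold uniformly over \emph{all} rows, including pathological ones carrying so many errors that $\widetilde{M}_i$ might even be the zero vector (handled by observing that the estimate is vacuous exactly in that regime); and (ii) checking that $g$ is convex rather than concave, so that Jensen converts the bound on the average per-row error into a bound on the total number of changes $R$ in the right direction. Everything else is routine algebra.
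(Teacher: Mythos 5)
Your proof is correct and is essentially identical to the paper's argument: encode $i\mapsto\ket{\widetilde{M}_i}$ in an $r$-dimensional space, lower-bound $p_i=|\inp{M_i}{\widetilde{M}_i}|^2$ by the same rational function of the number of changed entries in row $i$, observe convexity, apply Jensen against the uniform average $\sum_i R_i/n=R/n$, and combine with $\sum_i p_i\le r$. The only difference is that you spell out the edge cases (rows that might fail to normalize, and the vacuity of $g(R_i)$ when $n-2\theta R_i\le 0$) which the paper glosses over.
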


\begin{proof}
Consider a rank-$r$ matrix $\widetilde{M}$ differing from $M$ in $R=R_M(r,\theta)$ entries, with each entry $\widetilde{M}_{ij}$ differing from $M_{ij}$ by at most $\theta$.
As before, define quantum states corresponding to its rows:
$\ket{\widetilde{M}_i}=c_i\sum_{j=1}^n \widetilde{M}_{i,j}\ket{j}$,
where 
$$
c_i=1/\sqrt{\sum_j |\widetilde{M}_{i,j}|^2}
$$
is a normalizing constant.
Note that 
$$
\sum_j |\widetilde{M}_{i,j}|^2\leq (n-d(M_i,\widetilde{M}_i))+d(M_i,\widetilde{M}_i)(1+\theta)^2=n+d(M_i,\widetilde{M}_i)(\theta^2+2\theta)\,,
$$
where $d(\cdot,\cdot)$ measures Hamming distance.  Alice again sends $\ket{\widetilde{M}_i}$ to Bob to communicate the value $i \in [a]$.  Bob's success probability $p_i$ is now
$$
p_i = |\inp{M_i}{\widetilde{M}_i}|^2
    \geq \frac{c_i^2}{n}(n-\theta d(M_i,\widetilde{M}_i))^2
    \geq c_i^2(n-2\theta d(M_i,\widetilde{M_i}))
    \geq \frac{n-2\theta d(M_i,\widetilde{M}_i)}{n+d(M_i,\widetilde{M}_i)(\theta^2+2\theta)}\,.
$$
Observe that our lower bound on $p_i$ is a convex function of the Hamming distance $d(M_i,\widetilde{M}_i)$.  Also, $\Exp[d(M_i,\widetilde{M}_i)] = R/n$ over a uniform choice of $i$.  Therefore by Jensen's inequality we obtain the lower bound for the average success probability $p$ when $i$ is uniform:
$$
p\geq \frac{n-2\theta R/n}{n+R(\theta^2+2\theta)/n}\,.
$$
Now \eqref{nayakpibound} implies $p\leq r/n$.  Combining and rearranging gives the theorem.
\end{proof}

\noindent
For $\theta\geq n/r$ we obtain the second result of Kashin and Razborov~\cite{kashin&razborov:rigidity}:
$$
R_M(r,\theta)=\Omega(n^2(n-r)/r\theta^2)\,.
$$
If $\theta\leq n/r$ we get an earlier result of Lokam~\cite{lokam:rigidity}:
$$
R_M(r,\theta)=\Omega(n(n-r)/\theta)\,.
$$

\paragraph{Did we need quantum tools for this?}
Apart from Claim~\ref{claimlokam} the proof of Theorem~\ref{thrigid} is fully classical, 
and that claim itself can quite easily be proved using linear algebra, as was done originally by Lokam~\cite[Corollary~2.2]{lokam:rigidity}.
Let $\sigma_1(A),\ldots,\sigma_r(A)$ be the singular values of rank-$r$ submatrix $A$.
Since $M$ is an orthogonal matrix we have $M^T M=n I$, so all $M$'s singular values equal $\sqrt{n}$.
The matrix $A$ is a submatrix of $M$, so all $\sigma_i(A)$ are at most $\sqrt{n}$.
Using the Frobenius norm, we obtain the claim:
$$
ab=\norm{A}_F^2=\sum_{i=1}^r\sigma_i(A)^2\leq rn\,.
$$
Furthermore, after reading a first version of~\cite{wolf:rigidity}, Midrijanis~\cite{midrijanis:rigidity} 
came up with an even simpler proof of the $n^2/4r$ bound on rigidity for the special case of 
$2^k\times 2^k$ Hadamard matrices that are the $k$-fold tensor product of the $2\times 2$ Hadamard matrix.

In view of these simple non-quantum proofs, one might argue that the quantum approach is an overkill here.
However, the main point here was not to rederive more or less known bounds,
but to show how quantum tools provide a quite different perspective on the problem:
we can view a rank-$r$ approximation of the Hadamard matrix as a way
of encoding $[n]$ in an $r$-dimensional quantum system; quantum information-theoretic 
bounds such as \eqref{nayakpibound} can then be invoked
to obtain a tradeoff between the rank $r$ and the ``quality'' of the approximation.
The same idea was used to prove Theorem~\ref{thboundedrigid}, whose proof cannot be so easily de-quantized.
The hope is that this perspective may help in the future to settle some of the longstanding open problems about rigidity.

\section{Using the connection with polynomials}\label{secapppoly}

The results of this section are based on the connection explained at the end of Section~\ref{ssecqueryalgos}:
efficient quantum query algorithms give rise to low-degree polynomials.

As a warm-up, we mention a recent application of this.
A \emph{formula} is a binary tree whose internal nodes are AND and OR-gates, and each leaf is a Boolean input variable $x_i$ or its negation.
The root of the tree computes a Boolean function of the input bits in the obvious way.
The size of the formula is its number of leaves.
O'Donnell and Servedio~\cite{odonnell&servedio:newdegree} conjectured that all formulas of size $n$ have \emph{sign-degree} at most $O(\sqrt{n})$;
the sign-degree is the minimal degree among all $n$-variate polynomials that are positive if, and only if, the formula is~1.
Their conjecture implies, by known results, that the class of formulas is \emph{learnable} in the PAC model in time $2^{n^{1/2+o(1)}}$.

Building on a quantum algorithm of Farhi et al.~\cite{fgg:nandtreej} that was inspired by physical notions from scattering theory, 
Ambainis et al.~\cite{acrsz:andorj} showed that for every formula there is a quantum algorithm that computes it using $n^{1/2+o(1)}$ queries.
By Corollary~\ref{cor2Tpoly}, the acceptance probability of this algorithm 
is an approximating polynomial for the formula, of degree $n^{1/2+o(1)}$.
Hence that polynomial minus 1/2 is a sign-representing polynomial for the formula, 
proving the conjecture of O'Donnell and Servedio up to the $o(1)$ in the exponent.
Based on an improved $O(\sqrt{n}\log(n)/\log\log(n))$-query quantum
algorithm by Reichardt~\cite{reichardt:tight} and some additional analysis, Lee~\cite{lee:signdegree} subsequently
improved this general upper bound on the sign-degree of formulas to the optimal $O(\sqrt{n})$,
fully proving the conjecture (in contrast to~\cite{acrsz:andorj}, he really bounds sign-degree, not approximate degree).

\subsection{$\eps$-approximating polynomials for symmetric functions}

Our next example comes from~\cite{wolf:degreesymmf}, and deals with the minimal degree of
$\eps$-approximating polynomials for \emph{symmetric} Boolean functions.
A function $f:{\bitset}^n\rightarrow{\bitset}$ is symmetric if its value only depends on the Hamming weight
$|x|$ of its input $x\in{\bitset}^n$. Equivalently, 
$f(x)=f(\pi(x))$ for all $x\in{\bitset}^n$ and all permutations $\pi\in S_n$.
Examples are OR, AND, Parity, and Majority.

For some specified approximation error $\eps$, let $\deg_{\eps}(f)$ denote 
the minimal degree among all $n$-variate multilinear polynomials $p$ satisfying $|p(x)-f(x)|\leq\eps$ for all $x\in{\bitset}^n$.
If one is interested in constant error then one typically fixes $\eps=1/3$, 
since approximations with different constant errors can easily be converted into each other.
Paturi~\cite{paturi:degree} tightly characterized the 1/3-error approximate degree: 
if $t\in(0,n/2]$ is the smallest integer such that $f$ is constant for $|x|\in\{t,\ldots,n-t\}$, 
then $\deg_{1/3}(f)=\Theta(\sqrt{tn})$.

Motivated by an application to the inclusion-exclusion principle of probability theory,
Sherstov~\cite{sherstov:inclexcl} recently studied the dependence of the degree on the error $\eps$.
He proved the surprisingly clean result that for all $\eps\in[2^{-n},1/3]$,
$$
\deg_{\eps}(f)=\widetilde{\Theta}\left(\deg_{1/3}(f) + \sqrt{n\log(1/\eps)}\right),
$$
where the $\widetilde{\Theta}$ notation hides some logarithmic factors 
(note that the statement is false if $\eps\ll 2^{-n}$, since clearly $\deg(f)\leq n$ for all $f$.)
His upper bound on the degree is based on Chebyshev polynomials.  De Wolf~\cite{wolf:degreesymmf} tightens this upper bound on the degree:

\begin{theorem}[de Wolf, improving Sherstov]\label{thepsdeg}
For every non-constant symmetric function $f:{\bitset}^n\rightarrow{\bitset}$ and $\eps\in[2^{-n},1/3]$:
$$
\deg_{\eps}(f)=O\left(\deg_{1/3}(f) + \sqrt{n\log(1/\eps)}\right).
$$
\end{theorem}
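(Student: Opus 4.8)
The plan is to exhibit, for each $\eps\in[2^{-n},1/3]$, a bounded-error quantum query algorithm that computes $f$ on every $x\in{\bitset}^n$ with error probability at most $\eps$ while making only $T=O(\deg_{1/3}(f)+\sqrt{n\log(1/\eps)})$ queries; Corollary~\ref{cor2Tpoly} then turns its acceptance probability into a multilinear polynomial of degree $\le 2T$ that is within $\eps$ of $f$ everywhere, which is exactly the bound claimed. By Paturi's characterization, $\deg_{1/3}(f)=\Theta(\sqrt{tn})$, where $t\in(0,n/2]$ is the smallest integer such that $f$ takes one fixed value $c$ on all inputs of Hamming weight in $\{t,\ldots,n-t\}$; so it is enough to achieve query count $O(\sqrt{tn}+\sqrt{n\log(1/\eps)})$. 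I set $L=\max(t,\lceil\log(1/\eps)\rceil)$, so that $\sqrt{nL}=O(\sqrt{tn}+\sqrt{n\log(1/\eps)})$, and I may assume $L<n/2$: otherwise $\sqrt{n\log(1/\eps)}=\Omega(n)$ and the trivial exact multilinear polynomial for $f$, of degree $\le n$, already proves the claim.

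The key point is that, because $f$ is symmetric and constant on the entire middle band $\{t,\ldots,n-t\}$, I never need to estimate $|x|$ to additive precision (which quantum counting cannot do within the query budget), but only to (i) determine $|x|$ exactly when it lies within $L$ of $0$ or of $n$, and (ii) otherwise merely certify that $L<|x|<n-L$. For (i) I use a subroutine $\mathrm{CountUpTo}(L)$: first run the ``finding all solutions'' primitive of Section~\ref{ssecqueryalgos} under the hypothesis that $x$ has at most $L$ ones, obtaining in $O(\sqrt{Ln})$ queries a set $S$ of at most $L$ indices (this is correct with certainty when $|x|\le L$); then run $\eps$-error Grover search (Section~\ref{ssecqueryalgos}) for a one among the coordinates outside $S$, at cost $O(\sqrt{n\log(1/\eps)})$, verifying any claimed solution with one extra query. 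If the search finds a further one, output ``$>L$''; otherwise output $|S|$. When $|x|\le L$ the first step returns exactly the ones of $x$, the search finds nothing, and the output is $|x|$; when $|x|>L$ there are at least $|x|-|S|\ge 1$ ones outside $S$, so with probability at least $1-\eps$ the search finds one and the output is ``$>L$''. Applying the same subroutine to the bitwise complement of $x$ gives $\mathrm{CountDownFrom}(L)$, which returns $|x|$ when $|x|\ge n-L$ and ``$<n-L$'' otherwise; each subroutine errs with probability at most $\eps$ and uses $O(\sqrt{nL})$ queries.

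The algorithm then runs $\mathrm{CountUpTo}(L)$; if it returns a value $k$, it outputs $f(k)$, and otherwise it runs $\mathrm{CountDownFrom}(L)$; if that returns a value $k$, it outputs $f(k)$, and otherwise it outputs the constant $c$. With probability at least $1-2\eps$ both subroutines behave as intended, and then either $|x|$ has been learned exactly (so the output $f(|x|)$ is correct) or it has been certified that $L<|x|<n-L$, whence $t\le L<|x|<n-L\le n-t$ forces $|x|\in\{t,\ldots,n-t\}$ and $f(x)=c$. The total number of queries is $O(\sqrt{nL})=O(\deg_{1/3}(f)+\sqrt{n\log(1/\eps)})$ and the error is at most $2\eps$; rescaling $\eps$ by a constant factor (which changes $\log(1/\eps)$ only by a constant) and invoking Corollary~\ref{cor2Tpoly} completes the proof.

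The main obstacle is keeping the cost polynomial in $\log(1/\eps)$ rather than in $1/\eps$: a direct approach via quantum counting would estimate $|x|$ only to additive error $\sim n/T$ with failure probability $\sim 1/j$ at error $jn/T$, and driving that down to $\eps$ at the precision $O(t)$ needed near the band edges would cost roughly $(n/t)\log(1/\eps)$ queries---far too many when $t$ is small. The argument therefore hinges on replacing additive weight estimation by exact counting up to the threshold $L$, together with the (routine but necessary) verification step that guards against ``finding all solutions'' misbehaving when its hypothesis $|x|\le L$ fails. A minor point to handle is the interplay of $t$ and $\log(1/\eps)$ inside $L$ and the trivial regime $L\ge n/2$.
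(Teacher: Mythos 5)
Your proposal is correct and follows essentially the same approach as the paper's proof: exact/zero-error quantum search to exhaustively enumerate the ones (respectively zeros) up to a threshold of size $\Theta(t)$, an $\eps$-error Grover search to detect whether any further ones (respectively zeros) remain, and a symmetric decision rule that falls back on the constant value $c$ when both overflow checks fire, all packaged via Corollary~\ref{cor2Tpoly} into a degree-$O(\deg_{1/3}(f)+\sqrt{n\log(1/\eps)})$ polynomial. The only cosmetic differences are your threshold $L=\max(t,\lceil\log(1/\eps)\rceil)$ in place of the paper's $t-1$ (harmless, since $\sqrt{nL}$ is still within the claimed bound), and your explicit treatment of the trivial regime $L\geq n/2$.
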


By the discussion at the end of Section~\ref{ssecqueryalgos}, to prove Theorem~\ref{thepsdeg}
it suffices to give an $\eps$-error quantum algorithm for $f$ that uses $O(\deg_{1/3}(f) + \sqrt{n\log(1/\eps)})$ queries. 
The probability that the algorithm outputs~1 will be our $\eps$-error polynomial.
For example, the special case where $f$ is the $n$-bit OR function follows immediately from the $O(\sqrt{n\log(1/\eps)})$-query
search algorithm with error probability $\eps$ that was mentioned there.

Here is the algorithm for general symmetric $f$.  It uses some of the algorithms listed in Section~\ref{ssecqueryalgos} as subroutines.  Let $t = t(f)$ be as in Paturi's result.
\begin{enumerate}
\item Use $t - 1$ applications of exact Grover to try to find up to $t - 1$ distinct solutions in $x$
(remember that a ``solution'' to the search problem is an index $i$ such that $x_i=1$).
Initially we run an exact Grover assuming $|x|=t - 1$, we verify that the outcome is a solution at the expense of one more query, 
and then we ``cross it out'' to prevent finding the same solution again in subsequent searches. 
Then we run another exact Grover assuming there are $t-2$ solutions, etc.  
Overall, this costs 
$$
\sum_{i=1}^{t - 1} O(\sqrt{n/i})=O(\sqrt{tn})=O(\deg_{1/3}(f))
$$
queries. 
\item Use $\eps/2$-error Grover to try to find one more solution. 
This costs $O(\sqrt{n\log(1/\eps)})$ queries.
\item The same as step~1, but now looking for positions of 0s instead of 1s.
\item The same as step~2, but now looking for positions of 0s instead of 1s.
\item If step~2 did not find another 1, then we assume step~1 found all 1s (i.e., a complete description of $x$), and we output the corresponding value of $f$.

Else, if step~4 did not find another 0, then we assume step~3 found all 0s, and we output the corresponding value of $f$.

Otherwise, we assume $|x|\in\{t,\ldots,n-t\}$ and output the corresponding value of $f$.
\end{enumerate}
Clearly the query complexity of this algorithm is $O(\deg_{1/3}(f) + \sqrt{n\log(1/\eps)})$, so it remains to upper bound its error probability.
If $|x|<t$ then step~1 finds all 1s with certainty and step~2 will not find another 1 (since there aren't any left after step~1), so in this case
the error probability is~0.  If $|x|>n-t$ then step~2 finds a 1 with probability at least $1-\eps/2$, step~3 finds all 0s with certainty,
and step~4 does not find another 0 (again, because there are none left); hence in this case the error probability is at most $\eps/2$.  
Finally, if $|x|\in\{t,\ldots,n-t\}$ then with probability at least $1-\eps/2$ step~2 will find another 1, 
and with probability at least $1-\eps/2$ step~4 will find another 0.
Thus with probability at least $1-\eps$ we correctly conclude $|x|\in\{t,\ldots,n-t\}$ and output the correct value of $f$.
Note that the only property of $f$ used here is that $f$ is constant on $|x|\in\{t,\ldots,n-t\}$;
the algorithm still works for Boolean functions $f$ that are arbitrary (non-symmetric) when $|x|\not\in\{t,\ldots,n-t\}$,
with the same query complexity $O(\sqrt{tn} + \sqrt{n\log(1/\eps)})$.

\subsection{Robust polynomials}
In the previous section we saw how quantum query algorithms allow us to construct polynomials 
(of essentially minimal degree) that $\eps$-approximate symmetric Boolean functions.
In this section we show how to construct \emph{robust} polynomial approximations.
These are insensitive to small changes in their $n$ input variables.
Let us first define more precisely what we mean:

\begin{definition}
Let $p:\mathbb{R}^n\rightarrow\mathbb{R}$ be an $n$-variate polynomial (not necessarily multilinear).
Then $p$ \emph{$\eps$-robustly} approximates $f:{\bitset}^n\rightarrow{\bitset}$ if for every $x\in{\bitset}^n$ and every $z\in[0,1]^n$ 
satisfying $|z_i-x_i|\leq\eps$ for all $i\in[n]$, we have $p(z)\in[0,1]$ and $|p(z)-f(x)|\leq\eps$.
\end{definition}

Note that we do not restrict $p$ to be multilinear, since the inputs we care about are no longer 0/1-valued.
The \emph{degree} of $p$ is its total degree. Note that we require both the $z_i$'s and the value $p(z)$ to be in the interval $[0,1]$.
This is just a matter of convenience, because it allows us to interpret these numbers as probabilities; 
using the interval $[-\eps,1+\eps]$ instead of $[0,1]$ would give an essentially equivalent definition.

One advantage of the class of robust polynomials over the usual approximating polynomials,
is that it is closed under composition: plugging robust polynomials into a robust polynomial gives another robust polynomial.
For example, suppose a function $f:{\bitset}^{n_1n_2}\rightarrow{\bitset}$ is obtained by composing 
$f_1:{\bitset}^{n_1}\rightarrow{\bitset}$ with $n_1$ independent copies of $f_2:{\bitset}^{n_2}\rightarrow{\bitset}$
(for instance an AND-OR tree).  Then we can just compose an $\eps$-robust polynomial for $f_1$ of degree $d_1$
with an $\eps$-robust polynomial for $f_2$ of degree $d_2$, to obtain an $\eps$-robust polynomial for $f$ of degree $d_1d_2$.
The errors ``take care of themselves,'' in contrast to ordinary
approximating polynomials, which may not compose in this fashion.%
\footnote{Reichardt~\cite{reichardt:composition} showed recently that such a clean composition result also holds for 
the usual bounded-error quantum query complexity, by going back and forth between quantum algorithms and span programs (which compose cleanly).} 

How hard is it to construct robust polynomials?  
In particular, does their degree have to be much larger than the usual approximate degree?
A good example is the $n$-bit Parity function. 
If the $n$ inputs $x_1,\ldots,x_n$ are $0/1$-valued then the following polynomial represents Parity:%
\footnote{If inputs and outputs were $\pm 1$-valued, the polynomial would just be the product of the $n$ variables.}
\begin{equation}\label{eqparitypoly}
p(x)=\frac{1}{2}-\frac{1}{2}\prod_{i=1}^n (1-2x_i)\,.
\end{equation}
This polynomial has degree $n$, and it is known that any $\eps$-approximating polynomial for Parity needs degree $n$ as well.
However, it is clear that this polynomial is not robust: if each $x_i=0$ is replaced by $z_i=\eps$, then the resulting value $p(z)$ is exponentially close to 1/2 rather than $\eps$-close to the correct value 0.
One way to make it robust is to individually ``amplify'' each input variable $z_i$, such that 
if $z_i\in[0,\eps]$ then its amplified version is in, say, $[0,1/100n]$
and if $z_i\in[1-\eps,1]$ then its amplified version is in $[1-1/100n,1]$.
The following univariate polynomial of degree $k$ does the trick:
$$
a(y)=\sum_{j>k/2}\binom{k}{j}y^j(1-y)^{k-j}\,.
$$
Note that this polynomial describes the probability that $k$ coin flips, each with probability $y$ of being~1, have majority~1.
By standard Chernoff bounds, if $y\in[0,\eps]$ then $a(y)\in [0,\exp(-\Omega(k))]$ and if $y\in[1-\eps,1]$ then $a(y)\in [1-\exp(-\Omega(k)),1]$.
Taking $k=O(\log n)$ and substituting $a(z_i)$ for $x_i$ in \eqref{eqparitypoly} gives an $\eps$-robust polynomial for Parity of degree $\Theta(n\log n)$.
Is this optimal?  Since Parity crucially depends on each of its $n$ variables, and amplifying each $z_i$ to polynomially small error 
needs degree $\Theta(\log n)$, one might conjecture robust polynomials for Parity need degree $\Omega(n\log n)$.
Surprisingly, this is not the case: there exist $\eps$-robust polynomials for Parity of degree $O(n)$.
Even more surprisingly, the only way we know how to construct such robust polynomials is via the connection with quantum algorithms.
Based on the quantum search algorithm for bounded-error inputs mentioned in Section~\ref{ssecqueryalgos}, 
Buhrman et al.~\cite{bnrw:robustqj} showed the following:

\begin{theorem}[BNRW]\label{throbustalgo}
There exists a quantum algorithm that makes $O(n)$ queries to an $\eps$-bounded-error quantum oracle 
and outputs $x_1,\ldots,x_n$ with probability at least $1-\eps$.
\end{theorem}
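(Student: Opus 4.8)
The plan is to reduce the task to repeatedly finding a not-yet-discovered $1$-position of $x$, using as a black box the ``search on bounded-error inputs'' primitive from Section~\ref{ssecqueryalgos} (which finds a solution with constant probability using $O(\sqrt{n/t})$ queries even when the number $t$ of solutions is unknown, and for which $O_x^{-1}$ is also available). I would maintain a classical set $F\subseteq[n]$ of indices already confirmed to satisfy $x_i=1$. In each round, search for a $1$ among the indices in $[n]\setminus F$: run the bounded-error search but feed it, instead of $O_x$, the oracle that acts as the identity on $\ket{i,b}$ when $i\in F$ (so those indices look like $0$'s) and as $O_x$ otherwise -- one invocation of $O_x$ per query, since ``$i\in F$'' is classically computable. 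When a candidate index comes back, confirm it with a few more queries and, if confirmed, add it to $F$. Since the true $1$-set has size $t=|x|$, the $j$-th successful round inspects $\approx n$ positions of which $t-j+1$ are solutions and costs $O(\sqrt{n/(t-j+1)})$ queries, so locating all $t$ of them costs $\sum_{j=1}^{t}O(\sqrt{n/j})=O(\sqrt{tn})$; as $t\le n$ this is $O(n)$. There is no need to locate the $0$-positions separately -- they are precisely the complement of $F$.

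To terminate, once a round (together with a short burst of $O(\log(1/\eps))$ repeated searches of $[n]\setminus F$, costing $O(\sqrt{n}\log(1/\eps))$ queries in all) fails to turn up a new confirmed solution, declare $F$ complete and output its indicator string $\chi_F\in\bitset^n$, which equals $x$ whenever $F=\{i:x_i=1\}$. So far the query count is $O(n)$, and the robust-polynomial application in the next section really only needs $\eps$ to be a (small) constant.

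The hard part will be bounding the overall error by $\eps$ without paying an extra logarithmic factor. There are $\Theta(n)$ rounds, and a crude union bound would demand that each round and each confirmation succeed with probability $1-\Theta(\eps/n)$, costing $O(\log(n/\eps))$ extra queries per round and destroying the $O(n)$ bound. The key idea is an amortized view: a round in which $s$ solutions still remain is cheap, costing only $O(\sqrt{n/s})$ queries, and for such rounds it suffices that the search succeed with merely constant probability -- a round that makes no progress is simply retried, which inflates the expected total $\sum_s O(\sqrt{n/s})=O(\sqrt{tn})$ by only a constant factor; only the $O(\log(1/\eps))$ rounds in which $O(1)$ solutions remain (including the termination check) incur the full $O(\sqrt n)$ cost, and there one can afford to drive the success and confirmation probabilities arbitrarily close to $1$. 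The genuinely delicate point -- the technical heart of~\cite{bnrw:robustqj} -- is controlling \emph{false positives}: a failed search returning a $0$-index that survives confirmation. Ruling these out (using the precise guarantees of the bounded-error search of~\cite{hmw:berrorsearch} together with the ``find all solutions'' scheme of~\cite{graaf&wolf:qyao}), so that the total probability of ever admitting a wrong index, or of halting with $F$ incomplete, stays below $\eps$ while the confirmations remain cheap enough, is where the careful bookkeeping lies. Assembling these pieces yields the claimed $O(n)$-query algorithm.
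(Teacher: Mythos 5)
Your plan follows the paper's proof sketch in broad outline (iteratively search for unrecorded solutions using bounded-error quantum search, with an amortized $\sum_j O(\sqrt{n/j})=O(\sqrt{|x|n})$ query count in the ideal case), but it diverges from the paper in a way that opens a genuine gap. The paper maintains a full $n$-bit candidate string $\widetilde{x}$ and repeatedly searches for \emph{differences} between $\widetilde{x}$ and $x$, flipping whichever bit is found; this makes the process \emph{self-correcting}. If a spurious index $i$ with $x_i=\widetilde{x}_i$ is ever flipped, that position becomes a difference and can be found and flipped back in a later round, so no individual round needs to be reliable. The whole error analysis is then a biased random walk on the Hamming distance $d(\widetilde{x},x)$, which is shown in~\cite{bnrw:robustqj} to hit $0$ after $O(n)$ queries with high probability.

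Your structure is a one-way ratchet: a set $F$ to which confirmed $1$-indices are only ever added, with searches run only over $[n]\setminus F$. Once a $0$-index slips into $F$, it is never revisited, so the output is irrecoverably wrong. This forces you to drive the false-positive probability of \emph{every} confirmation low enough that a union bound over up to $|x|$ rounds yields total error at most $\eps$; that demands $\Theta(\log(n/\eps))$ confirmation queries per accepted index, i.e.\ $\Theta(n\log n)$ queries when $|x|=\Theta(n)$, exactly the log factor the theorem is designed to remove. Your amortization idea (spend $\Theta(\sqrt{n/s})$ queries when $s$ solutions remain) does not rescue this: when $|x|=\Theta(n)$, the early rounds have $s=\Theta(n)$, the per-round confirmation budget is $O(1)$ queries, and hence per-round false-positive probability is $\Theta(1)$ --- summed over $\Theta(n)$ such rounds this is hopeless, not a small constant. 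So ``ruling these out \ldots while the confirmations remain cheap enough'' is not a matter of careful bookkeeping inside your framework; it is the two-sided, self-correcting search over differences (allowing $\widetilde{x}$ to be flipped \emph{back}) that makes it possible to tolerate constant per-step error without paying the log factor. Your plan would need to also search within $F$ for false positives and permit removals --- at which point $F$ is just an indicator string $\widetilde{x}$ and you have reconstructed the paper's approach along with its random-walk analysis.
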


The constant in the $O(\cdot)$ depends on $\eps$, but we will not write this dependence explicitly.

\medskip

\begin{proof} [Proof (sketch)]
The idea is to maintain an $n$-bit string $\widetilde{x}$, initially all-0,
and to look for differences between $\widetilde{x}$ and $x$. Initially this number of differences is $|x|$.
If there are $t$ difference points (i.e., $i\in[n]$ where $x_i\neq\widetilde{x}_i$), 
then the quantum search algorithm $A$ with bounded-error inputs 
finds a difference point $i$ with high probability using $O(\sqrt{n/t})$ queries. 
We flip the $i$-th bit of $\widetilde{x}$. If the search indeed yielded a difference point,
then this reduces the distance between $\widetilde{x}$ and~$x$ by one.
Once there are no differences left, we have $\widetilde{x}=x$, which we can verify by one more run of $A$.
If $A$ only finds difference points, then we would find all differences in total number of queries
$$
\sum_{t=1}^{|x|}O(\sqrt{n/t})=O(\sqrt{|x|n})\,.
$$
The technical difficulty is that $A$ errs (i.e., produces an output $i$ where actually $x_i=\widetilde{x}_i$)
with constant probability, and hence we sometimes increase rather than decrease 
the distance between $\widetilde{x}$ and~$x$.  The proof details in~\cite{bnrw:robustqj} show that the procedure 
is still expected to make progress, and with high probability finds all differences after $O(n)$ queries.\footnote{The same idea would 
work with \emph{classical} algorithms, but gives query complexity roughly $\sum_{t=1}^{|x|} n/t\sim n \ln |x|$.}
\end{proof}

This algorithm implies that we can compute, with $O(n)$ queries and error probability $\eps$, any Boolean function $f:{\bitset}^n\rightarrow{\bitset}$
on $\eps$-bounded-error inputs: just compute $x$ and output $f(x)$.
This is not true for \emph{classical} algorithms running on bounded-error inputs.
In particular, classical algorithms that compute Parity with such a noisy oracle need $\Theta(n\log n)$ queries~\cite{feige&al:computingWithNoisyInformation}.

The above algorithm for $f$ is ``robust'' in a very similar way as robust polynomials: 
its output is hardly affected by small errors on its input bits.
We now want to derive a robust polynomial from this robust algorithm.
However, Corollary~\ref{cor2Tpoly} only deals with algorithms acting on the usual non-noisy type of oracles.
We circumvent this problem as follows. Pick a sufficiently large integer $m$, and fix error-fractions $\eps_i\in[0,\eps]$ 
that are multiples of $1/m$. Convert an input $x\in{\bitset}^n$ into $X\in{\bitset}^{nm}=X_1\ldots X_n$, 
where each $X_i$ is $m$ copies of $x_i$ but with an $\eps_i$-fraction of errors (the errors can be placed arbitrarily among the $m$ copies of $x_i$).
Note that the following map is an $\eps$-bounded-error oracle for $x$ that can be implemented by one query to $X$:
$$
\ket{i,b,0}\mapsto\ket{i}\frac{1}{\sqrt{m}}\sum_{j=1}^m\ket{b\oplus X_{ij}}\ket{j}=\sqrt{1-\eps_i}\,\ket{i,b\oplus x_i,w_i}+\sqrt{\eps_i}\,\ket{i,\overline{b\oplus x_i},w'_i}\,.
$$
Now consider the algorithm that Theorem~\ref{throbustalgo} provides for this oracle.
This algorithm makes $O(n)$ queries to~$X$, it is independent of the specific values of $\eps_i$
or the way the errors are distributed over $X_i$, and it has success probability $\geq 1-\eps$ as long as $\eps_i\leq\eps$ for each $i\in[n]$.
Applying Corollary~\ref{cor2Tpoly} to this algorithm gives an $nm$-variate multilinear polynomial $p$ in $X$ of degree $d=O(n)$.
This $p(X)$ lies in $[0,1]$ for every input $X\in{\bitset}^{nm}$ (since it is a success probability),
and has the property that $p(X_1,\ldots,X_n)$ is $\eps$-close to $f(x_1,\ldots,x_n)$ whenever $|X_i|/m$ is $\eps$-close to $x_i$ for each $i$.

It remains to turn each block $X_i$ of $m$ Boolean variables into one real-valued variable $z_i$.
This can be done by the method of \emph{symmetrization}~\cite{minsky&papert:perceptrons} as follows.
Define a new polynomial $p_1$ which averages $p$ over all permutations of the $m$ bits in $X_1$:
$$
p_1(X_1,\ldots,X_n)=\frac{1}{m!}\sum_{\pi\in S_m} p(\pi(X_1),X_2,\ldots,X_m)\,.
$$
Symmetrization replaces terms like $X_{11}\cdots X_{1t}$ by 
$$
V_t(X_1)=\frac{1}{\binom{m}{t}}\sum_{T\in\binom{[m]}{t}}\prod_{j\in
  T}X_{1j}\,.
$$
Therefore $p_1$ will be a linear combination of terms of the form 
$V_t(X_1)r(X_2,\ldots,X_n)$ for $t\leq d-\deg(r)$.
On $X_1\in{\bitset}^m$ of Hamming weight $|X_1|$, the sum $V_t(X_1)$
evaluates to 
$$
\binom{|X_1|}{t}=\frac{|X_1|(|X_1|-1)\cdots(|X_1|-t+1)}{t!}\,,
$$ 
which is a polynomial in $|X_1|=\sum_{j=1}^m X_{1j}$ of degree $t$.
Hence we can define $z_1=|X_1|/m$, and replace $p_1$ by a polynomial $q_1$ of total degree at most $d$ in $z_1,X_2,\ldots,X_m$,
such that $p_1(X_1,\ldots,X_n)=q_1(|X_1|/m,X_2\ldots,X_n)$.
We thus succeeded in replacing the block $X_1$ by one real variable $z_1$.
Repeating this for $X_2,\ldots,X_n$, we end up with a polynomial $q(z_1,\ldots,z_n)$ such
that $p(X_1,\ldots,X_n)=q(|X_1|/m,\ldots,|X_n|/m)$ for all $X_1,\ldots,X_n\in{\bitset}^{nm}$.
This $q$ will not be multilinear anymore, but it has degree at most $d=O(n)$ and it $\eps$-robustly approximates $f$:
for every $x\in{\bitset}^n$ and for every $z\in[0,1]^n$ satisfying $|z_i-x_i|\leq\eps$ for all $i\in[n]$, we have that $q(z)$ and $f(x)$ are $\eps$-close.
(Strictly speaking we have only dealt with the case where the $z_i$ are multiples of $1/m$, but we can choose $m$ as large as we want
and a low-degree polynomial cannot change much if its input varies between $i/m$ and $(i+1)/m$.)

\begin{corollary}[BNRW]
For every Boolean function $f$, there exists an $n$-variate polynomial of degree $O(n)$ that $\eps$-robustly approximates $f$.
\end{corollary}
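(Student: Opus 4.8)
The plan is to extract the polynomial from the robust quantum algorithm of Theorem~\ref{throbustalgo}, being careful about the fact that that algorithm runs on a \emph{noisy} oracle whereas Corollary~\ref{cor2Tpoly} converts only ordinary query algorithms into polynomials. First I would compose: run the $O(n)$-query BNRW algorithm that recovers $x$ from an $\eps$-bounded-error oracle, then output $f$ of the recovered string. The acceptance probability of the resulting $O(n)$-query procedure is within $\eps$ of $f(x)$ whenever the oracle's per-coordinate error is at most $\eps$. The obstruction to applying Corollary~\ref{cor2Tpoly} directly is precisely this noisy oracle, so the real content of the proof is simulating it noiselessly.

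To do so I would pad the input. Fix a large integer $m$ and per-bit error fractions $\eps_i\in[0,\eps]$ that are multiples of $1/m$, and replace $x\in\{0,1\}^n$ by $X=X_1\cdots X_n\in\{0,1\}^{nm}$, where block $X_i$ holds $m$ copies of $x_i$ with an $\eps_i$-fraction of them flipped (placed arbitrarily). As observed above, a single query to $X$ implements an $\eps$-bounded-error oracle for $x$ via $\ket{i,b,0}\mapsto\ket{i}\,\tfrac{1}{\sqrt m}\sum_{j=1}^m\ket{b\oplus X_{ij}}\ket{j}$. Running the composed algorithm on this oracle costs $O(n)$ queries to $X$, so Corollary~\ref{cor2Tpoly} produces a multilinear polynomial $p$ in the $nm$ variables, of degree $d=O(n)$, with $p(X)\in[0,1]$ for all $X$ and $p(X)$ within $\eps$ of $f(x)$ whenever $|X_i|/m$ is $\eps$-close to $x_i$ for every $i$.

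Next I would collapse each block to a single real variable by Minsky--Papert symmetrization~\cite{minsky&papert:perceptrons}. Averaging $p$ over all permutations within block $X_1$ makes the result depend on that block only through $|X_1|=\sum_j X_{1j}$, since each symmetrized monomial $V_t(X_1)$ equals $\binom{|X_1|}{t}$, a polynomial of degree $t$ in $|X_1|$; substituting $z_1=|X_1|/m$ and iterating over all $n$ blocks yields a polynomial $q(z_1,\ldots,z_n)$ of total degree at most $d=O(n)$ with $p(X)=q(|X_1|/m,\ldots,|X_n|/m)$ identically. The point I would actually verify here, rather than skip, is that symmetrization does not raise the degree: $V_t$ contributes degree $t$, matching the degree it carried as a multilinear monomial.

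Finally I would pass from the $1/m$-grid to the full cube. By continuity and density $q$ maps all of $[0,1]^n$ into $[0,1]$, and being bounded by $1$ in supremum norm on the cube its coefficients are bounded independently of $m$ (equivalence of norms on the finite-dimensional space of degree-$\le d$ polynomials); taking $m\to\infty$ along a subsequence on which the coefficients converge gives a limit polynomial of degree $O(n)$, and since the conditions ``$q(z)\in[0,1]$'' and ``$|q(z)-f(x)|\le\eps$ whenever $\|z-x\|_\infty\le\eps$'' are closed, the limit $\eps$-robustly approximates $f$. (Alternatively, one can skip the subsequence by noting that a degree-$d$ polynomial bounded on $[0,1]$ changes by only $O(d^2/m)$ across a grid cell, so the grid statement already implies the continuum one up to negligible error.) I expect essentially all the difficulty to live in this last de-quantization bookkeeping — the symmetrization degree count and the grid-to-continuum limit — since the substantive algorithmic work is already packaged inside Theorem~\ref{throbustalgo}.
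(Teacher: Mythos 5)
Your proposal matches the paper's proof essentially step for step: compose the BNRW noisy-input algorithm with evaluation of $f$, simulate the $\eps$-bounded-error oracle by a single query to a block-replicated input $X\in\{0,1\}^{nm}$, apply Corollary~\ref{cor2Tpoly} to get a degree-$O(n)$ multilinear polynomial in $X$, collapse each block by Minsky--Papert symmetrization, and finally pass from the $1/m$-grid to all of $[0,1]^n$. The only difference is that you spell out the last grid-to-continuum step (via bounded coefficients/subsequence or Markov-type stability), which the paper dispatches with a one-line remark; your elaboration is a faithful filling-in of that same argument, not a different route.
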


\subsection{Closure properties of $\PP$} \label{sec:postbqp}

The important classical complexity class $\PP$ consists of all languages $L$ for which there exists a probabilistic polynomial-time algorithm that accepts an input $x$ with probability at least $1/2$ if $x \in L$, and with probability less than $1/2$ if $x \notin L$.  Note that under this criterion, the algorithm's acceptance probabilities may be extremely close to $1/2$, so $\PP$ is not a realistic definition of the class of languages feasibly computable with classical randomness.  Indeed, it is not hard to see that $\PP$ contains $\NP$.
Still, $\PP$ is worthy of study because of its many relations to other complexity classes.

One of the most basic questions about a complexity class $\mathcal{C}$ is which \emph{closure properties} it possesses.  For example, if $L_1, L_2 \in \mathcal{C}$, is $L_1 \cap L_2 \in \mathcal{C}$?  That is, is $\mathcal{C}$ \emph{closed under intersection}?  In the case of $\PP$, this question was posed by Gill~\cite{gill:pp}, who defined the class, and was open for many years before being answered affirmatively by Beigel et al.~\cite{brs:ppclosed}.  It is now known that $\PP$ is closed under significantly more general operations~\cite{brs:ppclosed, fortnow&reingold:pptt, aaronson:pp}.  Aaronson~\cite{aaronson:pp} gave a new and arguably more intuitive proof of the known closure properties of $\PP$, by providing a \emph{quantum} characterization of $\PP$.

To describe this result, we first briefly introduce the model of
quantum polynomial-time computation.  A \emph{quantum circuit} is a
sequence of unitary operations $U_1, \ldots, U_T$, applied to the
initial state $\ket{x}\ket{0^m}$, where $x \in \{0, 1\}^n$ is the
input to the circuit and $\ket{0^m}$ is an auxiliary workspace.  By
analogy with classical circuits, we require that each $U_t$ be a
\emph{local} operation which acts on a constant number of qubits.  For
concreteness, we require each $U_t$ to be a Hadamard gate, or the
single-qubit operation 
$$
\left( \begin{array}{cc}
1 & 0 \\
0 & e^{\frac{i\pi}{4}} \end{array} \right)\,,
$$
or the two-qubit controlled-NOT gate (which maps computational basis states $\ket{a,b}\mapsto\ket{a,a\oplus b}$).  A computation ends by measuring the first workspace qubit.  We say that such a circuit computes a function $f_n:{\bitset}^n\rightarrow{\bitset}$ \emph{with bounded error} if on each $x \in {\bitset}^n$, the final measurement equals $f_n(x)$ with probability at least $2/3$.  $\BQP$ is the class of languages computable with bounded error by a logspace-uniform family of polynomial-size quantum circuits.  Here, both the workspace size and the number of unitaries are required to be polynomial.  The collection of gates we have chosen is \emph{universal}, in the sense that it can efficiently simulate any other collection of local unitaries to within any desired precision~\cite[Section~4.5.3]{nielsen&chuang:qc}.  Thus our definition of $\BQP$ is a robust one.

In~\cite{aaronson:pp}, Aaronson investigated the power of a
``fantasy'' extension of quantum computing in which an algorithm may
specify a desired outcome of a measurement in the standard basis, and
then condition the quantum state upon seeing that outcome (we require
that this event have nonzero probability).  Formally, if a quantum
algorithm is in the pure state $\ket{\psi} = \ket{\psi_0}\ket{0} +
\ket{\psi_1}\ket{1}$ (where we have distinguished a 1-qubit register
of interest, and $\ket{\psi_1}$ is non-zero), then the
\emph{postselection} transformation carries $\ket{\psi}$ to 
$$
\frac{\ket{\psi_1}\ket{1}}{\sqrt{\braket{\psi_1}{\psi_1}}}\, .
$$
The complexity class $\PostBQP$ is defined as the class of languages computable with bounded error by a logspace-uniform family of polynomial-size quantum circuits that are allowed to contain postselection gates.  We have:

\begin{theorem}[Aaronson]\label{thpostbqp} 
$\PP= \PostBQP$.
\end{theorem}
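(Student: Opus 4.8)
The plan is to prove the two inclusions $\PP\subseteq\PostBQP$ and $\PostBQP\subseteq\PP$ separately. The application emphasized above---a clean reproof of the closure of $\PP$ under intersection, union and complement---is then immediate, since $\PostBQP$ is \emph{visibly} closed under these operations: to decide $L_1\cap L_2$, run the $\PostBQP$ circuits for $L_1$ and $L_2$ one after another, each with its own postselection gates, and output the AND of the two answer qubits; the two postselection probabilities merely multiply (and stay positive). Amplifying the error of a $\PostBQP$ machine from $1/3$ down to $2^{-n}$ is likewise routine---run $\poly(n)$ independent copies and take the majority vote.

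\medskip\noindent\emph{The direction $\PostBQP\subseteq\PP$.} Here the idea is to exhibit the postselected acceptance probability as a ratio of two $\mathrm{GapP}$ functions, and then to invoke the standard characterization $\PP=\{L:\exists\,g\in\mathrm{GapP}\ \forall x\ (x\in L\Leftrightarrow g(x)>0)\}$. It costs nothing, since $\PostBQP$ is insensitive to the choice of universal gate set and a real universal set (e.g. Hadamard and Toffoli gates) exists, to assume the circuit $U=U_T\cdots U_1$ has only real entries. Then the ``Feynman'' expansion of an amplitude as a sum over computational-basis histories shows that $\bra{y}U\ket{x,0^m}=a_y(x)/2^{h/2}$, where $h$ is the number of Hadamard gates and, as a function of $(x,y)$, $a_y$ is the number of sign-$+$ histories minus the number of sign-$-$ histories, hence $a_y\in\mathrm{GapP}$; consequently $a_y^2$, a $\mathbb Z$-combination of products of $\#\mathrm P$ functions, also lies in $\mathrm{GapP}$. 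Since $\mathrm{GapP}$ is closed under summation over exponentially many (polynomial-time recognizable) terms, both $A(x):=\sum_y a_y(x)^2$ over the basis states consistent with ``postselection succeeds and answer $=1$'' and $B(x):=\sum_y a_y(x)^2$ over those consistent with ``postselection succeeds'' are $\mathrm{GapP}$ functions, and the postselected acceptance probability on input $x$ is exactly $A(x)/B(x)$. As $B(x)$ is a positive integer and, by bounded error, $x\in L\Rightarrow A/B\ge 2/3$ while $x\notin L\Rightarrow A/B\le 1/3$, the $\mathrm{GapP}$ function $g(x):=2A(x)-B(x)$ is $\ge 1$ exactly when $x\in L$ and $\le -1$ otherwise; hence $L\in\PP$.

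\medskip\noindent\emph{The direction $\PP\subseteq\PostBQP$.} I would begin from the standard form of $\PP$: there is a polynomial-time predicate $V$ and a polynomial $m$ with $x\in L\iff N_1(x)>N_0(x)$, where $N_b(x)=|\{y\in{\bitset}^m:V(x,y)=b\}|$; padding the witness removes ties, so $t(x):=N_1(x)-N_0(x)$ is a nonzero integer in $[-2^m,2^m]$ and the task is to determine $\mathrm{sign}(t(x))$. A $\PostBQP$ circuit first builds $\tfrac{1}{\sqrt{2^m}}\sum_y\ket{y}\ket{V(x,y)}$, applies a Hadamard to the answer qubit, applies $H^{\otimes m}$ to the $y$-register, and postselects the $y$-register onto $\ket{0^m}$; a one-line calculation leaves the answer qubit in the state proportional to $2^m\ket{0}-t\ket{1}$. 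Thus an amplitude \emph{exactly proportional to $t$}, carrying the sign we are after, now sits on $\ket{1}$---but its magnitude relative to the $\ket{0}$-component may be as small as $2^{-m}$, so everything hinges on amplifying this bias to a constant (or even just to an inverse polynomial, which can then be boosted by repetition) using only polynomially many gates and postselection steps. The plan here is to follow Aaronson's amplification gadget, the gist of which is: convert ``sign of an amplitude'' into ``which of two measurement outcomes is more likely'' by interfering the $t$-amplitude against a reference amplitude with one further Hadamard, and run this against a carefully weighted superposition over reference amplitudes of many magnitudes, so that whatever the value of $|t|$ some term in the superposition is ``tuned'' to it and produces a constant bias toward the correct answer---while no term ever biases it the wrong way. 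Measuring the output qubit and amplifying then decides $L$.

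\medskip The main obstacle, and essentially the entire technical content, is this last amplification step for $\PP\subseteq\PostBQP$: designing the superposition of reference magnitudes (and their relative weights) so that the net bias toward the correct answer is bounded below by an inverse polynomial \emph{uniformly in $x$}, despite the naive bias being as small as $2^{-\poly(n)}$, and with every postselection event retaining strictly positive (though possibly exponentially small, which the model permits) probability. By comparison $\PostBQP\subseteq\PP$ is routine once the elementary facts about $\mathrm{GapP}$---closure under products and exponential sums, and the ``sign of a $\mathrm{GapP}$ function'' characterization of $\PP$---are in hand; the only minor care needed there is the reduction to a real gate set (or, alternatively, bookkeeping the $8$th-root-of-unity phases coming from the $e^{i\pi/4}$ gate) so that the scaled amplitudes $a_y(x)$ are honest integers.
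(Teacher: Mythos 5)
Your overall structure matches Aaronson's proof, and most of the sketch is on target. For $\PostBQP\subseteq\PP$, the paper merely cites ``the same techniques which show $\BQP\subseteq\PP$''; your $\mathrm{GapP}$-ratio argument is exactly the standard way to carry that out, and you correctly note that the postselected acceptance probability is $A(x)/B(x)$ for two $\mathrm{GapP}$ numerators---this implicitly relies on the paper's first lemma, that any $\PostBQP$ circuit may be put in \emph{canonical form} with a single postselection at the end (otherwise ``sum over $y$ consistent with `postselection succeeds'\,'' isn't a single well-defined event). For $\PP\subseteq\PostBQP$, your subroutine that Hadamards the answer qubit, applies $H^{\otimes m}$ to the witness register, and postselects onto $\ket{0^m}$ to leave $\propto 2^m\ket{0}-t\ket{1}$ is the same state Aaronson reaches (his $\ket{\psi}\propto(2^m-s)\ket{0}+s\ket{1}$ equals yours after one more Hadamard, with $t=2s-2^m$), so up to where one places that final $H$ you are reconstructing his step.

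The one place your description goes wrong is the amplification step, which you characterize as ``run this against a carefully weighted superposition over reference amplitudes of many magnitudes.'' That is \emph{not} how the gadget works, and I don't think the superposition version can be made to work directly: if you ran all references coherently, the outcome probabilities you measure would be a single weighted average over all the reference scales, and there is no way to isolate the ``tuned'' term from the rest. What Aaronson actually does is a \emph{classical iteration}: he fixes a small set $\{(\alpha_i,\beta_i)\}_{i=-m}^{m}$ of control-qubit preparations with $\alpha_i/\beta_i=2^i$, runs the controlled-Hadamard-plus-postselection experiment $O(\log m)$ times for \emph{each} $i$ separately, and estimates $\Pr[\ket{+}]$ for each. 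The decision rule is to accept if and only if some $i$ produced a high estimate. This exploits a crucial one-sided structure: when $s\geq 2^{m-1}$ (i.e.\ $t\geq 0$) the state $\ket{z'}$ has components of opposite sign, so $|\braket{+}{z'}|^2\leq 1/2$ for \emph{every} choice of $(\alpha,\beta)$; while when $s<2^{m-1}$ the state lies in the positive quadrant, and since the $\alpha_i/\beta_i$ form a geometric progression, \emph{some} $i$ is within a factor $2$ of $\frac{1}{\sqrt2}(2^m-2s)/s$, driving $|\braket{+}{z'}|^2$ close to $1$. The gap is a constant, not merely inverse-polynomial, and a union bound over the $O(m)$ classical trials finishes the argument. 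If you rewrite your last paragraph to describe this iterate-and-test structure (rather than a superposition of references), the proof is essentially identical to Aaronson's.
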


From Theorem~\ref{thpostbqp}, the known closure properties of $\PP$ follow easily.  For example, it is clear that if $L_1, L_2 \in \PostBQP$, then we may amplify the success probabilities in the $\PostBQP$ algorithms for these languages, then simulate them and take their AND to get a $\PostBQP$ algorithm for $L_1 \cap L_2$.
This shows that $\PostBQP$ (and hence also $\PP$) is closed under intersection.

\medskip

\begin{proof}[Proof (sketch)] 
  We begin with a useful claim about postselection: any quantum
  algorithm with postselection can be modified to make just a single
  postselection step after all its unitary transformations (but before
  its final measurement).  We say that such a postselection algorithm
  is in \emph{canonical form}.  To achieve this, given any $\PostBQP$
  algorithm $A$ for a language $L$, consider a new algorithm $A'$
  which on input $x$, simulates $A(x)$.  Each time $A$ makes a
  postselecting measurement on a qubit, $A'$ instead records that
  qubit's value into a fresh auxiliary qubit.  At the end of the
  simulation, $A'$ postselects on the event that all these recorded
  values are 1, by computing their AND in a final auxiliary qubit
  $\ket{z}$ and postselecting on $\ket{z} = \ket{1}$.  The final state
  of $A'(x)$ is equivalent to the final state of $A(x)$, so $A'$ is a
  $\PostBQP$ algorithm for $L$ and is in canonical form.  This conversion
  makes it easy to show that $\PostBQP \subseteq \PP$, by the same
  techniques which show $\BQP \subseteq
  \PP$~\cite{adh:qcomputability}.  We omit the details, and turn to
  show $\PP \subseteq \PostBQP$.

Let $L \in \PP$, and let $M$ be a probabilistic polynomial-time algorithm witnessing this fact.  Say $M$ uses $m = \poly(n)$ random bits on an input $x$ of length $n$.  Then any such input defines a function
$
g = g_x : \{0, 1\}^m \rightarrow \{0, 1\},
$
by the rule
$$
g(r) = [\text{$M(x)$ accepts when using $r$ as its random string}]\,.
$$
By definition, $x \in L \Leftrightarrow |g^{-1}(1)| \geq 2^{m - 1}$.  We show how to determine whether $|g^{-1}(1)| \geq 2^{m - 1}$, in quantum polynomial time with postselection.  Let $s = |g^{-1}(1)|$; we may assume without loss of generality that $s > 0$.  The core of the algorithm is the following subroutine, which uses postselection to produce a useful quantum state:
\begin{enumerate}
\item[A1.] Initialize an $(m + 1)$-bit register to $\ket{0^{m+1}}$.
  Apply $H^{\otimes m}$ to the first $m$ qubits, then apply $g$ to
  these qubits and add the result into the $(m+1)$-st qubit, yielding
  the state 
$$
\frac{1}{\sqrt{2^m}}\sum_{x \in {\bitset}^m }\ket{x}\ket{g(x)}\,.
$$
\item[A2.]  Apply a Hadamard gate to each qubit of the $x$-register, yielding
$$
 \frac{1}{\sqrt{2^m}}\sum_{x \in {\bitset}^m} \left( \frac{1}{\sqrt{2^m}}\sum_{w \in {\bitset}^m} (-1)^{w\cdot x}\ket{w}     \right) \ket{g(x)}\,,
$$
where $w\cdot x=\sum_i w_ix_i$ denotes inner product of bit strings.  
Note that $0^m \cdot x = 0$ for all $x$, so the component of the above
state with first register equal to $0^m$ is 
$$
\frac{1}{2^m} \left( (2^m - s)\ket{0} + s \ket{1} \right)\,.
$$
\item[A3.] Postselect on the first $m$ qubits measuring to $0^m$.  This yields the reduced state
$$
\ket{\psi} = \frac{(2^m -s)\ket{0} + s \ket{1}}{\sqrt{ (2^m - s)^2  + s^2 }}
$$
on the last qubit.
\end{enumerate}

Using this subroutine, we can prepare fresh copies of $\ket{\psi}$ on demand.  Now let $(\alpha, \beta)$ be a pair of positive reals  to be specified later, satisfying $\alpha^2 + \beta^2 = 1$.  Using this pair, we define a second subroutine:
\begin{enumerate}
\item[B1.] Prepare a qubit $\ket{z} = \alpha\ket{0} + \beta\ket{1}$.  Perform a controlled-Hadamard operation on a fresh copy of $\ket{\psi}$ with control qubit $\ket{z}$, yielding the joint state $\alpha \ket{0}\ket{\psi} + \beta \ket{1}H\ket{\psi}$.  Note that
$$
H\ket{\psi} = \frac{\frac{1}{\sqrt{2}}\left(2^m\ket{0}  +  (2^m - 2s)\ket{1} \right) }{\sqrt{ (2^m - s)^2 + s^2  }}\,.
$$
\item[B2.] Now postselect on the second qubit measuring to 1, yielding the reduced state
$$
\ket{z'} = \frac{  \alpha s \ket{0}  + \beta \frac{1}{\sqrt{2}} (2^m - 2s)\ket{1}  }{\sqrt{   \alpha^2s^2 + (\beta^2/2)(2^m - 2s)^2    }}\,.
$$
Perform a projective measurement relative to the basis 
$$
\{\ket{+}, \ket{-}\} = \left\{\frac{\ket{0} + \ket{1}}{\sqrt{2}},
\frac{\ket{0} - \ket{1}}{\sqrt{2}}\right\}\,,
$$
recording the result.
\end{enumerate}
Finally, our algorithm to determine if $s \geq 2^{m - 1}$ is as follows: we perform steps B1-B2 for each choice of $(\alpha, \beta)$ from a collection $\{(\alpha_i, \beta_i)\}_{i = -m}^m$ of pairs chosen to satisfy ${\alpha_i}/{\beta_i}= 2^i$, $i \in \{-m, \ldots,  m\}$.  For each choice of $i$, we repeat steps B1-B2 for $O(\log m)$ trials, and use the results to estimate the probability $|\inp{+}{z'}|^2$ of outcome $\ket{+}$ under $(\alpha_i, \beta_i)$.

The idea of the analysis is that, if $s < 2^{m - 1}$, then $\ket{z'}$ is in the strictly-positive quadrant of the real plane with axes $\ket{0}, \ket{1}$; here we are using that $\alpha, \beta$ are positive.  Then a suitable choice of $(\alpha_i, \beta_i)$ will cause $\ket{z'}$ to be closely aligned with $\ket{+}$, and will yield measurement outcome $\ket{+}$ with probability close to $1$.  On the other hand, if $s \geq 2^{m - 1}$, then $\ket{z'}$ is never in the same quadrant as $\ket{+}$ or the equivalent state $-\ket{+}$.  Then, for any choice of $(\alpha, \beta)$ we have $|\braket{+}{z'}| \leq 1/\sqrt{2}$ and the probability of measuring $\ket{+}$ is at most $1/2$.  Thus our repeated trials allow us to determine with high probability whether or not $s \geq 2^{m - 1}$.  We conclude that $\PP \subseteq \PostBQP$.  
\end{proof}

The original proof that $\PP$ is closed under
intersection~\cite{brs:ppclosed} crucially relied on ideas from
approximation by rational functions, i.e., by ratios of polynomials.
Aaronson's proof can also be seen as giving a new method for building
rational approximations.  To see this, note that the postselection
operation also makes sense in the quantum query model, and that in
this model as well algorithms may be put in canonical form (with a
single postselection).  Suppose a canonical-form query algorithm with
postselection makes $T$ quantum queries.  If its state before the
postselection step is $\ket{\psi^T} = \ket{\psi^T_0}\ket{0} +
\ket{\psi^T_1}\ket{1}$, then the amplitudes of $\ket{\psi^T_1}$ are
degree-$T$ multilinear polynomials in the input $x$, by
Corollary~\ref{cor2Tpoly}.  Then inspecting the postselected state 
$$
\frac{\ket{\psi^T_1}\ket{1}}{\sqrt{\braket{\psi^T_1}{\psi^T_1}}}\,,
$$
we see that the \emph{squared} amplitudes are rational functions of $x$ of degree $2T$ in $x$, that is, each squared amplitude can be expressed as $p(x)/q(x)$, where both numerator and denominator are polynomials of degree at most $2T$.  Moreover, all squared amplitudes have the same denominator $q(x)=\braket{\psi^T_1}{\psi^T_1}$. %
Thus in the final decision measurement, the acceptance probability is a degree-$2T$ rational function of $x$.

This may be a useful framework for designing other rational approximations.  In fact, one can show the quantum approach to rational approximation always gives an essentially optimal degree: if $f$ is $1/3$-approximated by a rational function of degree $d$, then $f$ is computed by a quantum query algorithm with postselection, making $O(d)$ queries~\cite{wolf:ratdegnote}, and vice versa.  
Such a tight, constant-factor relation between query complexity and degree is known to be false for the case of the usual (non-postselected)
quantum query algorithms and the usual approximate degree~\cite{ambainis:degreevsqueryj}.

\subsection{Jackson's theorem}

In this section we describe how a classic result in approximation theory, Jackson's Theorem, can be proved using quantum ideas.

The approximation of complicated functions by simpler ones, such as polynomials, has interested mathematicians for centuries.  A fundamental result of this type is Weierstrass's Theorem from 1885~\cite{weierstrass}, which states that any \emph{continuous} function $g: [0, 1] \rightarrow \mathbb{R}$ can be arbitrarily well-approximated by polynomials.  For any function $f: [0, 1] \rightarrow \mathbb{R}$, let $\norm{f}_\infty = \sup_{x \in [0, 1]}|f(x)|$; then Weierstrass's Theorem states that for any continuous $g$ and any $\eps > 0$, there exists a polynomial $p(x)$ such that $\norm{p - g}_\infty < \eps$.

In 1912 Bernstein gave a simple construction of such polynomials, which can be described in a \emph{probabilistic} fashion~\cite{bernstein:weierstrass,alon&spencer:probmethod}.  Fix a continuous function $g$ on $[0, 1]$ and a value $n \geq 1$, and consider the following algorithm: flip $n$ coins, each taking value 1 with probability $x$.  
Let $t\in\{0,\ldots,n\}$ be the number of 1s observed in the resulting string $X\in{\bitset}^n$. Output $g(t/n)$. 
Note that the expected value of the random variable $t/n$ is exactly $x$, and its standard deviation is $O(1/\sqrt{n})$.
Consider the \textit{expected value} of the output of this procedure, given a bias $x$:
$$
B_{g, n}(x)=\Exp\left[g\left(|X|/n\right)\right]=\sum_{t=0}^n\Pr\left[|X|=t\right]\cdot g\left(t/n \right).
$$ 
We have 
$$
\Pr\left[|X|=t\right]=\binom{n}{t}x^t(1-x)^{n-t}\,,
$$
hence $B_{g, n}(x)$ is a polynomial in $x$ of degree at most $n$.  Moreover, $B_{g, n}(x)$ is intuitively a good estimator for $g(x)$ since $t/n$ is usually close to $x$.  To quantify this, let $\omega_{\delta}(g)$, the \textit{modulus of continuity of $g$ at scale} $\delta$, be the supremum of $|g(x) - g(y)|$ over all $x,y\in [0, 1]$ such that $|x - y| \leq \delta$.  Then it can be shown~\cite{Riv} that $\norm{B_{g, n} - g}_\infty = O(\omega_{ 1/\sqrt{n} }(g))$.  This is not too surprising since we expect the fraction of heads observed to be concentrated in an interval of length $O(1/\sqrt{n})$ around $x$.  The interval $[0, 1]$ is compact, so $g$ is \emph{uniformly} continuous and $\omega_{\delta}(g) \rightarrow 0$ as $\delta \rightarrow 0$.  Thus the Bernstein polynomials yield a quantitative refinement of Weierstrass's Theorem.

Around the same time as Bernstein's work, an improved result was shown by Jackson~\cite{jackson:thesis}:

\begin{theorem}[Jackson]\label{thmjackson} 
If $g$ is continuous, then for all $n \geq 1$ there exists a polynomial $J_{g, n}$ of degree $n$ such that $\norm{J_{g, n} - g}_\infty = O(\omega_{1/n}(g))$.  
\end{theorem}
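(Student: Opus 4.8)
\medskip

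The plan is to realize Jackson's polynomial $J_{g,n}$ as the expected output of an $O(n)$-query quantum algorithm that estimates a bias, in the same spirit as the Bernstein construction above but exploiting the quadratically sharper estimation afforded by quantum counting. View $x\in[0,1]$ as the bias of a coin and, for a large integer $N$, simulate this coin by a string $X\in{\bitset}^N$ of Hamming weight $\lfloor xN\rfloor$; a single query to $X$ then suffices to prepare the state $\sqrt{1-x}\ket{0}+\sqrt{x}\ket{1}$. First I would run $Count(X,T)$ from Section~\ref{ssecqueryalgos} with $T=\Theta(n)$ queries to obtain an estimate $\tilde t\in[0,N]$ of $|X|=\lfloor xN\rfloor$, set $\tilde x:=\tilde t/N\in[0,1]$, and output the real number $g(\tilde x)$. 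By Corollary~\ref{cor2Tpoly} each outcome probability is a multilinear polynomial in $X$ of degree at most $2T$, and since $Count$ is built only from the oracle and the (permutation-invariant) diffusion operator, these polynomials are symmetric in $X_1,\ldots,X_N$; hence by symmetrization~\cite{minsky&papert:perceptrons} they are polynomials of degree at most $2T$ in $|X|$, and therefore, exactly as in the Bernstein computation above, the expected output $J_{g,n}(x):=\Exp[g(\tilde x)]$ is a polynomial in $x$ of degree at most $2T=O(n)$. (Since amplitude estimation depends only on the estimated amplitude $x$ and not on $N$, this polynomial is independent of $N$, and letting $N\to\infty$ yields a single polynomial valid on all of $[0,1]$; $g$ is only ever evaluated on $[0,1]$, since $\tilde t\in[0,N]$.)

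It then remains to bound $\norm{J_{g,n}-g}_\infty$. For every $x\in[0,1]$,
$$
|J_{g,n}(x)-g(x)|\ \le\ \Exp\bigl[\,|g(\tilde x)-g(x)|\,\bigr]\ \le\ \Exp\bigl[\,\omega_{|\tilde x-x|}(g)\,\bigr]\ \le\ \bigl(1+n\,\Exp[\,|\tilde x-x|\,]\bigr)\,\omega_{1/n}(g)\,,
$$
where the last inequality uses the elementary estimate $\omega_{c\delta}(g)\le(1+c)\,\omega_\delta(g)$ with $\delta=1/n$. So the whole theorem reduces to showing that the counting estimate has first absolute moment $\Exp[\,|\tilde x-x|\,]=O(1/n)$, i.e.\ that $\tilde x$ is on average accurate to additive $O(1/T)$.

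This last point is the crux, and is where the concentration bound quoted in Section~\ref{ssecqueryalgos} is not quite enough on its own: that bound only gives $\Pr[\,|\tilde x-x|\ge j/T\,]=O(1/j)$, whose tails are just barely too heavy, so integrating them yields $\Exp[\,|\tilde x-x|\,]=O(\tfrac{\log n}{n})$ and hence a spurious logarithmic factor (a quantitative Weierstrass bound rather than Jackson's). To remove the logarithm I would run \emph{three} independent copies of $Count(X,T)$ in parallel and take $\tilde x$ to be the median of the three estimates. Median-of-three boosts the tail quadratically: if a single estimate lands outside the $j/T$-interval around $x$ with probability $q$, then at least two of three do so with probability $O(q^2)$, giving $\Pr[\,|\tilde x-x|\ge j/T\,]=O(1/j^2)$, which does integrate to $\Exp[\,|\tilde x-x|\,]=O(1/T)=O(1/n)$. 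The three-copy algorithm still makes only $O(n)$ queries, is still permutation-invariant in $X$, and its outcome probabilities are still polynomials in $X$ of degree $O(n)$, so the structural argument of the first paragraph goes through unchanged; feeding $\Exp[\,|\tilde x-x|\,]=O(1/n)$ into the displayed inequality gives $\norm{J_{g,n}-g}_\infty=O(\omega_{1/n}(g))$, which is the theorem once the constant hidden in $T=\Theta(n)$ is absorbed into the degree (and the degree padded up to exactly $n$). The remaining steps — the $N\to\infty$ passage from bias values $k/N$ to arbitrary $x\in[0,1]$, and checking that a fixed-degree polynomial and the uniformly continuous $g$ both vary little between adjacent grid points — are routine.
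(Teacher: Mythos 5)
Your proposal is essentially correct and follows the same skeleton as the paper's proof: encode $x$ as a bit string, run quantum counting (boosted by a constant-size median) to estimate the Hamming weight with $T=\Theta(n)$ queries, output $g$ of the estimate, invoke Corollary~\ref{cor2Tpoly} to convert acceptance probabilities into degree-$O(n)$ polynomials, and control the error via the modulus of continuity exactly as you write. Your median-of-three with the $O(1/j^2)$ tail is also fine; the paper uses median-of-five for simplicity but notes in a footnote that three suffice.

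The one place where you genuinely depart from the paper is the step that produces a univariate polynomial in $x$, and it is worth being precise about why your route needs more care. The paper takes $X$ to be $N=n^2$ \emph{independent} bias-$x$ coin flips; then $\Exp_X[X_{i_1}\cdots X_{i_d}]=x^d$ (for distinct indices) turns the multilinear polynomial of Corollary~\ref{cor2Tpoly} directly into a univariate polynomial in $x$, at the modest cost of an extra error term $\Exp[|t-xN|]=O(\sqrt{N})=O(n)$ from binomial fluctuations. You instead fix a \emph{deterministic} $X$ of Hamming weight $\lfloor xN\rfloor$, so $\Exp[g(\tilde x)]$ as literally defined is a step function of $x$, not a polynomial; to recover a polynomial you must (i) argue the acceptance probabilities are \emph{symmetric} in $X_1,\ldots,X_N$ (true, since the oracle plus diffusion is permutation-invariant), (ii) symmetrize to get a degree-$\leq 2T$ polynomial $p_N$ in $|X|$, and (iii) show that the polynomial $q_N(x):=p_N(xN)$ is actually independent of $N$. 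Step (iii) is the one you wave at. It does hold: the output distribution of amplitude estimation depends only on $|X|/N$ and on $T$, so $p_N(k)$ is a function of $k/N$ alone, and two degree-$\leq 2T$ polynomials that agree on $\min(N,N')+1>2T+1$ grid points coincide. But this uniqueness-by-dimension argument (together with the $\lfloor xN\rfloor/N\to x$ discretization issue and the fact that $g$ should be evaluated at the $N$-independent quantity $\sin^2(\pi\tilde m/M')$ rather than a rounded $\tilde t/N$) is precisely what the paper's randomized construction lets it avoid. So: a correct and interestingly different route through the "make it a polynomial in $x$" step, slightly more delicate than the paper's, which buys you only the elimination of the already-harmless binomial term.
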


In Jackson's theorem the quality of approximation is based on the maximum fluctuation of $g$ at a much smaller scale than Bernstein's
($1/n$ instead of $1/\sqrt{n}$).  Up to the constant factor, Jackson's Theorem is optimal for approximation guarantees based on the modulus of continuity.  
The original proof used trigonometric ideas.
Drucker and de Wolf~\cite{drucker&wolf:jackson} gave a proof of Jackson's Theorem that closely follows Bernstein's idea, 
but replaces his classical estimation procedure with the quantum counting algorithm of~\cite{bhmt:countingj}.
As mentioned in Section~\ref{ssecqueryalgos}, with $M$ queries this algorithm produces an estimate $\tilde{t}$ of the Hamming weight $t=|X|$ of a string $X\in{\bitset}^N$, such that for every integer $j\geq 1$ we have $\Pr[|\tilde{t} - t| \geq jN/M] = O(1/j)$. For our purposes we need an even sharper estimate.  To achieve this, let $Count'(X, M)$ execute $Count(X, M)$ five times, yielding estimates $\tilde{t}_1,\ldots,\tilde{t}_5$, and output their median, denoted $\tilde{t}_{med}$.\footnote{A more careful analysis in~\cite{drucker&wolf:jackson} allows a median of three trials to be used.}  Note that for $|\tilde{t}_{med} - t | \geq j N/M$ to hold, we must have $|\tilde{t}_{i} - t | \geq jN/M$ for at least three of the trials $i \in \{1,\ldots,5\}$. This happens with probability at most $O(1/j^3)$, which implies 
\begin{equation}\label{eqtmedbound}
\Exp\left[|\tilde{t}_{med} - t |\right] \leq \sum_{j\geq 1} \frac{jN}{M} \cdot O(1/j^3)=O\left(N/M\right).
\end{equation}
With this estimator in hand, let us fix a continuous function $g$ and $n \geq 1$, and consider the following algorithm $A(x)$ to estimate $g(x)$ for an unknown value $x$: flip $N = n^2$ independent coins, each taking value 1 with probability $x$, yielding an $N$-bit string $X$.  Set $M = \lfloor n/10\rfloor$ and run $Count'(X, M)$, yielding an estimate $\tilde{t}_{med}$, and output $g(\tilde{t}_{med}/N)$.
This algorithm makes $5M \leq n/2$ queries to $X$, so
it follows from Corollary~\ref{cor2Tpoly} that its expected output value (viewed as a function of the $N$ bits of $X$) 
is a multilinear polynomial of degree at most $n$.  
Define $J_{g, n}(x) = \Exp[A(x)]$, where the expectation is taken both over the choice of $X$ and over the randomness in the output
of the quantum counting procedure.  
Note that $\Exp[X_{i_1}\ldots X_{i_d}]=x^d$, since the bits of $X$ are independent coin flips, each with expectation~$x$.
Hence $J_{g, n}$ is a degree-$n$ univariate polynomial in $x$.  
We bound $|J_{g, n}(x) - g(x)|$ as follows.  Consider the random variable $\tilde{x} = \tilde{t}_{med}/N$.  
Using the definition of $\omega_{1/n}(g)$, we have
\begin{eqnarray*}
|J_{g, n}(x) - g(x)| & = & | \Exp [g(\tilde{x})] -  g(x)|\\     
& \leq & \Exp [|g(\tilde{x}) - g(x)|]\\  
& \leq & \Exp[(n \cdot |\tilde{x} - x| + 1) \cdot \omega_{1/n}(g) ]\\
& = & \left(n \cdot \frac{1}{N}\Exp[|\tilde{t}_{med} - xN|] + 1\right) \cdot \omega_{1/n}(g)\\  
& \leq & \left(\frac{1}{n}(\Exp[|\tilde{t}_{med} - t |]  +  \Exp[| t - xN |] )       + 1\right) \cdot \omega_{1/n}(g)\,.
\end{eqnarray*}
Since $t=|X|$ has expectation $xN$ and variance $x(1-x)N$, we have $\Exp[| t - xN |] = O(\sqrt{N}) = O(n)$. 
By \eqref{eqtmedbound} we have $\Exp[|\tilde{t}_{med} - t |] = O(N/M) = O(n)$.  Plugging these two findings into our expression yields $|J_{g, n}(x) - g(x)|  = O(\omega_{1/n}(g))$, for all $x\in[0,1]$.  This proves Jackson's Theorem.

At the heart of quantum counting, one finds trigonometric ideas closely related to those used in some classical proofs of Jackson's Theorem (see~\cite{drucker&wolf:jackson} for a discussion).  Thus it is not really fair to call the above a simplified proof.  It does, however, show how Bernstein's elegant probabilistic approach to polynomial approximation can be carried further with the use of quantum algorithms.

\subsection{Separating strong and weak communication versions of $\PP$}\label{PPvsUPP}

The previous examples all used the connection between polynomials and quantum \emph{algorithms}, 
more precisely the fact that an efficient quantum (query) algorithm induces a low-degree polynomial.
The last example of this section uses a connection between polynomials and quantum \emph{communication protocols}.  
Communication complexity was introduced in Section~\ref{ssecip}.  Like computational complexity, communication complexity has a host of different models: 
one can study deterministic, bounded-error, or non-deterministic protocols, and so on.
Recall the complexity class $\PP$ from Section~\ref{sec:postbqp}.  This class has \emph{two} plausible analogues in the communication setting: an unbounded-error version and a weakly-unbounded-error version.
In the first, we care about the minimal $c$ such that there is a $c$-bit randomized communication protocol computing 
$f:{\bitset}^n\times{\bitset}^n\rightarrow{\bitset}$ correctly with success probability strictly greater than 1/2 on every input 
(though the success probability may be arbitrarily close to 1/2).
In the second version, we care about the minimal $c$ such that there is a $c$-bit randomized protocol computing 
$f$ correctly with success probability at least $1/2+\beta$ on every input, for $\beta\geq 2^{-c}$.
The motivation for the second version is that $c$ plays the role that computation-time takes for the computational class
$\PP$, and in that class we automatically have $\beta\geq 2^{-c}$: a time-$c$ Turing machine cannot flip more than $c$ fair coins, 
so all events have probabilities that are multiples of $1/2^c$.
Let $\UPP(f)$ and $\PP(f)$ correspond to the minimal communication $c$ in the unbounded and weakly-unbounded cases, respectively.
Both measures were introduced by Babai et al.~\cite{bfs:classes}, who asked whether they were approximately equal.  
(One direction is clear: $\UPP(f)\leq\PP(f)$).

These two complexity measures are interesting as models in their own right, but are all the more interesting
because they each correspond closely to fundamental complexity measures of Boolean matrices.
First, $\UPP(f)$ equals (up to a small additive constant) the log of the \emph{sign-rank} of $f$~\cite{paturi&simon:pcc}.  
This is the minimal rank among all $2^n\times 2^n$ matrices $M$ satisfying that the sign of the entry $M_{x,y}$ equals $(-1)^{f(x,y)}$ for all inputs.
Second, $\PP(f)$ is essentially the \emph{discrepancy} bound~\cite{klauck:qcclowerj}, 
which in turn is essentially the \emph{margin complexity} of $f$~\cite{linial&shraibman:ccj,linial&shraibman:learningj}.
The latter is an important and well-studied concept from computational learning theory.
Accordingly, an example where $\UPP(f)\ll \PP(f)$ also separates sign-rank on the one hand from discrepancy and margin complexity on the other.

Simultaneously but independently, Buhrman et al.~\cite{bvw:smallbias} and Sherstov~\cite{sherstov:halfspacej} (later improved in~\cite{sherstov:patternj})
found functions where $\UPP(f)$ is exponentially smaller than $\PP(f)$, answering the question from~\cite{bfs:classes}. 
We will give the proof of~\cite{bvw:smallbias} here.
Our quantum tool is the following lemma, first proved (implicitly) in~\cite{razborov:qdisj} and made more explicit in~\cite[Section~5]{ksw:dpt-siam}.

\begin{lemma}[Razborov]\label{lemrazborov}
Consider a quantum communication protocol on $m$-bit inputs $x$ and $y$
that communicates $q$ qubits, with outputs 0 and 1 (``reject'' and ``accept''), and acceptance probabilities denoted by $P(x,y)$.
For $i\in\{0,\ldots,m/4\}$, define
$$
P(i)=\Exp_{|x|=|y|=m/4, |x\wedge y|=i}[P(x,y)]\,,
$$
where the expectation is taken uniformly over all $x,y\in{\bitset}^m$
that each have Hamming weight $m/4$ and that have intersection size $i$.
For every $d\leq m/4$ there exists a univariate degree-$d$ polynomial $p$ (over the reals)
such that $|P(i)-p(i)|\leq 2^{-d/4+2q}$ for all $i\in\{0,\ldots,m/8\}$.
\end{lemma}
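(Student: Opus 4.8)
The plan is to combine the standard ``few qubits imply a short bilinear decomposition'' lemma for quantum communication with harmonic analysis on the Hamming slice $\binom{[m]}{m/4}$. First I would apply the Yao--Kremer decomposition: since the protocol communicates only $q$ qubits, its acceptance probability admits a representation
$$
P(x,y)=\sum_{j=1}^{2^{2q}}A_j(x)\,\overline{B_j(y)}\,,
$$
where each $A_j$ depends only on $x$, each $B_j$ only on $y$, and $\sup_x|A_j(x)|\le 1$, $\sup_y|B_j(y)|\le 1$; the index $j$ runs over pairs of length-$q$ transcripts (one per player), and this is the only place the bound $q$ enters, so the $2^{2q}$ factor in the conclusion comes directly from here. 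Because the expectation defining $P(i)$ is linear, it suffices to build, for a single product $A(x)\overline{B(y)}$ with $\|A\|_\infty,\|B\|_\infty\le 1$, a univariate polynomial $p_j$ of degree $\le d$ with $|C_j(i)-p_j(i)|\le 2^{-d/4}$ for $i\in\{0,\dots,m/8\}$, where $C_j(i)=\Exp_{|x|=|y|=m/4,\,|x\wedge y|=i}[A_j(x)\overline{B_j(y)}]$; then $p=\sum_j p_j$ works.

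Fix such a term and drop the subscript. I would decompose the space of functions on $\binom{[m]}{m/4}$ into the eigenspaces $V_0\perp V_1\perp\dots\perp V_{m/4}$ of the Johnson association scheme, with $m_k:=\dim V_k=\binom{m}{k}-\binom{m}{k-1}$, and write $A=\sum_k A^{(k)}$, $B=\sum_k B^{(k)}$. The averaging operator ``replace a point by a uniformly random partner at intersection $i$'' is self-adjoint and block-diagonal in this decomposition, with eigenvalue $\theta_k(i)$ on $V_k$, so
$$
C(i)=\sum_{k=0}^{m/4}\theta_k(i)\,\langle A^{(k)},B^{(k)}\rangle\,.
$$
Since $J(m,m/4)$ is a cometric (Q-polynomial) scheme, $\theta_k(i)=Q_k(i)/m_k$ for an explicit family of orthogonal polynomials $Q_k$ (dual Hahn / Eberlein polynomials) with $\deg Q_k=k$ in $i$. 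Writing $c_k=\langle A^{(k)},B^{(k)}\rangle/m_k$ gives $C(i)=\sum_k c_kQ_k(i)$, and Cauchy--Schwarz together with $\|A\|_2,\|B\|_2\le 1$ (the $\ell_2$-norm for the uniform measure, bounded by $\|\cdot\|_\infty$) yields $\sum_k m_k|c_k|\le\sum_k\|A^{(k)}\|_2\|B^{(k)}\|_2\le 1$. Take $p_j(i)=\sum_{k\le d}c_kQ_k(i)$, a polynomial of degree $\le d$. Then for every $i$,
$$
|C(i)-p_j(i)|=\Big|\sum_{k>d}c_kQ_k(i)\Big|\le\Big(\max_{k>d}\tfrac{|Q_k(i)|}{m_k}\Big)\sum_{k>d}m_k|c_k|\le\max_{k>d}|\theta_k(i)|\,,
$$
so everything reduces to bounding $|\theta_k(i)|$ for $i\le m/8$ and $k>d$.

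The main obstacle --- and the technical heart of the lemma --- is exactly this: showing $\sup_{0\le i\le m/8}\sup_{k>d}|\theta_k(i)|\le 2^{-d/4}$, i.e.\ that the normalized eigenvalue polynomials $Q_k(i)/m_k$ of $J(m,m/4)$ decay geometrically in $k$ throughout the ``small-intersection'' range $i\le m/8$. The natural route is to package the $\theta_k(i)$ into a bivariate generating function attached to the distributions $\mu_i$ (obtained by placing each coordinate of $[m]$ into $x$ only, $y$ only, both, or neither, with suitable weights), and to show this generating function is analytic and uniformly bounded in a polydisk whose radius is dictated by the ratio of $m/8$ to the maximal intersection $m/4$ --- this is precisely where the hypothesis ``$i$ in the first half of its range'' is used, and it is what produces the exponent $d/4$. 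I would not reprove this estimate but invoke Razborov~\cite{razborov:qdisj} (made explicit in~\cite[Section~5]{ksw:dpt-siam}); the rest is the bookkeeping above and summing the $2^{2q}$ polynomials $p_j$ into $p$, giving $|P(i)-p(i)|\le 2^{2q}\cdot 2^{-d/4}=2^{-d/4+2q}$ for all $i\in\{0,\dots,m/8\}$.
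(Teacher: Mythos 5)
The paper does not prove this lemma; it states it and cites Razborov~\cite{razborov:qdisj} and the explicit treatment in~\cite[Section~5]{ksw:dpt-siam}, so there is no in-paper proof to compare against. Taken on its own terms, the soft part of your outline is correct and tracks the Razborov/KSW approach: the Yao--Kremer bilinear decomposition of $P(x,y)$ into boundedly many products of pointwise-bounded single-player functions is the standard first step (though be careful which form you invoke --- expanding $\Tr(\rho_x\sigma_y)$ entrywise for $2^{2q}\times 2^{2q}$ matrices naively gives $2^{4q}$ terms, not $2^{2q}$); passing to the Johnson-scheme eigenspace decomposition, using that the averaging operator ``replace a point by a random partner at intersection $i$'' is self-adjoint and block-diagonal with eigenvalues $\theta_k(i)$ of degree $k$ in $i$, truncating at degree $d$, and bounding the tail by $\max_{k>d}|\theta_k(i)|$ via Cauchy--Schwarz is a clean packaging. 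The $V_k$ are exactly the $S_m$-isotypic components isolated by Razborov's Schur-lemma symmetrization, so this is essentially his argument in function-analytic rather than operator language. (Minor: the conjugate should sit on $B^{(k)}$ in the eigenspace pairing, but the $V_k$ are conjugation-closed so this is cosmetic.)

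The genuine gap is exactly where you place it, and it is the whole ball game. The estimate $\sup_{0\le i\le m/8}\sup_{k>d}|\theta_k(i)|\le 2^{-d/4}$ is the entire quantitative content of the lemma; everything above it is bookkeeping. Citing Razborov~\cite{razborov:qdisj} for this estimate is circular: that Hahn/Eberlein eigenvalue decay on the range $i\le m/8$ \emph{is} the technical heart of Razborov's lemma (and of the~\cite{ksw:dpt-siam} analysis), not a separate off-the-shelf fact one can import from the orthogonal-polynomials literature. Proving it requires writing out explicit forms for the Johnson-scheme eigenvalues and doing real estimation work; the ``generating function bounded in a polydisk'' idea you gesture at is the right kind of thing but is not developed to the point of delivering the stated $2^{-d/4}$. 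So what you have is a faithful and useful \emph{reduction} of the lemma to a concrete combinatorial statement --- but not a proof of it.
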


If we choose degree $d=8q+4\log(1/\eps)$, then $p$ approximates $P$ to within an additive $\eps$ for all $i\in\{0,\ldots,m/8\}$.
This allows us to translate quantum protocols to polynomials.
The connection is not as clean as in the case of quantum query algorithms (Corollary~\ref{cor2Tpoly}): 
the relation between the degree and the quantum communication complexity is less tight, and the resulting polynomial 
only \emph{approximates} certain average acceptance probabilities instead of exactly representing 
the acceptance probability on each input.

Lemma~\ref{lemrazborov} allows us to prove lower bounds on quantum communication complexity by
proving lower bounds on polynomial degree. The following result~\cite{ehlich&zeller:schwankung,rivlin&cheney:approx}
is a very useful degree bound:

\begin{theorem}[Ehlich and Zeller, Rivlin and Cheney]\label{thehlichzeller}
Let $p: \R\rightarrow\R$ be a polynomial such that
$b_1\leq p(i)\leq b_2$ for every integer $0\leq i\leq n$, and
its derivative has $|p'(x)|\geq c$ for some real $0\leq x\leq n$.
Then $\deg(p)\geq\sqrt{cn/(c+b_2-b_1)}$.
\end{theorem}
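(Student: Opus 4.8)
The plan is to argue by contradiction using the classical Markov inequality for polynomials, combined with a short bootstrapping step that upgrades control at the integer points to control on the whole interval $[0,n]$. Write $B = b_2 - b_1$ and $d = \deg(p)$, and recenter by setting $\bar p = p - (b_1+b_2)/2$, so that $|\bar p(i)| \le B/2$ for every integer $0 \le i \le n$ while $\bar p' = p'$. We may assume $d^2 < n$ from the start, since otherwise $d \ge \sqrt n \ge \sqrt{cn/(c+B)}$ (as $c \le c+B$) and there is nothing to prove. Let $M = \max_{t\in[0,n]}|\bar p(t)|$, and let $x_0\in[0,n]$ be the point from the hypothesis with $|p'(x_0)| \ge c$.

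First I would invoke Markov's inequality on $[0,n]$: a degree-$d$ polynomial with $|\bar p|\le M$ on $[0,n]$ satisfies $|\bar p'(t)| \le 2d^2 M/n$ for all $t\in[0,n]$ (this is the standard bound $|q'|\le d^2\max_{[-1,1]}|q|$ for degree-$d$ polynomials on $[-1,1]$, transported by the affine change of variables $t\mapsto \tfrac{2t}{n}-1$, which contributes the factor $2/n$). In particular $|p'(t)|=|\bar p'(t)|\le 2d^2M/n$ throughout $[0,n]$.

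The key step is then to bound $M$ itself in terms of $B$ and $d^2/n$. Let $x^*\in[0,n]$ attain $M=|\bar p(x^*)|$, and pick an integer $j\in\{0,\dots,n\}$ with $|x^*-j|\le \tfrac12$. By the mean value theorem together with the Markov bound just derived, $|\bar p(x^*)-\bar p(j)| \le \tfrac12\cdot\tfrac{2d^2M}{n} = \tfrac{d^2M}{n}$, so $M \le |\bar p(j)| + \tfrac{d^2M}{n} \le \tfrac{B}{2} + \tfrac{d^2M}{n}$. Since $d^2<n$, this rearranges to $M \le \dfrac{Bn}{2(n-d^2)}$.

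Finally I would combine the two estimates at $x_0$: $c \le |p'(x_0)| \le \tfrac{2d^2}{n}\cdot\tfrac{Bn}{2(n-d^2)} = \dfrac{d^2B}{\,n-d^2\,}$, whence $c(n-d^2)\le d^2B$ and therefore $d^2 \ge \dfrac{cn}{c+B} = \dfrac{cn}{c+b_2-b_1}$, which is the claim. I do not expect a genuine obstacle: the only nontrivial external input is Markov's inequality, and the mild cleverness is confined to the bootstrap in the third paragraph. The points needing care are purely bookkeeping: ensuring $j$ and all intermediate points in the mean value theorem lie inside $[0,n]$ so that Markov applies, and dispatching the trivial regime $d^2 \ge n$ at the outset.
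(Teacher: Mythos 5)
The paper does not prove Theorem~\ref{thehlichzeller}; it merely cites Ehlich--Zeller and Rivlin--Cheney for it, so there is no in-paper proof to compare against. Your argument is correct and, as far as I can tell, is essentially the classical route to this bound. To spell out the checks: the reduction to $d^2<n$ is sound since $c/(c+B)\le 1$; recentering gives $|\bar p(i)|\le B/2$ at the integer points; Markov's inequality, transported to $[0,n]$ by the affine map $t\mapsto 2t/n-1$, correctly yields $\max_{[0,n]}|\bar p'|\le 2d^2M/n$; the mean value theorem applied between the maximizer $x^*$ and its nearest integer $j$ (which lies in $\{0,\dots,n\}$ at distance at most $1/2$, with the intermediate point $\xi$ staying in $[0,n]$) gives the bootstrap $M\le B/2 + d^2M/n$, hence $M\le Bn/(2(n-d^2))$; and feeding this back into Markov at the point $x_0$ with $|p'(x_0)|\ge c$ produces $c\le d^2B/(n-d^2)$, which rearranges to $d\ge\sqrt{cn/(c+b_2-b_1)}$. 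One cosmetic remark: the bootstrap step, showing that boundedness on the integer grid forces boundedness on the whole interval with only a $n/(n-d^2)$ loss, is itself the Ehlich--Zeller lemma, so your proof is really Markov plus that lemma, which is exactly the lineage the theorem's attribution suggests.
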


\paragraph{The function and its unbounded-error protocol}
To obtain the promised separation, we use a distributed version of the ODD-MAX-BIT function of Beigel~\cite{beigel:perceptrons}.
Let $x,y\in{\bitset}^n$, and $k=\max\{i\in[n] : x_i=y_i=1\}$ be the rightmost position where $x$ and $y$ both have a 1
(set $k=0$ if there is no such position).
Define $f(x,y)$ to be the least significant bit of $k$, i.e., whether this $k$ is odd or even.
Buhrman et al.~\cite{bvw:smallbias} proved:

\begin{theorem}[BVW]
For the distributed ODD-MAX-BIT function we have $\UPP(f)=O(\log n)$ and $\PP(f)=\Omega(n^{1/3})$.
\end{theorem}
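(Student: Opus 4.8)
The plan is to prove the two bounds by quite different routes: the $\UPP$ upper bound by an explicit small sign-rank construction, and the $\PP$ lower bound through the quantum machinery (Lemma~\ref{lemrazborov} together with Theorem~\ref{thehlichzeller}).

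For $\UPP(f)=O(\log n)$, I would use the fact noted above that $\UPP(f)$ equals $\log$ of the sign-rank of $f$ up to an additive constant, so it suffices to exhibit a rank-$(n+1)$ real matrix whose sign pattern is $(-1)^{f(x,y)}$. This is exactly the classical ``high-weight linear threshold'' representation of ODD-MAX-BIT: set $S(x,y)=\sum_{i=0}^{n}u_i(x)v_i(y)$ with $u_0(x)=v_0(y)=1$ and, for $i\geq 1$, $u_i(x)=(-1)^i 2^i x_i$, $v_i(y)=y_i$, so that $S(x,y)=1+\sum_{i=1}^{n}(-1)^i 2^i x_i y_i$. If $k=\max\{i:x_i=y_i=1\}$ (and $k=0$ if there is no such $i$), the term $i=k$ contributes $\pm 2^k$, which in absolute value exceeds $1+\sum_{i<k}2^i=2^k-1$; hence $\mathrm{sign}(S(x,y))=(-1)^k=(-1)^{f(x,y)}$, with the ``$+1$'' taking care of $k=0$. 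The matrix $(S(x,y))_{x,y}$ has rank at most $n+1$, so the sign-rank of $f$ is at most $n+1$ and $\UPP(f)=O(\log n)$. (If one prefers an explicit protocol to a matrix, feed the normalized vectors $u_x/\norm{u_x}$, $v_y/\norm{v_y}$ into the standard unbounded-error protocol of Paturi and Simon.)

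For $\PP(f)=\Omega(n^{1/3})$, suppose towards a contradiction that $f$ has a $c$-bit weakly-unbounded-error protocol; by definition its bias satisfies $\beta\geq 2^{-c}$ on every input, and regarding this classical protocol as a $q$-qubit quantum protocol with $q=c$ we get acceptance probabilities $P(x,y)\geq 1/2+\beta$ when $f(x,y)=1$ and $\leq 1/2-\beta$ when $f(x,y)=0$. The key step is a local encoding $x\mapsto\phi(x)$, $y\mapsto\psi(y)$ of weight-$m/4$ strings in $\{0,1\}^m$, for a suitable $m$ polynomially related to $n$, into $n$-bit ODD-MAX-BIT inputs, arranged so that as the intersection size $i=|x\wedge y|$ runs over $\{0,1,\dots,m/8\}$ the value $f(\phi(x),\psi(y))$ changes at a constant fraction of consecutive values of $i$; here one exploits precisely the ``ODD'' in ODD-MAX-BIT, since each increase of the intersection size can be made to push the topmost relevant coordinate up by one level and flip the parity. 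Composing the protocol with this encoding yields a $q$-qubit protocol for the encoded function, and feeding it into Lemma~\ref{lemrazborov} gives, for every $d\leq m/4$, a degree-$d$ univariate polynomial $p$ with $|P(i)-p(i)|\leq 2^{-d/4+2q}$ on $\{0,\dots,m/8\}$, where $P(i)$ is the intersection-size-$i$ average of the (encoded) acceptance probabilities. Choosing $d=\Theta(q+\log(1/\beta))=\Theta(c)$ makes this error smaller than $\beta/2$, so $p-1/2$ still changes sign at a constant fraction of the integers in $\{0,\dots,m/8\}$; counting roots (or applying Theorem~\ref{thehlichzeller} with $b_2-b_1=O(1)$ and steepness of order $\beta$ over an interval of length $m/8$) forces $\deg(p)=\Omega(m)$, hence $c=\Omega(m)$, and with $m$ chosen as large as the encoding allows this is $\Omega(n^{1/3})$.

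The delicate point — and the one I expect to be the main obstacle — is the encoding. Lemma~\ref{lemrazborov} only controls acceptance probabilities \emph{averaged over the uniform distribution on each intersection-size slice of $\{0,1\}^m$}, so the encoding must simultaneously (i) keep the instances it produces on those uniform slices, (ii) make ODD-MAX-BIT genuinely track the parity of the slice parameter, and (iii) cost only a polynomial blow-up from $m$ to $n$ — a naive ``one ODD-MAX-BIT coordinate per possible witnessing subset'' scheme is exponential, which would collapse the whole argument to the trivial $\Omega(\log n)$. It is the cost of a correct, compact encoding, combined with the parameters in Lemma~\ref{lemrazborov} and Theorem~\ref{thehlichzeller}, that produces the exponent $1/3$; everything else — the sign-rank computation, the bookkeeping inside Razborov's Lemma, and the Ehlich--Zeller step — is routine once the encoding is in place.
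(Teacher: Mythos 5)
Your $\UPP$ upper bound is correct, though it takes a different route than the paper's. The paper gives a direct randomized protocol (Alice samples index $i$ with probability $\propto 2^i$, sends $(i,x_i)$ in $O(\log n)$ bits, and Bob outputs the least significant bit of $i$ if $x_i=y_i=1$, else a coin flip), achieving exponentially small but positive bias. You instead exhibit a rank-$(n+1)$ sign-representing matrix and invoke the Paturi--Simon equivalence $\UPP(f)\approx\log(\text{sign-rank})$. Both are standard and both work; the paper's protocol is arguably more elementary since it avoids the sign-rank characterization.

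The $\PP$ lower bound is where the gap lies, and you have correctly diagnosed it yourself: you never construct the encoding, and in fact no such single-shot encoding plays a role in the actual proof. The paper's argument is structurally different. It does not reduce set-intersection to ODD-MAX-BIT via a map that makes the answer track the parity of the intersection size and then count sign-changes of a single polynomial. Instead it runs an \emph{iterative bias-amplification} argument: partition $[n]$ into $n/2m$ blocks of length $2m$ with alternating parities pre-zeroed, and proceed block by block. At step $j$ the bits in blocks $1,\dots,j-1$ are fixed, blocks $>j$ are all zero, and one applies Lemma~\ref{lemrazborov} to the protocol restricted to block $j$. The key observations are that $P(0)$ inherits the (signed) bias from step $j-1$ (say $\le -2^{j-1}\beta$), while $P(i)\ge\beta$ for all $i\ge 1$ (because any nonempty intersection in block $j$ makes the rightmost common $1$ odd). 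Combining the low-degree approximation with Theorem~\ref{thehlichzeller} then shows some $i\in[m/8]$ must have $P(i)\ge 2^j\beta$, so the bias \emph{doubles} at each step. Since biases are bounded by $1/2$, this forces $2m\log(1/2\beta)\geq n$; combined with $m=\Theta(d^2)$ and $d=\Theta(q+\log(1/\beta))$ one gets $(q+\log(1/\beta))^3=\Omega(n)$. So the exponent $1/3$ comes from $m\sim d^2$ together with the linear number of doubling steps, not from a cubic-cost encoding.

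Your one-shot scheme also has a quantitative problem even if the encoding existed: you want Theorem~\ref{thehlichzeller} to give $\deg(p)=\Omega(m)$ from a polynomial with swings of order $\beta$ bounded in $[0,1]$, but that theorem with steepness $c\sim\beta$ and range $b_2-b_1=O(1)$ over $\{0,\dots,m/8\}$ yields only $\deg(p)=\Omega(\sqrt{\beta m})$, which degrades to nothing as $\beta\to 0$. The paper sidesteps this precisely by the iterative rescaling: at the step where the argument bites, the polynomial $r=(p-\beta/2)/(2^{j-1}\beta)$ has been normalized so its swings and its bound are both $\Theta(1)$, and only then does Ehlich--Zeller give $\deg(r)>\!d$ and a contradiction. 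Without the iteration you cannot arrange this normalization, so the approach as you've sketched it does not close.
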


The upper bound is easy:
For $i\in[n]$, define probabilities $p_i=c2^i$, where $c=1/\sum_{i=1}^n 2^i$ is a normalizing constant.
Consider the following protocol.
Alice picks a number $i\in[n]$ with probability $p_i$ and sends over $(i,x_i)$ using $\ceil{\log n} + 1$ bits.
If $x_i=y_i=1$ then Bob outputs the least significant bit of $i$, otherwise he outputs a fair coin flip.
This computes $f$ with positive (but exponentially small) bias.
Hence $\UPP(f)\leq\ceil{\log n} + 1$.

\paragraph{Weakly-unbounded-error lower bound}
It will be convenient to lower-bound the complexity of \emph{quantum} protocols computing $f$ with weakly-unbounded error,
since this allows us to use Lemma~\ref{lemrazborov}.\footnote{We could just analyze classical protocols 
and use the special case of Razborov's result that applies to classical protocols.
However, the classical version of Razborov's lemma was not known prior to~\cite{razborov:qdisj},
and arguably would not have been discovered if it were not for the more general quantum version.
We would end up with the same communication lower bound anyway, since quantum and classical weakly-unbounded 
error complexity turn out to be essentially the same~\cite{klauck:qcclowerj,ipry:qpp}.}
Consider a quantum protocol with $q$ qubits of communication that computes $f$ with bias $\beta>0$.
Let $\beta(x,y)=P(x,y)-1/2$. Then $\beta(x,y)\geq\beta$ if $f(x,y)=1$, and $\beta(x,y)\leq-\beta$ if $f(x,y)=0$.
Our goal is to lower-bound $q+\log(1/\beta)$.
We will do that below by a process that iteratively fixes some of the input bits, 
applying Lemma~\ref{lemrazborov} to each intermediate function to find a good fixing, 
in a way that produces a bounded function with larger and larger ``swings.''

Define $d=\ceil{8q+4\log(2/\beta)}$ and $m=32d^2+1$.
Assume for simplicity that $2m$ divides $n$.
We partition $[n]$ into $n/2m$ consecutive intervals (or ``blocks''), each of length $2m$.
In the first interval (from the left), fix the bits $x_i$ and $y_i$ to 0 for even $i$;
in the second, fix $x_i$ and $y_i$ to 0 for odd $i$;
in the third, fix $x_i$ and $y_i$ to 0 for even $i$, etc.
In the $j$-th interval there are $m$ unfixed positions left for each party.
Let $x^{(j)}$ and $y^{(j)}$ denote the corresponding $m$-bit strings in $x$ and $y$, respectively.

We will define inductively, for all $j=1,2,\dots,n/2m$, particular
strings $x^{(j)}$ and $y^{(j)}$ as follows.
Let $X^{j}$ and $Y^{j}$ denote $n$-bit strings where
the first $j$ blocks are set to $x^{(1)},\dots, x^{(j)}$ and
$y^{(1)},\dots, y^{(j)}$, respectively, and all the other blocks are set to $0^{2m}$.
In particular, $X^{0}$ and $Y^{0}$ are all zeros.
We will define $x^{(j)}$ and $y^{(j)}$ so that
$$
\beta(X^j,Y^j)\ge 2^j\beta\qquad  \text{ or }\qquad
\beta(X^j,Y^j)\le -2^j\beta
$$
depending on whether $j$ is odd or even.
This holds automatically for $j=0$, which is the base case of the induction.

Now assume $x^{(1)},\dots,x^{(j-1)}$
and $y^{(1)},\dots,y^{(j-1)}$ are already defined on previous steps.
On the current step, we have to define $x^{(j)}$ and $y^{(j)}$.
Without loss of generality assume that $j$ is odd, thus we have
$\beta(X^{j-1},Y^{j-1})\le -2^{j-1}\beta$.
Consider each $i=0,1,\dots,m/4$.
Run the protocol on the following distribution:
$x^{(j)}$ and $y^{(j)}$ are chosen randomly subject to each having Hamming weight $m/4$ and having
intersection size $i$; the blocks with indices smaller than $j$ are
fixed (on previous steps), and the blocks with indices larger than $j$ are set to zero.
Let $P(i)$ denote the expected value (over this distribution) of $\beta(x,y)$ as a function of $i$.
Note that for $i=0$ we have $P(i)=\beta(X^{j-1},Y^{j-1})\le -2^{j-1}\beta$.
On the other hand, for each $i>0$ the expectation is taken over $x,y$ with $f(x,y)=1$,
because the rightmost intersecting point is in the $j$-th interval and hence odd
(the even indices in the $j$-th interval have all been fixed to 0).
Thus $P(i)\geq\beta$ for those $i$.
Now assume, by way of contradiction, that $\beta(X^j,Y^j)<2^{j}\beta$
for all $x^{(j)}, y^{(j)}$ and hence $P(i)<2^{j}\beta$ for all such $i$.
By Lemma~\ref{lemrazborov}, for our choice of $d$, we can approximate $P(i)$ to within additive error
of $\beta/2$ by a polynomial $p$ of degree $d$.
(Apply Lemma~\ref{lemrazborov} to the protocol obtained from the original
protocol by fixing all bits outside the $j$-th block.)
Let $r$ be the degree-$d$ polynomial
$$
r = \frac{p-\beta/2}{2^{j-1}\beta}.
$$
From the properties of $P$ and the fact that $p$ approximates $P$ up to $\beta/2$,
we see that $r(0)\leq -1$ and $r(i)\in[0,2]$ for all $i\in[m/8]$.
But then by the degree lower bound of Theorem~\ref{thehlichzeller}, 
$$
\deg(r)\geq\sqrt{(m/8)/4}=\sqrt{d^2+1/32}>d\,,
$$
which is a contradiction.
Hence there exists an intersection size $i\in[m/8]$ where $P(i)\geq 2^{j}\beta$.
Thus there are particular $x^{(j)},y^{(j)}$ with $\beta(X^j,Y^j)\ge 2^{j}\beta$, concluding the induction step.

For $j=n/2m$ we obtain $|\beta(X^j,Y^j)|\ge 2^{n/2m}\beta$.
But for every $x,y$ we have $|\beta(x,y)|\leq 1/2$ because $1/2+\beta(x,y)=P(x,y)\in[0,1]$, 
hence $1/2\geq 2^{n/2m}\beta$. This implies $2m\log(1/2\beta)\geq n$, and therefore
$$
(q+\log(1/\beta))^3  \geq  (q+\log(1/\beta))^2 \log(1/\beta)
                      =    \Omega(m \log(1/\beta))
                      =    \Omega(n)\,.
$$
This allows us to conclude $\PP(f)=\Omega(n^{1/3})$, which is exponentially larger than $\UPP(f)$.

\section{Other applications}\label{secotherapps}

In this section we go over a few further examples of the use of quantum techniques in non-quantum results,
examples that do not fit in the two broad categories of the previous two sections.
First we give two examples of classical results that were both \emph{inspired} by earlier quantum proofs, 
but do not explicitly use quantum techniques.%
\footnote{This is reminiscent of a famous metaphor in Ludwig Wittgenstein's \emph{Tractatus logico-philosophicus} (6.45), 
about the ladder one discards after having used it to climb to a higher level.}
Finally, in Section~\ref{ssecfurtherlit} we give a brief guide to further literature with more examples.

\subsection{The relational adversary}\label{secadversary}

In Section~\ref{ssecqueryalgos} we described the \emph{polynomial method} for proving lower bounds for quantum query complexity.  This method has a strength which is also a weakness: it applies even to a stronger (and less physically meaningful) model of computation where we allow \emph{any linear transformation} on the state space, not just unitary ones.  As a result, it does not always provide the strongest possible lower bound for quantum query algorithms.

Ambainis~\cite{ambainis:lowerboundsj, ambainis:degreevsqueryj}, building on an earlier method of Bennett et al. \cite{bbbv:str&weak}, addressed this problem by providing a general method for quantum lower bounds, the \emph{quantum adversary}, which exploits unitarity in a crucial way and which in certain cases yields a provably better bound than the polynomial method~\cite{ambainis:degreevsqueryj}.  Surprisingly, Aaronson~\cite{aaronson:localsearchj} was able to modify Ambainis's argument to obtain a new method, the \emph{relational adversary},  for proving \textit{classical} randomized query lower-bounds.  He used this method to give improved lower bounds on the complexity of the ``local search'' problem, in which one is given a real-valued function $F$ defined on the vertices of a graph $G$, and must locate a local minimum of $F$.  In this section we state Ambainis's lower-bound criterion, outline its proof, and describe how Aaronson's method follows a similar outline.\footnote{In both cases we state a simplified (``unweighted'') version of the lower-bound method in question, which conveys the essence of the technique.
After Ambainis's original paper, a version of the lower bound was formulated that allows even \emph{negative} weights~\cite{hls:madv}. 
Reichardt~\cite{reichardt:tight} recently proved that this general adversary bound in fact characterizes quantum query complexity.}

Recall from Section~\ref{ssecqueryalgos} that a quantum query algorithm is a sequence of unitary operations 
$$
U_TO_xU_{T-1}O_x\cdots O_xU_1O_xU_0\,,
$$
applied to the fixed starting state $\ket{0\ldots 0}$, where the basic ``query transformation'' $O_x$ depends on the input $x$ and $U_0,U_1, \ldots, U_T$ are arbitrary unitaries.  Ambainis invites us to look simultaneously at the evolution of our quantum state under all possible choices of $x$; formally, we let $\ket{\psi^t_x}$ denote the state at time $t$ (i.e., after applying $O_x$ for the $t$-th time) under input $x$.  In particular, $\ket{\psi^0_x} = \ket{0\ldots 0}$ for all $x$ (and $\braket{\psi^0_x}{\psi^0_y} = 1$ for each $x, y$).  Now if the algorithm computes the Boolean function $f$ with success probability $2/3$ on every input, then the final measurement must accept any pair $x \in f^{-1}(0), y \in f^{-1}(1)$ with success probabilities differing by at least $1/3$.  It is not hard to verify that this implies $|\braket{\psi^T_x}{\psi^T_y}| \leq {17}/{18}$.\footnote{Contrapositively, and in greater generality, if $|\braket{\psi_1}{\psi_2}| \geq 1 - \eps$ then under any measurement, $\ket{\psi_1}$ and $\ket{\psi_2}$ have acceptance probabilities differing by at most $\sqrt{2\eps}$ (see~\cite[Section~9.2]{nielsen&chuang:qc}).} This suggests that, for any given relation $R \subseteq f^{-1}(0) \times f^{-1}(1)$ (which one may regard as a bipartite graph), we consider the \emph{progress measure}
$$
S_t = \sum_{(x, y) \in R}|\braket{\psi^t_x}{\psi^t_y}|
$$
as a function of $t$.  By our observations, initially we have $S_0 = |R|$, and in the end we must have $S_T \leq ({17}/{18})\cdot|R|$.  Also, crucially, the progress measure is \emph{unaffected} by each application of a unitary $U_t$, since each $U_t$ is independent of the input and unitary transformations preserve inner products.

If we can determine an upper-bound $\Delta$ on the change $|S_{t + 1} - S_t|$ in the progress measure at each step, we can conclude that the number $T$ of queries is at least ${|R|}/{18\Delta}$.  Ambainis~\cite{ambainis:lowerboundsj, ambainis:degreevsqueryj} provides a condition on $R$ that allows us to derive such a bound:

\begin{theorem}[Ambainis]\label{qadv}  Suppose that the relation $R$ satisfies
\begin{itemize}
\item[(i)] each $x \in f^{-1}(0)$ appearing in $R$ appears at least $m_0$ times in $R$;
\item[(ii)] each $y \in f^{-1}(1)$ appearing in $R$ appears at least $m_1$ times in $R$;
\item[(iii)] for each $x \in f^{-1}(0)$ and $i \in [n]$, there are at most $\elll_0$ inputs $y \in f^{-1}(1)$ such that $(x, y) \in R$ and $x_i\neq y_i$;
\item[(iv)] for each $y \in f^{-1}(1)$ and $i \in [n]$, there are at most $\elll_1$ inputs $x \in f^{-1}(0)$ such that $(x, y) \in R$ and $x_i\neq y_i$.
\end{itemize}
Let the progress measure $S_t$ be defined relative to a fixed quantum algorithm as above. 
Then for all $t \geq 0$, 
$$
|S_{t + 1} - S_t| = O \left(\sqrt{   \frac{\elll_0}{m_0} \cdot
    \frac{\elll_1}{m_1}    } \cdot  |R|   \right)\,,
\quad\text{and therefore}\quad T = \Omega \left(\sqrt{
    \frac{m_0}{\elll_0} \cdot \frac{m_1}{\elll_1}    }   \right)\,.
$$
\end{theorem}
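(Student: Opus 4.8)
The plan is to use the \emph{progress measure} $S_t=\sum_{(x,y)\in R}|\braket{\psi^t_x}{\psi^t_y}|$ exactly as motivated before the statement: I already know $S_0=|R|$, that $S_T\le(17/18)|R|$ (since any $(x,y)\in R$ has $|\braket{\psi^T_x}{\psi^T_y}|\le 17/18$), and that the input-independent unitaries $U_t$ leave $S_t$ unchanged. So everything reduces to bounding the change caused by a single oracle call, and showing it is $O\!\bigl(\sqrt{(\elll_0/m_0)(\elll_1/m_1)}\,|R|\bigr)$; dividing the total drop $|R|/18$ by this per-step bound then yields $T=\Omega\!\bigl(\sqrt{(m_0/\elll_0)(m_1/\elll_1)}\bigr)$.

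To bound one step, fix the state $\ket{\phi_x}$ just before the $(t+1)$-st query (so $S_t=\sum_{(x,y)\in R}|\braket{\phi_x}{\phi_y}|$ and $\ket{\psi^{t+1}_x}=O_x\ket{\phi_x}$), and write the query as $O_x=\sum_{i=1}^n\ketbra{i}{i}\otimes O_{x_i}$ acting on the index register and the target register, where $O_0=I$ and $O_1$ is the bit flip (extended by the identity on any workspace). Decomposing $\ket{\phi_x}=\sum_i\ket{\phi_{x,i}}$ by the value in the index register, the fact that $O_x^*O_y$ is block-diagonal in that register and acts as $I$ on blocks with $x_i=y_i$ gives
$$
\braket{\phi_x}{\phi_y}-\braket{\psi^{t+1}_x}{\psi^{t+1}_y}=\sum_{i\,:\,x_i\neq y_i}\Bigl(\braket{\phi_{x,i}}{\phi_{y,i}}-\bra{\phi_{x,i}}(I\otimes O_{x_i}^*O_{y_i}\otimes I)\ket{\phi_{y,i}}\Bigr).
$$
Since $O_{x_i}^*O_{y_i}$ is unitary, each summand has modulus at most $2\norm{\phi_{x,i}}\norm{\phi_{y,i}}$, and the reverse triangle inequality (first for each pair, then summed over $R$) gives $|S_{t+1}-S_t|\le 2\sum_{(x,y)\in R}\sum_{i\,:\,x_i\neq y_i}\norm{\phi_{x,i}}\norm{\phi_{y,i}}$.

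I would then finish with a weighted AM--GM estimate. For any $\lambda>0$, $2\norm{\phi_{x,i}}\norm{\phi_{y,i}}\le\lambda\norm{\phi_{x,i}}^2+\lambda^{-1}\norm{\phi_{y,i}}^2$. Summing the first term over $(x,y)\in R$ and over $i$ with $x_i\neq y_i$: condition~(iii) says each index $i$ admits at most $\elll_0$ partners $y$, so the sum is at most $\elll_0\sum_x\norm{\phi_x}^2$, the sum ranging over the distinct $0$-inputs appearing in $R$, which equals $\elll_0$ times their number (as each $\norm{\phi_x}=1$), hence at most $\elll_0|R|/m_0$ by condition~(i). Symmetrically, conditions~(iv) and~(ii) bound the second term by $\elll_1|R|/m_1$. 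Thus $|S_{t+1}-S_t|\le\lambda\,\elll_0|R|/m_0+\lambda^{-1}\elll_1|R|/m_1$, and choosing $\lambda=\sqrt{\elll_1 m_0/(\elll_0 m_1)}$ gives the claimed per-step bound $2\sqrt{(\elll_0/m_0)(\elll_1/m_1)}\,|R|$, completing the argument.

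The only step needing genuine care is the single-query estimate in the middle: one has to notice that the inner-product change localizes onto the coordinates where $x$ and $y$ disagree (this is where unitarity of the oracle enters, in contrast to the polynomial method), and then spend the resulting factors through a \emph{weighted} Cauchy--Schwarz so that $m_0,m_1$ attach to the ``multiplicity'' conditions (i),(ii) and $\elll_0,\elll_1$ to the ``load'' conditions (iii),(iv). Everything else --- the endpoint values of $S_t$, the invariance under the $U_t$'s, and the final division --- is either already supplied before the statement or routine bookkeeping.
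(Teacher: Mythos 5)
Your argument is correct, and it supplies exactly the ingredient the paper leaves out: the paper only sets up the progress measure, records $S_0 = |R|$ and $S_T \le (17/18)|R|$, notes invariance under the $U_t$, and then \emph{cites} Ambainis for the per-query bound rather than proving it. Your derivation of that bound is the standard adversary argument from Ambainis's papers and it checks out in every step: the block-diagonal decomposition $O_x^*O_y = \sum_i \ketbra{i}{i}\otimes O_{x_i}^*O_{y_i}$ correctly localizes the change in $\braket{\phi_x}{\phi_y}$ to indices with $x_i\neq y_i$; each summand is bounded by $2\norm{\phi_{x,i}}\norm{\phi_{y,i}}$ by unitarity and Cauchy--Schwarz; and the weighted AM--GM with the free parameter $\lambda$ lets conditions (iii)/(iv) absorb the $\elll_0,\elll_1$ factors while (i)/(ii) convert the count of distinct inputs into $|R|/m_0$ and $|R|/m_1$, with $\lambda=\sqrt{\elll_1 m_0/(\elll_0 m_1)}$ giving the stated geometric-mean bound $2\sqrt{(\elll_0/m_0)(\elll_1/m_1)}\,|R|$. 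Dividing the total drop $|R|/18$ by this yields $T \ge \tfrac{1}{36}\sqrt{(m_0/\elll_0)(m_1/\elll_1)}$, as required. In short: correct proof, consistent with the paper's framework and filling in the part it explicitly delegates to the reference.
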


The art in applying Ambainis's technique lies in choosing the relation $R$ carefully to maximize this quantity.  Intuitively, conditions (i)-(iv) imply that $|S_{t + 1} - S_t|$ is small relative to $|R|$ by bounding the ``distinguishing ability'' of any query.

Every classical bounded-error algorithm is also a quantum bounded-error query algorithm, so the above lower-bound also applies in the classical case.  However, there are cases where this gives an inferior bound.  For example, for the promise problem of inverting a permutation, the above technique yields a query bound of $\Omega(\sqrt{n})$, which matches the true quantum complexity of the problem, while the classical randomized complexity is $\Omega(n)$.  In this and similar cases, the particular relation $R$ used in applying Theorem~\ref{qadv} for a quantum lower bound, is such that $\max \{ {m_0}/{\elll_0}, {m_1}/{\elll_1}  \}$ gives a good estimate of the \emph{classical} query complexity.  This led Aaronson to prove in~\cite{aaronson:localsearchj} a classical analogue of Ambainis's lower bound, in which the geometric mean of  ${m_0}/{\elll_0}$ and ${m_1}/{\elll_1}$ is indeed replaced with their maximum.

A sketch of Aaronson's proof is as follows.  Fixing the relation $R$, we use Yao's minimax principle to show that, if there is a randomized $T$-query bounded-error algorithm for computing $f$, then there is a \emph{deterministic} algorithm $A$ succeeding with high probability on a specific input distribution determined by $R$ (to be precise, pick a uniformly random pair $(x, y) \in R$ and select $x$ or $y$ with equal probability).  We now follow Ambainis, and consider for each input $x$ and $t \leq T$ the state $v_{t, x}$ (which is now a fixed, classical state) of the algorithm $A$ after $t$ steps on input $x$.  Let $I_{t, x, y}$ equal 1 if inputs $x$ and $y$ have not been distinguished by $A$ after $t$ steps, otherwise $I_{t, x, y} = 0$.  Define $S_t = \sum_{(x, y) \in R} I_{t, x, y}$ as our progress measure.\footnote{Actually, Aaronson~\cite{aaronson:localsearchj} defines an \emph{increasing} function,  but we modify this to show the similarity with Ambainis's proof.  We note that if the states $v_{t, x}$ are written as quantum amplitude vectors, and the state of the algorithm $A$ records all the information it sees, $I_{t, x, y}$ can actually be written as an inner product of quantum states just as in Ambainis's proof.} 

Similarly to the quantum adversary, we have $S_0 = |R|$, and Aaronson argues that the success condition of $A$ implies $S_T \leq (1 - \Omega(1))|R|$.  A combinatorial argument then yields the following result bounding the maximum possible change $|S_{t + 1} - S_t|$ after one (classical) query:

\begin{theorem}[Aaronson]\label{radv} Suppose that the relation $R$ obeys conditions (i)-(iv) in Theorem~\ref{qadv}.  Let the progress measure $S_t$ be defined relative to a deterministic algorithm $A$. 
Then for all $t \geq 0$, 
$$
|S_{t + 1} - S_t| = O \left( \min \left\{ \frac{\elll_0}{m_0},
  \frac{\elll_1}{m_1}    \right\}  \cdot |R|  \right)\quad\text{and
  therefore}\quad T = \Omega\left( \max \left\{ \frac{m_0}{\elll_0},
  \frac{m_1}{\elll_1}    \right\}   \right)\,.
$$
\end{theorem}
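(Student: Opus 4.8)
The plan is to follow the outline already sketched in the text, supplying the two quantitative ingredients (the endpoint estimate on $S_T$ and the per-step bound on $|S_{t+1}-S_t|$), with the per-step bound carrying essentially all the content. First I would invoke Yao's minimax principle in its easy direction: since there is a randomized $T$-query algorithm computing $f$ with error at most $1/3$ on every input, for the specific distribution $\mu$ that picks $(x,y)\in R$ uniformly and then outputs $x$ or $y$ with equal probability, there is a \emph{deterministic} $T$-query decision tree $A$ with error at most $1/3$ under $\mu$. For this $A$, define $I_{t,x,y}=1$ if the transcript (the sequence of query positions and their answers) produced by $A$ in its first $t$ steps is identical on inputs $x$ and $y$, and $0$ otherwise, and set $S_t=\sum_{(x,y)\in R}I_{t,x,y}$. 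Since a distinguished pair stays distinguished, $S_t$ is non-increasing, and $S_0=|R|$ because both transcripts are empty before the first query.

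Next I would pin down the endpoint. If $(x,y)\in R$ is still undistinguished after all $T$ queries, then $A$ produces the same output bit on $x$ and on $y$; since $f(x)\neq f(y)$, $A$ is wrong on exactly one of the two, so conditioned on this pair being selected, $A$ errs with probability exactly $1/2$. Hence $\tfrac12\cdot S_T/|R|\le \Pr_\mu[A\text{ errs}]\le 1/3$, which gives $S_T\le \tfrac23|R|$ and therefore $S_0-S_T\ge |R|/3$.

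The heart of the proof is bounding $|S_{t+1}-S_t|=S_t-S_{t+1}$. The key observation — and the place where determinism (rather than mere unitarity) is exploited, which is precisely why the classical bound features a maximum where the quantum bound of Theorem~\ref{qadv} had a geometric mean — is the following: if $(x,y)\in R$ has $I_{t,x,y}=1$, then $A$'s run on $y$ coincides with its run on $x$ through step $t$, so the position $i$ queried at step $t+1$ is the same on both inputs and, in fact, depends only on $x$ (equivalently, only on $y$); moreover this pair becomes distinguished at step $t+1$ exactly when $x_i\neq y_i$. Grouping the newly-distinguished pairs by their first coordinate: for a fixed $x$, the step-$(t+1)$ query position $i$ is determined, so by condition (iii) at most $\elll_0$ partners $y$ satisfy $(x,y)\in R$ and $x_i\neq y_i$; and by condition (i) at most $|R|/m_0$ distinct values of $x$ occur in $R$. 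This yields $S_t-S_{t+1}\le \elll_0|R|/m_0$. The symmetric grouping by the second coordinate, using (iv) and (ii), gives $S_t-S_{t+1}\le \elll_1|R|/m_1$, so $|S_{t+1}-S_t|\le \min\{\elll_0/m_0,\elll_1/m_1\}\cdot|R|$, as claimed. Combining with $|R|/3\le S_0-S_T\le T\cdot\min\{\elll_0/m_0,\elll_1/m_1\}\cdot|R|$ gives $T\ge \tfrac13\max\{m_0/\elll_0,m_1/\elll_1\}=\Omega(\max\{m_0/\elll_0,m_1/\elll_1\})$.

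I expect the main obstacle to be the clean treatment of the ``the next query depends only on $x$'' step and the associated counting: one must be careful that it is exactly the restriction to \emph{undistinguished} pairs that legitimizes pulling the query position out to a single side, that the two groupings correctly pair condition (iii) with (i) and condition (iv) with (ii), and that ``newly distinguished at step $t+1$'' is accounted for without double counting. Everything else — the minimax reduction and the endpoint estimate — is routine bookkeeping.
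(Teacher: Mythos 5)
Your proposal is correct and follows essentially the same route as the paper's sketch (and Aaronson's original argument): Yao's minimax to pass to a deterministic tree, the decreasing progress measure $S_t = \sum_{(x,y)\in R} I_{t,x,y}$ with $S_0=|R|$ and $S_T\le\tfrac23|R|$, and a per-step bound obtained by noting that for still-undistinguished pairs the next query position is a function of $x$ alone (resp.\ $y$ alone), so that pairing condition~(iii) with~(i) (resp.\ (iv) with~(ii)) gives each of the two bounds $\elll_0|R|/m_0$ and $\elll_1|R|/m_1$, and taking the better one yields the min, hence the max in the query lower bound. The counting is clean and the determinism-exploiting observation you highlight is exactly the step that replaces the geometric mean in Theorem~\ref{qadv} by a maximum.
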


The details of Aaronson's proof are somewhat different from those of Theorem~\ref{qadv}, and there is no explicit use of quantum states, but the spirit is clearly similar, illustrating that the quantum query model is a sufficiently natural generalization of the classical model for ideas to flow in both directions.
Subsequently, Laplante and Magniez~\cite{laplante&magniez:lower} gave a different treatment of this based on Kolmogorov complexity, 
which brings out the analogy between the quantum and classical adversary bounds even more closely.

\subsection{Proof systems for the shortest vector problem}

A \textit{lattice} is an additive subgroup of $\mathbb{R}^n$ consisting of all integer combinations of a linearly independent set of $n$ vectors.  It can be shown that for every lattice $L$, there exists a value $\lambda(L) > 0$, the \emph{minimum (Euclidean) distance} of the lattice, such that: (i) any two distinct $x, y \in L$ are at distance at least $\lambda(L)$ from each other; (ii) there exists $x \in L$ such that $\norm{x} = \lambda(L)$.  Lattice vectors of length $\lambda(L)$ are called ``shortest (nonzero) vectors'' for $L$, and the problem of computing the minimum lattice distance is also known as the ``shortest vector problem'' (SVP).  

The problem of approximating $\lambda(L)$ to within a multiplicative factor $\gamma(n)$ can be equivalently formulated as a \emph{promise problem} called $\GapSVP_{\gamma(n)}$, in which we are given a basis for a lattice $L$, with the ``promise'' that either $\lambda(L) \leq 1$ or $\lambda(L) \geq \gamma(n)$, and we must determine which case holds.  A related problem is the ``closest vector problem'' (CVP), in which we are given a basis for a lattice $L$, and a ``target'' vector $v \notin L$, and would like to approximate the distance $d(v, L)$ from $v$ to the closest vector in the lattice.  In the promise problem $\GapCVP_{\gamma(n)}$, we are given $L$ and $v$ and must distinguish the case where $d(v, L) \leq 1$ from the case where $d(v, L) \geq \gamma(n)$.  It is known~\cite{GMSS99} that $\GapSVP_{\gamma(n)}$ reduces to $\GapCVP_{\gamma(n)}$ for any approximation factor $\gamma(n)$.

Approximate solutions to closest and shortest vector problems have numerous applications in pure and applied mathematics.  However, $\GapSVP_{\gamma(n)}$ has been proven intractable even for super-constant approximation factors $\gamma(n)$, albeit under an assumption somewhat stronger than $\Polytime \neq \NP$~\cite{khot:hardlatticej, HavivR07} (namely that $\NP$-complete problems are not solvable by a randomized algorithm in time $2^{\textrm{polylog}(n)}$.)  Even computing an estimate with an \emph{exponential} approximation guarantee in polynomial time is a highly nontrivial task, achieved by the celebrated LLL algorithm~\cite{lll:lll}; the best current polynomial-time algorithm gives only slightly subexponential approximation factors~\cite{aks:sieve}.  A nearly exponential gap remains between the efficiently achievable approximation ratios and those for which we have complexity-theoretic evidence of hardness.  Also, despite intense effort, no quantum algorithms have been found which improve significantly on their classical counterparts.

A breakthrough work of Ajtai~\cite{ajtai:hardlattice} initiated a sequence of papers~\cite{ajtaidwork:lattice97, regev:lattice04, micciancio:almostperfect, regev:lattice05j, ajtaidwork:lattice07, peikert:svpcrypto} which gave a strong motivation to better understand the approximability of various lattice problems.  These papers build cryptosystems which, remarkably, possess strong \textit{average-case} security, based only on the assumption that certain lattice problems are hard to approximate within polynomial factors in the \emph{worst case}.\footnote{Intriguingly for us, Regev~\cite{regev:qclattice} gave one such cryptosystem based on a \emph{quantum} hardness assumption, and recently Peikert~\cite{peikert:svpcrypto} built upon ideas in~\cite{regev:qclattice} to give a new system based on hardness assumptions against classical algorithms. This is another example of a classical result based on an earlier quantum result. However, the connection is less tight than for the proof systems of Aharonov and Regev, since Peikert replaced the quantum aspect of Regev's earlier work by a fairly unrelated classical approach.}  For a fuller discussion of lattice-based cryptography, see~\cite{micciancio&goldwasser:mgbook,micciancioregev:chapter}.

While these hardness assumptions remain plausible, another sequence of
papers~\cite{lls:kzbases, banasz:transference,
  goldreich&goldwasser:latticeprobs, aharonov&regev:latticeqnp,
  aharonov&regev:latticenpconp} has given evidence that the shortest
vector problem is \emph{not} $\NP$-hard to approximate within polynomial
factors (note that a problem may be intractable without being $\NP$-hard).
The most recent, due to Aharonov and
Regev~\cite{aharonov&regev:latticenpconp}, is a proof system for
$\GapCVP_{c\sqrt{n}}$ for sufficiently large $c > 0$.  That is, if we
are given a basis for a lattice $L$ with the promise that either
$d(v, L) \leq 1$ or $d(v, L) \geq c\sqrt{n}$, there is a proof system
to convince us which of the two cases holds.  One direction is simple:
if $d(v, L) \leq 1$, then exhibiting a lattice vector within distance
1 of $v$ proves this (the promise is not needed in this case).  The
new contribution of~\cite{aharonov&regev:latticenpconp} is to supply
proofs for the case where $d(v, L) \geq c\sqrt{n}$, showing that
$\GapCVP_{c\sqrt{n}}$ is in $\NP \cap \coNP$ (technically, the promise-problem analogue of this class).  It follows that $\GapCVP_{c\sqrt{n}}$ cannot be $\NP$-hard unless the Polynomial Hierarchy collapses; see~\cite{aharonov&regev:latticenpconp} for details of this implication.  

The proof system of~\cite{aharonov&regev:latticenpconp} is, in Aharonov and Regev's own description, a ``dequantization'' of an earlier, \emph{quantum} Merlin-Arthur (\QMA) proof system by the same authors~\cite{aharonov&regev:latticeqnp} for $\GapCVP_{c\sqrt{n}}$.\footnote{Actually, for the quantum protocol of~\cite{aharonov&regev:latticeqnp}, we need the slightly stronger promise that when the input satisfies $d(v, L) \geq c\sqrt{n}$, it also holds that $\lambda(L) \geq c\sqrt{n}$.  A proof system for this restricted problem still yields a proof system for SVP with approximation factor $c\sqrt{n}$.}  
In a $\QMA$ protocol, a computationally unbounded but untrustworthy prover (Merlin) sends a quantum state to a polynomially-bounded quantum verifier (Arthur), who then decides to accept or reject without further communication; unlike $\NP$, $\QMA$ allows a small probability of error. Although Aharonov and Regev analyze their second proof system in a completely classical way, 
the two papers offer an interesting case study of how quantum ideas can be adapted to the classical setting.  We now describe some of the ideas involved; we begin with the quantum proof system of~\cite{aharonov&regev:latticeqnp}, then discuss the classical protocol of~\cite{aharonov&regev:latticenpconp}.

\paragraph{The quantum proof system}
In the $\QMA$ protocol, Prover wants to convince Verifier that $d(v, L) \geq c\sqrt{n}$ for some large $c > 0$.  Central to the proof system is the following geometric idea: for our given lattice $L$, one defines a function $F(x): \mathbb{R}^n \rightarrow [0, \infty)$ that is \emph{lattice-periodic} (i.e., $F(x + y) = F(x)$ for all $x \in \mathbb{R}^n$ and $y \in L$) and is heavily concentrated in balls of radius $\ll \sqrt{n}$ around the lattice points.  Now for each $z \in \mathbb{R}^n$ we consider the ``$z$-shifted'' version of $F$, $F_z(x) = F(x + z)$.  The central idea is that if $d(z, L) \leq 1$, then (informally speaking) $F_z$ has large ``overlap'' with $F$, since the centers of mass shift only slightly.  On the other hand, if $d(z,  L) \geq c\sqrt{n}$, then $F_z$ and $F$ have negligible overlap, since the masses of the two functions are then concentrated on disjoint sets of balls.  In the proof system, Prover aims to convince Verifier that this overlap is indeed negligible when $z = v$ is the target vector.  To this end, Verifier asks to receive the ``state''
$$
\ket{\xi} = \sum_{x \in \mathbb{R}^n} F(x) \ket{x}
$$
describing the pointwise behavior of $F(x)$.  (Here and throughout, we give a simplified, idealized description; the actual protocol uses points $x$ of finite precision, over a bounded region that captures the behavior of the $L$-periodic function $F$.  Thus all sums become finite, and states can be properly normalized.)
We think of $\ket{\xi}$ as the ``correct'' proof which Verifier hopes to receive from Prover; note that $\ket{\xi}$ is independent of $v$.

Verifier cannot hope to recover an arbitrary value from among the exponentially many values stored in $\ket{\xi}$.  However, given $\ket{\xi}$, an elegant technique allows Verifier to estimate the overlap of $F$ with $F_v$, if the overlap $\langle F, F_v\rangle$ is defined (imprecisely) as 
$$
\langle F, F_v\rangle = \sum_{x \in \mathbb{R}^n}F(x) F(x + v ) \,.
$$  
Aharonov and Regev show that the overlap measure $\langle F, F_v\rangle$ they define is extremely close to $e^{-\pi d(v, L)^2/4}$ for any $v$ provided that $\lambda(L) \geq c\sqrt{n}$, so that it is indeed a good indicator of distance from the lattice.  %

To estimate this overlap, Verifier first appends a ``control'' qubit
to $\ket{\xi}$, initialized to $({1}/{\sqrt{2}})\,(\ket{0} + \ket{1})$.  Conditioned on the control qubit being 1, he applies the transformation $T_v$ which takes $\ket{y}$ to $\ket{y - v}$; this yields the state $({1}/{\sqrt{2}})\,(\ket{0}\ket{\xi}  + \ket{1}T_v\ket{\xi})$.  He then applies a Hadamard transformation to the control qubit, yielding 
$$
\frac{1}{2}\bigl[ \ket{0}(\ket{\xi}   + T_v \ket{\xi}  )   +   \ket{1}(\ket{\xi}   - T_v \ket{\xi}  )  \bigr]\,.
$$  
Finally, he measures the control qubit, which equals 1 with probability
$$
\frac{1}{4}(  \bra{\xi}  - \bra{\xi}T_v^{*}  )(\ket{\xi}  - T_v\ket{\xi})     = \frac{1}{2}(1  - {\rm Re}(\bra{\xi}T_v\ket{\xi}))\,.
$$
Consulting the definition of $\ket{\xi}$ and using that $F$ is real-valued, we have ${\rm Re}(\bra{\xi}T_v\ket{\xi}) = \langle F, F_v\rangle$, so the probability of measuring a 1 is linearly related to the overlap Verifier wants to estimate.

The procedure above allows Verifier to estimate $d(v, L)$, \emph{on the assumption} that Prover supplies the correct quantum encoding of $F$.  But Prover could send any quantum state $\ket{\psi}$, so Verifier needs to test that $\ket{\psi}$ behaves something like the desired state $\ket{\xi}$.  In particular, for randomly chosen vectors $z$ within a ball of radius $1/\poly(n)$ around the origin, the overlap $\langle F, F_z\rangle$, estimated by the above procedure, should be about $e^{-\pi \norm{z}^2/4}$.

Consider $h_{\psi}(z) = {\rm Re}(\bra{\psi}T_z\ket{\psi})$ as a function of $z$.  If this function could be arbitrary as we range over choices of $\ket{\psi}$, then observing that $h_{\psi}(z) \approx h_{\xi}(z)$ for short vectors $z$ would give us no confidence that $h_{\psi}(v) \approx \langle F, F_v\rangle$.  However, Aharonov and Regev show that for every $\ket{\psi}$, the function $h_{\psi}(z)$ obeys a powerful constraint called \emph{positive definiteness}.  They then prove that any positive definite function which is sufficiently ``close on average'' to the Gaussian $e^{-\pi \norm{z}^2/4}$ in a small ball around 0, cannot simultaneously be too close to zero at any point within a distance 1 of $L$.  Thus if $d(v, L) \leq 1$, any state Prover sends must either fail to give the expected behavior around the origin, or fail to suggest negligible overlap of $F$ with $F_v$.
We have sketched the core ideas of the quantum proof system; the actual protocol requires multiple copies of $\ket{\xi}$ to be sent, with additional, delicate layers of verification.

\paragraph{The classical proof system}
We now turn to see how some of the ideas of this proof system were adapted in~\cite{aharonov&regev:latticenpconp} to give a classical, deterministic proof system for $\GapCVP_{c\sqrt{n}}$.  The first new insight is that for any lattice $L$, there is an $L$-periodic function $f(x)$ which behaves similarly to $F(x)$ in the previous protocol, and which can be efficiently and accurately approximated at any point, using a polynomial-sized \emph{classical} advice string.  (The function $f(x)$ is defined as a sum of Gaussians, one centered around each point in $L$.) This advice is not efficiently constructible given a basis for $L$, but we may still hope to verify this advice if it is supplied by a powerful prover.

These approximations are derived using ideas of Fourier analysis.  First we introduce the \emph{dual lattice} $L^*$, which consists of all $w \in \mathbb{R}^n$ satisfying $\langle w, y \rangle \in \mathbb{Z}$ for all $y \in L$ (here we use $\langle \cdot, \cdot \rangle$ to denote inner product).  For example, we have $(\mathbb{Z}^n)^* = \mathbb{Z}^n$.  Note that for $w \in L^*$, $r_w(x) = \cos(2\pi  \langle w, x \rangle)$ is an $L$-periodic function.  In fact, any sufficiently smooth $L$-periodic real-valued function, such as $f(x)$, can be uniquely expressed as
$$
f(x) = \sum_{w \in L^*} \widehat{f}(w) \cos(2\pi  \langle w , x\rangle)\,,
$$
where the numbers $\widehat{f}(w) \in \mathbb{R}$ are called the
\emph{Fourier coefficients} of $f$.  Our choice of $f$ has
particularly nicely-behaved Fourier coefficients: they are nonnegative
and sum to 1, yielding a probability distribution (denoted by
$\widehat{f}$).  Thus 
$$
f(x) = \Exp_{w \sim \widehat{f}}[ \cos(2\pi
\langle w , x\rangle)]\,.
$$  It follows from standard sampling ideas that a large enough ($N=\poly(n)$-sized) set of samples $W = w_1, \ldots, w_N$ from $\widehat{f}$ will, with high probability, allow us to accurately estimate $f(x)$ at every point in a large, fine grid about the origin, by the approximation rule
$$
f_W(x) = \frac{1}{N}\sum_{i=1}^N \cos(2\pi   \langle w_i , x\rangle)\,. 
$$
Since $f$ is $L$-periodic and smooth, this yields good estimates everywhere.  So, we let the proof string consist of a matrix $W$ of column vectors $w_1, \ldots, w_N$.  Since Prover claims that the target vector $v$ is quite far from the lattice, and $f$ is concentrated around the lattice points, Verifier expects to see that $f_W(x)$ is small, say, less than $1/2$.

As usual, Verifier must work to prevent Prover from sending a misleading proof.  The obvious first check is that the vectors $w_i$ sent are indeed in $L^*$.  For the next test, a useful fact (which relies on properties of the Gaussians used to define $f$) is that samples drawn from $\widehat{f}$ are not too large and ``tend not to point in any particular direction'': $\Exp_{w \sim \widehat{f}}[ \langle u, w  \rangle^2] \leq {1}/{2\pi}$ for any unit vector~$u$.  This motivates a second test to be performed on the samples $w_1, \ldots, w_N$: check that the largest eigenvalue of $W W^T$ is at most $3N$.

In a sense, this latter test checks that $f_W$ has the correct ``shape'' in a neighborhood of the origin, and plays a similar role to the random-sampling test from the quantum protocol.\footnote{Indeed, the testing ideas are similar, due to the fact that for any $W$, the function $f_W$ obeys the same ``positive definiteness'' property used to analyze the quantum protocol.}  Like its quantum counterpart, this new test is surprisingly powerful: if $W$ is \emph{any} collection of dual-lattice vectors satisfying the eigenvalue constraint, and $d(v, L) \leq 1/100$, then $f_W(v) \geq 1/2$ and Prover cannot use $W$ to make $v$ pass the test.  On the other hand, if $d(v, L) \geq c\sqrt{n}$ for large enough $c$, then choosing the columns of $W$ according to $\widehat{f}$ yields a witness that with high probability passes the two tests and satisfies $f_W(v) < 1/2$.  Scaling by a factor 100 gives a proof system for $\GapCVP_{100c\sqrt{n}}$.

\subsection{A guide to further literature}\label{ssecfurtherlit}
In this section we give pointers to other examples where quantum techniques are used to obtain classical results in some way or other:
\begin{itemize}
\item
{\bf Data structure lower bounds:}
Using linear-algebraic techniques, Radhakrishnan et al.~\cite{rsv:qsetmembershipj} proved lower bounds
on the bit-length of data structures for the set membership problem with quantum decoding algorithms.
Their bounds of course also apply to classical decoding algorithms, but are in fact stronger than the previous 
classical lower bounds of Buhrman et al.~\cite{bmrv:bitvectorsj}.
Sen and Venkatesh did the same for data structures for the predecessor problem~\cite{sen&venkatesh:qcellprobej},
proving a ``round elimination'' lemma in the context of quantum communication complexity that is stronger than the best known classical round elimination lemmas.
A further strengthening of their lemma was subsequently used by Chakrabarti and Regev~\cite{chakrabarti&regev:nn} 
to obtain optimal lower bounds for the approximate nearest neighbour problem.
\item
{\bf Formula lower bounds:}
Recall that a formula is a binary tree whose internal nodes are AND and OR-gates, and each leaf is a Boolean input variable $x_i$ or its negation.
The root of the tree computes a Boolean function of the inputs in the obvious way.
Proving superpolynomial formula lower bounds for specific explicit functions in $\NP$ is a long-standing open problem in complexity theory, 
the hope being that such a result would be a stepping-stone towards the superpolynomial circuit lower bounds that would separate $\Polytime$ from $\NP$
(currently, not even superlinear bounds are known).
The best proven formula-size lower bounds are nearly cubic~\cite{hastad:demorgan}, but a large class of quadratic lower bounds can be obtained
from a quantum result: Laplante et al.~\cite{lls:advformj} showed that the formula size of $f$ is lower bounded by the square of 
the quantum adversary bound for $f$ (mentioned in Section~\ref{secadversary}). Since random functions, as well as many specific functions like Parity and Majority,
have linear adversary bounds, one obtains many quadratic formula lower bounds this way.
\item
{\bf Circuit lower bounds:}
Kerenidis~\cite{kerenidis:qcircuit} describes an approach to prove lower bounds for classical circuit \emph{depth}
using quantum multiparty communication complexity. Roughly speaking, the idea is to combine two classical parties 
into one quantum party, prove lower bounds in the resulting quantum model, and then translate these back to strong lower bounds for the classical model.
(Unfortunately, it seems hard to prove good lower bounds for the resulting quantum model~\cite{lss:quantumnof}.)
\item
{\bf Horn's problem} is to characterize the triples $\mu,\nu$, and
$\lambda$ of vectors of integers for which there exist Hermitian operators
$A$ and $B$ such that $\mu,\nu$, and $\lambda$ are the spectra of $A$,
$B$, and $A+B$, respectively. It is known (and quite non-trivial) that
this problem is equivalent to a question about the representation theory
of the group $\textsf{GL}(d)$ of invertible $d\times d$ complex matrices.
Christandl~\cite{christandl:horn} reproved a slightly weaker version of
this equivalence based on quantum information theory.
\item
{\bf Secret-key distillation:}
In cryptography, Gisin, Renner, and Wolf~\cite{grw:keyagree}
used an analogy with ``quantum bound entanglement'' to provide
evidence against the conjecture that the ``intrinsic information''
in a random variable shared by Alice, Bob, and eavesdropper Eve
always equals the amount of secret key that Alice and Bob can
extract from this; later this conjecture was indeed
disproved~\cite{renner&wolf:new}, though without using quantum methods.
\item
{\bf Equivalence relations and invariants:}
A function $f$ is said to be a \emph{complete invariant} for the equivalence relation $R$ if $(x,y)\in R \Leftrightarrow f(x)=f(y)$.
It is an interesting question whether every polynomial-time computable equivalence relation
has a polynomial-time computable complete invariant.
Fortnow and Grochow~\cite[Theorem~4.3]{fortnow&grochow:equivalence} show this would imply
that the class UP reduces to Simon's problem~\cite{simon:power} (and hence would be in $\BQP$, which seems unlikely).
\item
{\bf Learning positive semidefinite matrices:}
Tsuda, R\"{a}tsch, and Warmuth~\cite{trw:matrixlearning} study online learning of positive semidefinite matrices
using properties of von Neumann divergence (also known as quantum relative entropy) 
in their analysis in order to measure differences between density matrices.
\item
{\bf Johansson's theorem}
asymptotically equates the expected shape of the semi-standard tableau produced 
by a random word in $k$ letters, with the spectrum of a certain random matrix.
Kuperberg~\cite{kuperberg:johansson} provides a proof of this theorem
by treating the random matrix as a quantum random variable on the space of random words.
In another paper, Kuperberg~\cite{kuperberg:tracial} also proved a central limit theorem for non-commutative 
random variables in a von Neumann algebra with a tracial state, based on quantum ideas.
\end{itemize}

\section{Conclusion}
In this paper we surveyed the growing list of applications of quantum computing
techniques to non-quantum problems, in areas ranging from theoretical computer science to
pure mathematics.  
These proofs build on current research in quantum computing, but do not depend
on whether a large-scale quantum computer will ever be built.
One could even go further, and use mathematical frameworks that go ``beyond quantum'' as a proof-tool.
The use of $\PostBQP$ in Section~\ref{sec:postbqp} is in this vein, since we don't expect postselection to be physically implementable.

We feel that ``thinking quantumly'' can be a source of insight and of charming, surprising proofs:
both proofs of new results and simpler proofs of known results (of course, ``simplicity'' is in the eye of the beholder here).
While the examples in this survey do not constitute a full-fledged proof method yet, our
hope is that both the quantum toolbox and its range of applications will continue to grow.

\subsection*{Acknowledgements}
We thank Richard Lipton for his blog entry on these
techniques~\cite{lipton:qmethodblog}, which was one of the things
that motivated us to write this survey.  We thank Scott Aaronson,
Joshua Brody, Iordanis Kerenidis, Greg Kuperberg, Troy Lee,
Fr\'{e}d\'{e}ric Magniez, Ashwin Nayak, Alexander Sherstov, and
Shengyu Zhang for helpful comments and pointers to the literature.
Finally, many thanks to the anonymous Theory of Computing referees for their very
helpful comments.

\bibliographystyle{plain}
%
%\bibliography{new_qproof}

\begin{thebibliography}{100}

\bibitem{aaronson:advicecommj}
Scott Aaronson.
\newblock Limitations of quantum advice and one-way communication.
\newblock {\em Theory of Computing}, 1:1--28, 2005.
\newblock Earlier version in Proc. 19th IEEE Conf. Comput. Complex., 2004 and
  arxiv:quant-ph/0402095.

\bibitem{aaronson:pp}
Scott Aaronson.
\newblock Quantum computing, postselection, and probabilistic polynomial-time.
\newblock In {\em Proc. R. Soc. Lond. Ser. A Math. Phys. Eng. Sci.}, volume
  A461(2063), pages 3473--3482, 2005.
\newblock Earlier version in arxiv:quant-ph/0412187.

\bibitem{aaronson:localsearchj}
Scott Aaronson.
\newblock Lower bounds for local search by quantum arguments.
\newblock {\em SIAM J. Comput.}, 35(4):804--824, 2006.
\newblock Earlier version in STOC '03 and arxiv:quant-ph/0307149.

\bibitem{aaronson&shi:collision}
Scott Aaronson and Yaoyun Shi.
\newblock Quantum lower bounds for the collision and the element distinctness
  problems.
\newblock {\em J. ACM}, 51(4):595--605, 2004.

\bibitem{adh:qcomputability}
Leonard~M. Adleman, Jonathan DeMarrais, and Ming-Deh~A. Huang.
\newblock Quantum computability.
\newblock {\em SIAM J. Comput.}, 26(5):1524--1540, 1997.

\bibitem{aharonov&regev:latticeqnp}
Dorit Aharonov and Oded Regev.
\newblock A lattice problem in quantum {NP}.
\newblock In {\em Proc. 44th FOCS}, pages 210--219. IEEE Comp. Soc. Press,
  2003.
\newblock Earlier version in arxiv:quant-ph/0307220.

\bibitem{aharonov&regev:latticenpconp}
Dorit Aharonov and Oded Regev.
\newblock Lattice problems in {NP} intersect {coNP}.
\newblock {\em J. ACM}, 52(5):749--765, 2005.
\newblock Earlier version in FOCS '04.

\bibitem{ajtai:hardlattice}
Mikl{\'o}s Ajtai.
\newblock Generating hard instances of lattice problems (extended abstract).
\newblock In {\em Proc. 28th STOC}, pages 99--108. ACM Press, 1996.

\bibitem{ajtaidwork:lattice97}
Mikl{\'o}s Ajtai and Cynthia Dwork.
\newblock A public-key cryptosystem with worst-case/average-case equivalence.
\newblock In {\em Proc. 29th STOC}, pages 284--293. ACM Press, 1997.

\bibitem{ajtaidwork:lattice07}
Mikl{\'o}s Ajtai and Cynthia Dwork.
\newblock The first and fourth public-key cryptosystems with
  worst-case/average-case equivalence.
\newblock {\em Electron. Colloq. on Comput. Complexity (ECCC)}, 14(097), 2007.

\bibitem{aks:sieve}
Mikl{\'o}s Ajtai, Ravi Kumar, and D.~Sivakumar.
\newblock A sieve algorithm for the shortest lattice vector problem.
\newblock In {\em Proc. 33rd STOC}, pages 601--610. ACM Press, 2001.

\bibitem{alon:rigidity}
Noga Alon.
\newblock On the rigidity of an {Hadamard} matrix.
\newblock Manuscript. His proof may be found
  in~\cite[Section~15.1.2]{jukna:excom}, 1990.

\bibitem{alon&spencer:probmethod}
Noga Alon and Joel~H. Spencer.
\newblock {\em The Probabilistic Method}.
\newblock Wiley-Interscience, third edition, 2008.

\bibitem{ambainis:lowerboundsj}
Andris Ambainis.
\newblock Quantum lower bounds by quantum arguments.
\newblock {\em J. Comput. System Sci.}, 64(4):750--767, 2002.
\newblock Earlier versions in STOC '00 and arxiv:quant-ph/0002066.

\bibitem{ambainis:degreevsqueryj}
Andris Ambainis.
\newblock Polynomial degree {vs.} quantum query complexity.
\newblock {\em J. Comput. System Sci.}, 72(2):220--238, 2006.
\newblock Earlier versions in FOCS '03 and arxiv:quant-ph/0305028.

\bibitem{acrsz:andorj}
Andris Ambainis, A.~Childs, Ben~W. Reichardt, Robert {\v{S}}palek, and
  S.~Zhang.
\newblock Any {AND-OR} formula of size {$N$} can be evaluated in time
  {$N^{1/2+o(1)}$} on a quantum computer.
\newblock {\em SIAM J. Comput.}, 39(6):2513--2530, 2010.
\newblock Earlier version in FOCS '07.

\bibitem{ambainis:racj}
Andris Ambainis, Ashwin Nayak, Amnon Ta-Shma, and Umesh Vazirani.
\newblock Dense quantum coding and quantum finite automata.
\newblock {\em J. ACM}, 49(4):496--511, 2002.
\newblock Earlier version in STOC '99.

\bibitem{babai&frankl:linalg}
L{\'a}szl{\'o} Babai and P{\'e}ter Frankl.
\newblock {\em Linear Algebra Methods in Combinatorics, with Applications to
  Geometry and Computer Science}.
\newblock University of Chicago, 1992.
\newblock Unpublished manuscript, available from
  http://www.cs.uchicago.edu/research/publications/combinatorics.

\bibitem{bfs:classes}
L{\'a}szl{\'o} Babai, P{\'e}ter Frankl, and Janos Simon.
\newblock Complexity classes in communication complexity theory.
\newblock In {\em Proc. 27th FOCS}, pages 337--347. IEEE Comp. Soc. Press,
  1986.

\bibitem{banasz:transference}
W.~Banaszczyk.
\newblock New bounds in some transference theorems in the geometry of numbers.
\newblock {\em Math. Ann.}, 296(4):625--635, 1993.

\bibitem{bjks:itcc}
Ziv {Bar-Yossef}, T.~S. Jayram, Ravi Kumar, and D.~Sivakumar.
\newblock Information theory methods in communication complexity.
\newblock In {\em Proc. 17th IEEE Conf. Comput. Complex., 2002}, pages 93--102,
  2002.

\bibitem{bjks:datastream}
Ziv {Bar-Yossef}, T.~S. Jayram, Ravi Kumar, and D.~Sivakumar.
\newblock An information statistics approach to data stream and communication
  complexity.
\newblock {\em J. Comput. System Sci.}, 68(4):702--732, 2004.
\newblock Earlier version in FOCS '02.

\bibitem{bbcmw:polynomialsj}
Robert Beals, Harry Buhrman, Richard Cleve, Michele Mosca, and Ronald~de Wolf.
\newblock Quantum lower bounds by polynomials.
\newblock {\em J. ACM}, 48(4):778--797, 2001.
\newblock Earlier versions in FOCS '98 and arxiv:quant-ph/9802049.

\bibitem{beigel:perceptrons}
Richard Beigel.
\newblock Perceptrons, {PP}, and the polynomial hierarchy.
\newblock {\em Comput. Complexity}, 4:339--349, 1994.

\bibitem{brs:ppclosed}
Richard Beigel, Nick Reingold, and Daniel Spielman.
\newblock {PP} is closed under intersection.
\newblock {\em J. Comput. System Sci.}, 50(2):191--202, 1995.

\bibitem{brw:hypercontractive}
Avraham {Ben-Aroya}, Oded Regev, and Ronald~de Wolf.
\newblock A hypercontractive inequality for matrix-valued functions with
  applications to quantum computing and {LDC}s.
\newblock In {\em Proc. 49th FOCS}, pages 477--486. IEEE Comp. Soc. Press,
  2008.
\newblock Earlier version in arxiv:quant-ph/0705.3806.

\bibitem{bbbv:str&weak}
Charles~H. Bennett, Ethan Bernstein, Gilles Brassard, and Umesh Vazirani.
\newblock Strengths and weaknesses of quantum computing.
\newblock {\em SIAM J. Comput.}, 26(5):1510--1523, 1997.
\newblock Earlier version in arxiv:quant-ph/9701001.

\bibitem{bernstein:weierstrass}
S.~N. Bernstein.
\newblock {D\'{e}monstration du th\'{e}or\`{e}me de Weierstrass fond\'{e}e sur
  le calcul des probabilit\'{e}s}.
\newblock {\em Communications de la Soci\'{e}t\'{e} Mathematique de Kharkov},
  13:1--2, 1912.

\bibitem{bhmt:countingj}
Gilles Brassard, Peter H{\o}yer, Michele Mosca, and Alain Tapp.
\newblock Quantum amplitude amplification and estimation.
\newblock In {\em Quantum Computation and Quantum Information: A Millennium
  Volume}, volume 305 of {\em AMS Contemporary Mathematics Series}, pages
  53--74. American Mathematical Society, 2002.
\newblock Earlier version in arxiv:quant-ph/0005055.

\bibitem{bcwz:qerror}
Harry Buhrman, Richard Cleve, Ronald~de Wolf, and Christof Zalka.
\newblock Bounds for small-error and zero-error quantum algorithms.
\newblock In {\em Proc. 40th FOCS}, pages 358--368. IEEE Comp. Soc. Press,
  1999.
\newblock Earlier version in arxiv:cs.CC/9904019.

\bibitem{bmrv:bitvectorsj}
Harry Buhrman, P.~B. Miltersen, Jaikumar Radhakrishnan, and S.~Venkatesh.
\newblock Are bitvectors optimal?
\newblock {\em SIAM J. Comput.}, 31(6):1723--1744, 2002.
\newblock Earlier version in STOC '00.

\bibitem{bnrw:robustqj}
Harry Buhrman, Ilan Newman, Hein R{\"o}hrig, and Ronald~de Wolf.
\newblock Robust polynomials and quantum algorithms.
\newblock {\em Theory Comput. Syst.}, 40(4):379--395, 2007.
\newblock Special issue on STACS 2005. Earlier version in
  arxiv:quant-ph/0309220.

\bibitem{bvw:smallbias}
Harry Buhrman, Nikolay Vereshchagin, and Ronald~de Wolf.
\newblock On computation and communication with small bias.
\newblock In {\em Proc. 22nd IEEE Conf. Comput. Complex.}, pages 24--32, 2007.

\bibitem{buhrman&wolf:dectreesurvey}
Harry Buhrman and Ronald~de Wolf.
\newblock Complexity measures and decision tree complexity: {A} survey.
\newblock {\em Theoret. Comput. Sci.}, 288(1):21--43, 2002.

\bibitem{chakrabarti&regev:nn}
Amit Chakrabarti and Oded Regev.
\newblock An optimal randomised cell probe lower bound for approximate nearest
  neighbour searching.
\newblock In {\em Proc. 45th FOCS}, pages 473--482. IEEE Comp. Soc. Press,
  2004.

\bibitem{cswy:directsum}
Amit Chakrabarti, Yaoyun Shi, Anthony Wirth, and Andrew Chi-Chih Yao.
\newblock Informational complexity and the direct sum problem for simultaneous
  message complexity.
\newblock In {\em Proc. 42nd FOCS}, pages 270--278. IEEE Comp. Soc. Press,
  2001.

\bibitem{cgks:pir}
Benny Chor, Oded Goldreich, Eyal Kushilevitz, and Madhu Sudan.
\newblock Private information retrieval.
\newblock {\em J. ACM}, 45(6):965--981, 1998.
\newblock Earlier version in FOCS '95.

\bibitem{christandl:horn}
Matthias Christandl.
\newblock A quantum information-theoretic proof of the relation between
  {H}orn's problem and the {L}ittlewood-{R}ichardson coefficients.
\newblock In {\em Proc. 4th Conf. Computability in Europe (CiE)}, volume 5028
  of {\em Lecture Notes in Comput. Sci.}, pages 120--128, 2008.

\bibitem{cdnt:ip}
Richard Cleve, Wim {van Dam}, Michael Nielsen, and Alain Tapp.
\newblock Quantum entanglement and the communication complexity of the inner
  product function.
\newblock In {\em Proc. 1st NASA Intern. Conf. Quantum Comput. Quantum Comm.},
  volume 1509 of {\em Lecture Notes in Comput. Sci.}, pages 61--74. Springer,
  1998.
\newblock Earlier version in arxiv:quant-ph/9708019.

\bibitem{cover&thomas:infoth}
Thomas~M. Cover and Joy~A. Thomas.
\newblock {\em Elements of Information Theory}.
\newblock Wiley, 1991.

\bibitem{drucker&wolf:jackson}
Andrew Drucker and Ronald~de Wolf.
\newblock Uniform approximation by (quantum) polynomials.
\newblock {\em Quantum Information and Computation}, (3\&4):215--225, 2011.
\newblock Earlier version in arxiv:1008.1599.

\bibitem{efremenko:ldc}
Klim Efremenko.
\newblock 3-query locally decodable codes of subexponential length.
\newblock In {\em Proc. 41st STOC}, pages 39--44. ACM Press, 2009.

\bibitem{ehlich&zeller:schwankung}
H.~Ehlich and K.~Zeller.
\newblock Schwankung von {P}olynomen zwischen {G}itterpunkten.
\newblock {\em Math. Z.}, 86:41--44, 1964.

\bibitem{epr}
A.~Einstein, B.~Podolsky, and N.~Rosen.
\newblock Can quantum-mechanical description of physical reality be considered
  complete?
\newblock {\em Phys. Rev.}, 47:777--780, 1935.

\bibitem{fgg:nandtreej}
Edward Farhi, Jeffrey Goldstone, and Sam Gutmann.
\newblock A quantum algorithm for the {H}amiltonian {NAND} tree.
\newblock {\em Theory of Computing}, 4(1):169--190, 2008.
\newblock Earlier version in arxiv:quant-ph/0702144.

\bibitem{feige&al:computingWithNoisyInformation}
Uriel Feige, Prabhakar Raghavan, David Peleg, and Eli Upfal.
\newblock Computing with noisy information.
\newblock {\em SIAM J. Comput.}, 23(5):1001--1018, 1994.
\newblock Earlier version in STOC '90.

\bibitem{fortnow&grochow:equivalence}
Lance Fortnow and Joshua~A. Grochow.
\newblock Complexity classes of equivalence problems revisited, 2009.
\newblock Earlier version in arXiv:0907.4775.

\bibitem{fortnow&reingold:pptt}
Lance Fortnow and Nick Reingold.
\newblock {PP} is closed under truth-table reductions.
\newblock {\em Inform. and Comput.}, 124(1):1--6, 1996.

\bibitem{fortnow&rogers:limitations}
Lance Fortnow and John Rogers.
\newblock Complexity limitations on quantum computation.
\newblock {\em J. Comput. System Sci.}, 59(2):240--252, 1999.
\newblock Earlier version in Proc. 13th IEEE Conf. Comput. Complex., 1998 and
  arxiv:cs.CC/9811023.

\bibitem{grw:keyagree}
Nicolas Gisin, Renato Renner, and Stefan Wolf.
\newblock Linking classical and quantum key agreement: Is there a classical
  analog to bound entanglement?
\newblock {\em Algorithmica}, 34(4):389--412, 2002.
\newblock Earlier version in Crypto '00.

\bibitem{goldreich&goldwasser:latticeprobs}
Oded Goldreich and Shafi Goldwasser.
\newblock On the limits of nonapproximability of lattice problems.
\newblock {\em J. Comput. System Sci.}, 60(3):540--563, 2000.
\newblock Earlier version in STOC '98.

\bibitem{gkst:lowerpir}
Oded Goldreich, Howard Karloff, Leonard~J. Schulman, and Luca Trevisan.
\newblock Lower bounds for linear locally decodable codes and private
  information retrieval.
\newblock {\em Comput. Complexity}, 15(3):263--296, 2006.
\newblock Earlier version in Proc. 17th IEEE Conf. Comput. Complex., 2002.

\bibitem{GMSS99}
Oded Goldreich, Daniele Micciancio, S.~Safra, and J-P. Seifert.
\newblock Approximating shortest lattice vectors is not harder than
  approximating closest lattice vectors.
\newblock {\em Inform. Process. Lett.}, 71(2):55--61, 1999.

\bibitem{graaf&wolf:qyao}
Mart~de Graaf and Ronald~de Wolf.
\newblock On quantum versions of the {Y}ao principle.
\newblock In {\em Proc. 19th Ann. Symp. Theor. Aspects of Comput. Sci.
  (STACS)}, volume 2285 of {\em Lecture Notes in Comput. Sci.}, pages 347--358.
  Springer, 2002.
\newblock Earlier version in arxiv:quant-ph/0109070.

\bibitem{grover:search}
Lov~K. Grover.
\newblock A fast quantum mechanical algorithm for database search.
\newblock In {\em Proc. 28th STOC}, pages 212--219. ACM Press, 1996.
\newblock Earlier version in arxiv:quant-ph/9605043.

\bibitem{hastad:demorgan}
Johan H{\aa}stad.
\newblock The shrinkage exponent of de {M}organ formulas is 2.
\newblock {\em SIAM J. Comput.}, 27(1):48--64, 1998.

\bibitem{HavivR07}
Ishay Haviv and Oded Regev.
\newblock Tensor-based hardness of the shortest vector problem to within almost
  polynomial factors.
\newblock In {\em Proc. 39th STOC}, pages 469--477. ACM Press, 2007.

\bibitem{holevo}
A.~S. Holevo.
\newblock Bounds for the quantity of information transmitted by a quantum
  communication channel.
\newblock {\em Problemy Peredachi Informatsii}, 9(3):3--11, 1973.
\newblock English translation in {\it Problems Inform. Transmission},
  9:177--183, 1973.

\bibitem{hls:madv}
Peter H{\o}yer, Troy Lee, and Robert {\v{S}}palek.
\newblock Negative weights make adversaries stronger.
\newblock In {\em Proc. 39th STOC}, pages 526--535. ACM Press, 2007.
\newblock Earlier version in arxiv:quant-ph/0611054.

\bibitem{hmw:berrorsearch}
Peter H{\o}yer, Michele Mosca, and Ronald~de Wolf.
\newblock Quantum search on bounded-error inputs.
\newblock In {\em Proc. 30th Intern. Colloq. Autom. Lang. Program. (ICALP)},
  volume 2719 of {\em Lecture Notes in Comput. Sci.}, pages 291--299. Springer,
  2003.
\newblock Earlier version in arxiv:quant-ph/0304052.

\bibitem{ipry:qpp}
Kazuo Iwama, Harumichi Nishimura, Rudy Raymond, and Shigeru Yamashita.
\newblock Unbounded-error classical and quantum communication complexity.
\newblock In {\em Proc. 18th Intern. Symp. Algorithms Comput. (ISAAC)}, volume
  4835 of {\em Lecture Notes in Comput. Sci.}, pages 100--111. Springer, 2007.
\newblock Earlier version in arXiv:0709.2761.

\bibitem{jackson:thesis}
D.~Jackson.
\newblock {\em \"{U}ber die {G}enauigkeit der {A}nn\"{a}herung stetiger
  {F}unktionen durch ganze rationale {F}unktionen gegebenen {G}rades und
  trigonometrische {S}ummen gegebener {O}rdnung}.
\newblock PhD thesis, University of G\"{o}ttingen, 1911.

\bibitem{jkr:readonce}
T.~S. Jayram, Swastik Kopparty, and Prasad Raghavendra.
\newblock On the communication complexity of read-once {$AC^0$} formulae.
\newblock In {\em Proc. 24th IEEE Conf. Comput. Complex.}, pages 329--340,
  2009.

\bibitem{jks:2apps}
T.~S. Jayram, Ravi Kumar, and D.~Sivakumar.
\newblock Two applications of information complexity.
\newblock In {\em Proc. 35th STOC}, pages 673--682. ACM Press, 2003.

\bibitem{gill:pp}
III John T.~Gill.
\newblock {\em Probabilistic Turing Machines and Complexity of Computation}.
\newblock PhD thesis, UC Berkeley, Berkeley, California, 1972.

\bibitem{jukna:excom}
Stasys Jukna.
\newblock {\em Extremal Combinatorics, With Applications in Computer Science}.
\newblock EATCS Series. Springer, 2001.

\bibitem{kashin&razborov:rigidity}
B.~Kashin and Alexander~A. Razborov.
\newblock Improved lower bounds on the rigidity of {H}adamard matrices.
\newblock {\em Mat. Zametki}, 63(4):535--540, 1998.
\newblock In Russian.

\bibitem{katz&trevisan:ldc}
Jonathan Katz and Luca Trevisan.
\newblock On the efficiency of local decoding procedures for error-correcting
  codes.
\newblock In {\em Proc. 32nd STOC}, pages 80--86. ACM Press, 2000.

\bibitem{kerenidis:qcircuit}
Iordanis Kerenidis.
\newblock Quantum multiparty communication complexity and circuit lower bounds.
\newblock In {\em Proc. 4th Ann. Conf. Theory and Appl. Models of Comput.
  (TAMC)}, volume 4484 of {\em Lecture Notes in Comput. Sci.}, pages 306--317.
  Springer, 2007.
\newblock Earlier version in arxiv:quant-ph/0504087.

\bibitem{kerenidis&wolf:qldcj}
Iordanis Kerenidis and Ronald~de Wolf.
\newblock Exponential lower bound for 2-query locally decodable codes via a
  quantum argument.
\newblock {\em J. Comput. System Sci.}, 69(3):395--420, 2004.
\newblock Earlier versions in STOC '03 and arxiv:quant-ph/0208062.

\bibitem{khot:hardlatticej}
Subhash Khot.
\newblock Hardness of approximating the shortest vector problem in lattices.
\newblock {\em J. ACM}, 52(5):789--808, 2005.
\newblock Earlier version in FOCS '04.

\bibitem{klauck:qcclowerj}
Hartmut Klauck.
\newblock Lower bounds for quantum communication complexity.
\newblock {\em SIAM J. Comput.}, 37(1):20--46, 2007.
\newblock Earlier version in FOCS '01 and arxiv:quant-ph/0106160.

\bibitem{ksw:dpt-siam}
Hartmut Klauck, Robert {\v{S}}palek, and Ronald~de Wolf.
\newblock Quantum and classical strong direct product theorems and optimal
  time-space tradeoffs.
\newblock {\em SIAM J. Comput.}, 36(5):1472--1493, 2007.
\newblock Earlier version in FOCS '04 and arxiv:quant-ph/0402123.

\bibitem{kuperberg:johansson}
Greg Kuperberg.
\newblock Random words, quantum statistics, central limits, random matrices.
\newblock {\em Methods Appl. Anal.}, 9(1):99--118, 2002.
\newblock Earlier version in arxiv:math/9909104.

\bibitem{kuperberg:tracial}
Greg Kuperberg.
\newblock A tracial quantum central limit theorem.
\newblock {\em Trans. Amer. Math. Soc.}, 357:459--471, 2005.
\newblock Earlier version in arxiv:math-ph/0202035.

\bibitem{kushilevitz&nisan:cc}
Eyal Kushilevitz and Noam Nisan.
\newblock {\em Communication Complexity}.
\newblock Cambridge University Press, 1997.

\bibitem{lls:kzbases}
J.~C. Lagarias, H.~W. Lenstra, Jr., and C.-P. Schnorr.
\newblock Korkin-{Z}olotarev bases and successive minima of a lattice and its
  reciprocal lattice.
\newblock {\em Combinatorica}, 10(4):333--348, 1990.

\bibitem{lls:advformj}
Sophie Laplante, Troy Lee, and Mario Szegedy.
\newblock The quantum adversary method and classical formula size lower bounds.
\newblock {\em Comput. Complexity}, 15(2):163--196, 2006.
\newblock Earlier version in Proc. 20th IEEE Conf. Comput. Complex., 2005.

\bibitem{laplante&magniez:lower}
Sophie Laplante and Frederic Magniez.
\newblock Lower bounds for randomized and quantum query complexity using
  {K}olmogorov arguments.
\newblock In {\em Proc. 19th IEEE Conf. Comput. Complex.}, pages 294--304,
  2004.
\newblock Earlier version in arxiv:quant-ph/0311189.

\bibitem{lee:signdegree}
Troy Lee.
\newblock A note on the sign degree of formulas, 2009.

\bibitem{lss:quantumnof}
Troy Lee, Gideon Schechtman, and Adi Shraibman.
\newblock Lower bounds on quantum multiparty communication complexity.
\newblock In {\em Proc. 24th IEEE Conf. Comput. Complex.}, pages 254--262,
  2009.

\bibitem{lll:lll}
A.~K. Lenstra, H.~W. Lenstra, Jr., and L.~Lov{\'a}sz.
\newblock Factoring polynomials with rational coefficients.
\newblock {\em Math. Ann.}, 261(4):515--534, 1982.

\bibitem{leonardos&saks:readoncej}
Nikos Leonardos and Michael Saks.
\newblock Lower bounds on the randomized communication complexity of read-once
  functions.
\newblock {\em Comput. Complexity}, 19(2):153--181, 2010.

\bibitem{li&vitanyi:kolm3}
Ming Li and Paul Vit{\'{a}}nyi.
\newblock {\em An Introduction to {K}olmogorov Complexity and its
  Applications}.
\newblock Springer, Berlin, third edition, 2008.

\bibitem{linial&shraibman:learningj}
Nati Linial and Adi Shraibman.
\newblock Learning complexity vs communication complexity.
\newblock {\em Combin. Probab. Comput.}, 18(1-2):227--245, 2009.
\newblock Earlier version in Proc. 22nd IEEE Conf. Computat. Complex., 2008.

\bibitem{linial&shraibman:ccj}
Nati Linial and Adi Shraibman.
\newblock Lower bounds in communication complexity based on factorization
  norms.
\newblock {\em Random Structures Algorithms}, 34(3):368--394, 2009.
\newblock Earlier version in STOC '07.

\bibitem{lipton:qmethodblog}
Richard Lipton.
\newblock Erd{\H{o}}s and the quantum method, March 28, 2009.
\newblock Blog entry:
  http://rjlipton.wordpress.com/2009/03/28/erds-and-the-quantum-method/.

\bibitem{lokam:rigidity}
Satyanarayana~V. Lokam.
\newblock Spectral methods for matrix rigidity with applications to size-depth
  trade-offs and communication complexity.
\newblock {\em J. Comput. System Sci.}, 63(3):449--473, 2001.
\newblock Earlier version in FOCS '95.

\bibitem{lokam:quadratic}
Satyanarayana~V. Lokam.
\newblock Quadratic lower bounds on matrix rigidity.
\newblock In {\em Proc. 3rd Ann. Conf. Theory and Appl. Models of Comput.
  (TAMC)}, volume 3959 of {\em Lecture Notes in Comput. Sci.}, pages 295--307.
  Springer, 2006.

\bibitem{lokam:linalgsurvey}
Satyanarayana~V. Lokam.
\newblock Complexity lower bounds using linear algebra.
\newblock {\em Found. Trends Theor. Comput. Sci.}, 4(1--2):1--155, 2008.

\bibitem{micciancio:almostperfect}
Daniele Micciancio.
\newblock Almost perfect lattices, the covering radius problem, and
  applications to {A}jtai's connection factor.
\newblock {\em SIAM J. Comput.}, 34(1):118--169, 2004.

\bibitem{micciancio&goldwasser:mgbook}
Daniele Micciancio and Shafi Goldwasser.
\newblock {\em Complexity of Lattice Problems: {A} cryptographic perspective},
  volume 671 of {\em Kluwer Internat. Ser. Engrg. Comput. Sci.}
\newblock Kluwer Academic Publishers, Boston, Massachusetts, 2002.

\bibitem{micciancioregev:chapter}
Daniele Micciancio and Oded Regev.
\newblock Lattice-based cryptography.
\newblock In D.~J. Bernstein, J.~Buchmann, and E.~Dahmen, editors, {\em
  Post-quantum Cryptography}, pages 147--191. Springer, 2009.

\bibitem{midrijanis:rigidity}
Gatis Midrijanis.
\newblock Three lines proof of the lower bound for the matrix rigidity.
\newblock Technical report, Arxiv e-print, June 2005.

\bibitem{minsky&papert:perceptrons}
Marvin Minsky and Seymour Papert.
\newblock {\em Perceptrons}.
\newblock MIT Press, Cambridge, MA, 1968.
\newblock Second, expanded edition 1988.

\bibitem{nayak:qfa}
Ashwin Nayak.
\newblock Optimal lower bounds for quantum automata and random access codes.
\newblock In {\em Proc. 40th FOCS}, pages 369--376. IEEE Comp. Soc. Press,
  1999.
\newblock Earlier version in arxiv:quant-ph/9904093.

\bibitem{nayak&salzman:entanglementj}
Ashwin Nayak and Julia Salzman.
\newblock Limits on the ability of quantum states to convey classical messages.
\newblock {\em J. ACM}, 53(1):184--206, 2006.
\newblock Earlier version in STOC '02.

\bibitem{nielsen&chuang:qc}
Michael Nielsen and Isaac~L. Chuang.
\newblock {\em Quantum Computation and Quantum Information}.
\newblock Cambridge University Press, 2000.

\bibitem{odonnell&servedio:newdegree}
Ryan {O'Donnell} and Rocco Servedio.
\newblock New degree bounds for polynomial threshold functions.
\newblock In {\em Proc. 35th STOC}, pages 325--334. ACM Press, 2003.

\bibitem{paturi:degree}
Ramamohan Paturi.
\newblock On the degree of polynomials that approximate symmetric {B}oolean
  functions.
\newblock In {\em Proc. 24th STOC}, pages 468--474. ACM Press, 1992.

\bibitem{paturi&simon:pcc}
Ramamohan Paturi and Janos Simon.
\newblock Probabilistic communication complexity.
\newblock {\em J. Comput. System Sci.}, 33(1):106--123, 1986.
\newblock Earlier version in FOCS '84.

\bibitem{peikert:svpcrypto}
Chris Peikert.
\newblock Public-key cryptosystems from the worst-case shortest vector problem.
\newblock In {\em Proc. 41st STOC}, pages 333--342. ACM Press, 2009.

\bibitem{rsv:qsetmembershipj}
Jaikumar Radhakrishnan, Pranab Sen, and S.~Venkatesh.
\newblock The quantum complexity of set membership.
\newblock {\em Algorithmica}, 34(4):462--479, 2002.
\newblock Earlier versions in FOCS '00 and arvix:quant-ph/0007021.

\bibitem{razborov:qdisj}
Alexander~A. Razborov.
\newblock Quantum communication complexity of symmetric predicates.
\newblock {\em Izvestiya of the Russian Academy of Sciences, Mathematics},
  67(1):145--159, 2003.
\newblock Earlier version in arxiv:quant-ph/0204025.

\bibitem{regev:lattice04}
Oded Regev.
\newblock New lattice based cryptographic constructions.
\newblock {\em J. ACM}, 51(6):899--942, 2004.
\newblock Earlier version in STOC '03.

\bibitem{regev:qclattice}
Oded Regev.
\newblock Quantum computation and lattice problems.
\newblock {\em SIAM J. Comput.}, 33(3):738--760, 2004.
\newblock Earlier version in FOCS '02.

\bibitem{regev:lattice05j}
Oded Regev.
\newblock On lattices, learning with errors, random linear codes, and
  cryptography.
\newblock {\em J. ACM}, 56(6), 2009.
\newblock Earlier version in STOC '05.

\bibitem{reichardt:tight}
Ben~W. Reichardt.
\newblock Span programs and quantum query complexity: The general adversary
  bound is nearly tight for every {B}oolean function.
\newblock In {\em Proc. 50th FOCS}, pages 544--551. IEEE Comp. Soc. Press,
  2009.

\bibitem{reichardt:composition}
Ben~W. Reichardt.
\newblock Span programs and quantum query algorithms.
\newblock Technical report, ECCC Report TR10--110, 2010.

\bibitem{renner&wolf:new}
Renato Renner and Stefan Wolf.
\newblock New bounds in secret-key agreement: The gap between formation and
  secrecy extraction.
\newblock In {\em Proc. Advances in Cryptology --- Eurocrypt}, volume 2656 of
  {\em Lecture Notes in Comput. Sci.}, pages 562--577. Springer, 2003.

\bibitem{Riv}
Theodore~J. Rivlin.
\newblock {\em An Introduction to the Approximation of Functions}.
\newblock Blaisdell Publishing Company, 1969.

\bibitem{rivlin&cheney:approx}
Theodore~J. Rivlin and E.~W. Cheney.
\newblock A comparison of uniform approximations on an interval and a finite
  subset thereof.
\newblock {\em SIAM J. Numer. Anal.}, 3(2):311--320, 1966.

\bibitem{sen&venkatesh:qcellprobej}
Pranab Sen and S.~Venkatesh.
\newblock Lower bounds for predecessor searching in the cell probe model.
\newblock {\em J. Comput. System Sci.}, 74(3):364--385, 2008.

\bibitem{sherstov:inclexcl}
Alexander~A. Sherstov.
\newblock Approximate inclusion-exclusion for arbitrary symmetric functions.
\newblock In {\em Proc. 23rd IEEE Conf. Comput. Complex.}, pages 112--123,
  2008.

\bibitem{sherstov:halfspacej}
Alexander~A. Sherstov.
\newblock Halfspace matrices.
\newblock {\em Comput. Complexity}, 17(2):149--178, 2008.

\bibitem{sherstov:patternj}
Alexander~A. Sherstov.
\newblock The pattern matrix method.
\newblock {\em SIAM J. Comput.}, 2010.
\newblock To appear. Earlier version in STOC '08.

\bibitem{shor:factoring}
Peter~W. Shor.
\newblock Polynomial-time algorithms for prime factorization and discrete
  logarithms on a quantum computer.
\newblock {\em SIAM J. Comput.}, 26(5):1484--1509, 1997.
\newblock Earlier versions in FOCS '94 and arxiv:quant-ph/9508027.

\bibitem{simon:power}
Daniel~R. Simon.
\newblock On the power of quantum computation.
\newblock {\em SIAM J. Comput.}, 26(5):1474--1483, 1997.
\newblock Earlier version in FOCS '94.

\bibitem{trevisan:eccsurvey}
Luca Trevisan.
\newblock Some applications of coding theory in computational complexity.
\newblock {\em Quad. Mat.}, 13:347--424, 2004.

\bibitem{trw:matrixlearning}
Koji Tsuda, Gunnar R{\"a}tsch, and Manfred Warmuth.
\newblock Matrix exponentiated gradient updates for on-line learning and
  {B}regman projection.
\newblock {\em J. Mach. Learn. Res.}, 6:995--1018, 2005.

\bibitem{valiant:rigidity}
Leslie~G. Valiant.
\newblock Graph-theoretic arguments in low-level complexity.
\newblock In {\em Proc. 6th Conf. Math. Found. Comput. Sci. (MFCS)}, volume~53
  of {\em Lecture Notes in Comput. Sci.}, pages 162--176. Springer, 1977.

\bibitem{wehner&wolf:improvedldc}
Stephanie Wehner and Ronald~de Wolf.
\newblock Improved lower bounds for locally decodable codes and private
  information retrieval.
\newblock In {\em Proc. 32nd Intern. Colloq. Autom. Lang. Program. (ICALP)},
  volume 3580 of {\em Lecture Notes in Comput. Sci.}, pages 1424--1436, 2005.
\newblock Earlier version in arxiv:quant-ph/0403140.

\bibitem{weierstrass}
Karl Weierstrass.
\newblock \"{U}ber die analytische {D}arstellbarkeit sogenannter
  willk\"{u}rlicher {F}unktionen reeller {A}rgumente.
\newblock In {\em Sitzungsberichte der K\"{o}niglich Preussischen Akademie der
  Wissenschaften zu Berlin, II}, volume~3. 1885.

\bibitem{wolf:qccsurvey}
Ronald~de Wolf.
\newblock Quantum communication and complexity.
\newblock {\em Theoret. Comput. Sci.}, 287(1):337--353, 2002.

\bibitem{wolf:rigidity}
Ronald~de Wolf.
\newblock Lower bounds on matrix rigidity via a quantum argument.
\newblock In {\em Proc. 33rd Intern. Colloq. Autom. Lang. Program. (ICALP)},
  volume 4051 of {\em Lecture Notes in Comput. Sci.}, pages 62--71, 2006.
\newblock Earlier version in arxiv:quant-ph/0505188.

\bibitem{wolf:degreesymmf}
Ronald~de Wolf.
\newblock A note on quantum algorithms and the minimal degree of
  {$\varepsilon$}-error polynomials for symmetric functions.
\newblock {\em Quantum Inf. Comput.}, 8(10):943--950, 2008.
\newblock Earlier version in arxiv:quant-ph/0802.1816.

\bibitem{wolf:ratdegnote}
Ronald~de Wolf.
\newblock Rational degree equals quantum query complexity with postselection.
\newblock Unpublished note, Oct 2008.

\bibitem{woodruff:ldclower}
David~P. Woodruff.
\newblock New lower bounds for general locally decodable codes.
\newblock Technical report, ECCC Report TR07--006, 2007.

\bibitem{woodruff:3querylower}
David~P. Woodruff.
\newblock A quadratic lower bound for three-query linear locally decodable
  codes over any field.
\newblock In {\em Proc. 13th Intern. Workshop Approx. Randomization Comb.
  Optim. (APPROX)}, Lecture Notes in Comput. Sci., pages 766--779. Springer,
  2010.

\bibitem{yao:distributive}
Andrew Chi-Chih Yao.
\newblock Some complexity questions related to distributive computing.
\newblock In {\em Proc. 11th STOC}, pages 209--213. ACM Press, 1979.

\bibitem{yekhanin:3ldcj}
Sergey Yekhanin.
\newblock Towards 3-query locally decodable codes of subexponential length.
\newblock {\em J. ACM}, 55(1), 2008.
\newblock Earlier version in STOC '07.

\end{thebibliography}

\appendix

\section{The most general quantum model}\label{appgeneral}

The ingredients introduced in Section~\ref{secquantummodel} are all the quantum mechanics we need for the applications of this survey.
However, a more general formalism exists, which we will explain here with a view to future applications---who knows what these may need!

First we generalize pure states.
In the classical world we often have uncertainty about the state of a system,
which can be expressed by viewing the state as a random variable that has a certain
probability distribution over the set of basis states.  Similarly we can define a
\emph{mixed} quantum state as a probability distribution (or ``mixture'') over pure states.
While pure states are written as vectors, it is most convenient to write mixed states as
\emph{density matrices}. A pure state $\ket{\phi}$ corresponds to the rank-one density matrix $\ketbra{\phi}{\phi}$, 
which is the outer product of the vector $\ket{\phi}$ with itself.
A mixed state that is in (not necessarily orthogonal) pure states $\ket{\phi_1},\ldots,\ket{\phi_k}$ with probabilities $p_1,\ldots,p_k$,
respectively, corresponds to the density matrix $\sum_{i=1}^k p_i\ketbra{\phi_i}{\phi_i}$.
The set of density matrices is exactly the set of positive semidefinite (PSD) matrices of trace~1.
A mixed state is pure if, and only if, its density matrix has rank~1.

The most general quantum operation on density matrices is a \emph{completely-positive, trace-preserving (CPTP) map}.
By definition, this is a linear map that sends density matrices to density matrices, even when tensored with the identity operator on another space.
Alternatively, a map ${\cal S}:\rho\mapsto S(\rho)$ from $d\times d$-matrices to $d'\times d'$-matrices is a CPTP map 
if, and only if, it has a \emph{Kraus-representation}:
there are $d'\times d$ matrices $M_1,\ldots,M_k$, satisfying $\sum_{i=1}^k M_i^*M_i=I$, such that ${\cal S}(\rho)=\sum_{i=1}^k M_i\rho M_i^*$ 
for every $d\times d$ density matrix $\rho$.
A unitary map $U$ corresponds to $k=1$ and $M_1=U$, so unitaries act on mixed states by conjugation: $\rho\mapsto U\rho U^*$.
Note that a CPTP map can change the dimension of the state.  For instance, the map that traces out (``throws away'') the second register 
of a 2-register state is a CPTP map. Formally, this map is defined on tensor-product states as $\Tr_2(A\otimes B)=A$, 
and extended to all 2-register states by linearity.

CPTP maps also include measurements as a special case.  For instance, a projective measurement with projectors $P_1,\ldots,P_k$ 
that writes the classical outcome in a second register, corresponds to a CPTP map $\cal S$ with Kraus operators $M_i=P_i\otimes\ket{i}$.
We now have 
$$
{\cal S}(\rho)=\sum_{i=1}^k M_i\rho M_i^*=\sum_{i=1}^k  \frac{P_i\rho P_i^*}{\Tr(P_i\rho P_i^*)}\otimes \Tr(P_i\rho P_i^*)\ketbra{i}{i}\,.
$$
This writes the classical value $i$ in the second register with probability $\Tr(P_i\rho P_i^*)$, and ``collapses'' $\rho$ 
in the first register to its normalized projection in the subspace corresponding to $P_i$.

While this framework of mixed states and CPTP maps looks more
general than the earlier framework of pure states and unitaries,
philosophically speaking it is not: every CPTP map can be
implemented unitarily on a larger space.  What this means is that
for every CPTP map $\cal S$, there exists a state $\rho_0$ on an
auxiliary space, and a unitary on the joint space, such that for
every $\rho$, the state ${\cal S}(\rho)$ equals what one gets by
tracing out the auxiliary register from the state
$U(\rho\otimes\rho_0)U^*$.  We refer to the book of Nielsen and
Chuang~\cite[Section~8.2]{nielsen&chuang:qc} for more details.

\end{document}